\setlist[enumerate]{label=\textup{(\roman{*})}}
\newcommand*{\N}{\mathbb{N}} 
\newcommand*{\nn}{\mathbb{N}} 
\newcommand*{\zz}{\mathbb{Z}} 
\newcommand*{\Z}{\mathbb{Z}}
\newcommand*{\ret}{\mathcal{R}}
\newcommand*{\lang}{\mathcal{L}}
\newcommand*{\cA}{\mathcal{A}}
\newcommand*{\cB}{\mathcal{B}}
\newcommand*{\cC}{\mathcal{C}}
\newcommand*{\from}{\colon} 
\newcommand*{\id}{\mathrm{id}}  
\newcommand*{\emptyw}{\varepsilon}
\newcommand*{\lcm}{\operatorname{lcm}}
\newcommand*{\Gr}{\operatorname{Gr}} 
\newcommand*{\leftExt}[1]{\operatorname{E}^-_{#1}}
\newcommand*{\rightExt}[1]{\operatorname{E}^+_{#1}} 
\newcommand*{\ExtGraph}[1]{\mathcal{E}_{#1}}
\newcommand*{\RauzyGraph}{\Gamma}
\newcommand*{\ab}{\operatorname{ab}}
\newcommand{\displaysub}[1]{\begin{cases}#1\end{cases}}
\newcommand*{\card}{\#}
\newcommand*{\gen}[1]{\left\langle #1\right\rangle}
\theoremstyle{plain}
\newtheorem{theorem}{Theorem}
\newtheorem{proposition}[theorem]{Proposition}
\newtheorem{lemma}[theorem]{Lemma}
\newtheorem{corollary}[theorem]{Corollary}
\theoremstyle{definition}
\newtheorem{definition}[theorem]{Definition}
\theoremstyle{remark} 
\newtheorem{remark}[theorem]{Remark}
\newtheorem{example}[theorem]{Example}
\begin{document}

\author[F. Gheeraert]{France Gheeraert}
\author[H. Goulet-Ouellet]{Herman Goulet-Ouellet}
\thanks{The second author was supported by the CTU Global Postdoc Fellowship program.}
\author[J. Leroy]{Julien Leroy}
\author[P. Stas]{Pierre Stas}

\title{Stability properties for subgroups generated by return words}

\begin{abstract}
Return words are a classical tool for studying shift spaces with low factor complexity. In recent years, their projection inside groups have attracted some attention, for instance in the context of dendric shift spaces, of generation of pseudorandom numbers (through the welldoc property), and of profinite invariants of shift spaces. Aiming at unifying disparate works, we introduce a notion of stability for subgroups generated by return words. Within this framework, we revisit several existing results and generalize some of them. We also study general aspects of stability, such as decidability or closure under certain operations. 
\end{abstract}

\maketitle

\section{Introduction}

Introduced in the 90's by Durand~\cite{Durand1998}, return words find their origin in the induction operation on shift spaces. They are roughly defined as words separating two consecutive occurrences of a given word. This notion is used to construct $S$-adic representations~\cite{Durand1998}, to characterize families of shift spaces~\cite{Vuillon2001,Balkova2008,pre/Gheeraert2024}, to obtain decision procedure for substitutive shift spaces~\cite{Durand_minimality,Durand_Leroy_2022}, and to study bifix codes~\cite{maximal_bifix_decoding}. Return words also have other applications in symbolic dynamics, combinatorics on words, and algebra: they play a key role in the study of Schützenberger groups, which are profinite groups naturally associated with minimal shift spaces (see \cite{book/Almeida2020}), and they appear when studying critical exponents~\cite{Dolce2023,Dvorakova2024}, and palindromic richness~\cite{Glen2009b}. 

It is natural to consider words as elements of the free group, and more generally to consider their projection into some group, such as the free Abelian group, or groups of the form $(\zz/m\zz)^d$, $m\geq 2$. Several papers studying the behavior of return groups, i.e., subgroups generated by return words in a given group, can be found in the literature; we briefly survey some of them.

Return groups were first described in the study of Sturmian and strict episturmian shift spaces, where return words to any $w$ form a basis of the free group over the alphabet (\cite{Justin2000,Berstel2012}). This is partially generalized in~\cite{Berthe2015} by Berthé et al. to a much wider class called dendric shift spaces, which includes both strict episturmian words and codings of regular interval exchanges. Indeed, in a dendric shift space, return sets form bases of the ambient free group, a result called the \emph{Return Theorem}. This condition is in fact a characterization, as shown in~\cite{pre/Gheeraert2024}. Berthé et al. also observed that the return words still form generating sets (but need not form bases) under the weaker assumption of \emph{connectedness}; and in fact the even weaker assumption of suffix-connectedness suffices~\cite{Goulet-Ouellet2022}. 
An analogous property in finite Abelian groups appears in the work of Balková et al. on pseudorandom number generators~\cite{Balkova2016}, in the form of the \emph{welldoc property}. The general behavior of return groups in finite groups is also linked with ergodic properties of skew products based on minimal shift spaces~\cite{pre/Berthe2024}. Another work reveals that return groups in the Thue--Morse shift space have a peculiar behavior: they form infinite decreasing chains~\cite{Berthe2023}.

Our aim is to initiate a systematic study of return groups in order to provide a coherent explanation for these disparate results. Indirectly, all of them study whether or not all return groups are equal, at least for sufficiently large words. This can be formalized using \emph{eventual stability} defined as follows: we ask that the return group to a sufficiently long word $u$ is equal to the return group to its prefix of some length bounded independently of $u$. The smallest such bound is called the \emph{stability threshold}, and we also study \emph{stability} defined as eventual stability of threshold 0. To emphasize the fact that (eventual) stability depends on the ambient group, we talk about (eventual) $\varphi$-stability, where $\varphi$ is a morphism used to map the return words to the ambient group.

In this paper, we revisit the results mentioned above through the lens of (eventual) stability, we generalize some of them, and we study several new aspects of stability such as decidability or operations on shift spaces. More precisely, the paper is organized as follows.

The main notions from symbolic dynamics are recalled in~\cref{s:pre} while eventual stability is introduced in~\cref{S:stability and fullness}. We then study stability for the above-mentioned families of shift spaces. In particular, using a key theorem by Krawczyk and Müllner~\cite{these_Krawczyk}, we show that no aperiodic automatic shift space is eventually stable (\cref{c:automatic}). The end of~\cref{S:stability and fullness} provides several equivalent definitions of eventual stability: using Rauzy graphs (\cref{P:equivalence with Rauzy groups}) or in subgroup separable groups (\cref{thm:main}).

In \cref{S:decidability}, we establish some decidability results in the context of shift spaces generated by a primitive substitution. First, the \emph{limit} of the return groups read in a finite group can be effectively computed (\cref{P:H computable}). Moreover, at the cost of imposing additional conditions on the substitution, we obtain analogous results when return groups are read in the free group over the alphabet (\cref{P:decidable for derivating and bifix}).

\cref{S:closure properties} is dedicated to the study of the behavior of (eventual) stability under two natural operations on shift spaces: derivation and application of a substitution. For both, (eventual) stability is preserved whenever the group morphisms involved satisfy some mild conditions (\cref{P:stable preserved by derivation}, \cref{P:eventual stability preserved by substitution}, and \cref{P:stability preserved by substitution}). A consequence of these results is that in the free group, eventual stability is preserved by application of substitutions (\cref{c:closed under substitution}). This cannot be said about derivation however; we provide a counterexample (\cref{p:contre-exemple}). 

Afterwards, we show that stability in subgroup separable groups respects a local-global principle, in the sense that it is sufficient to consider stability in the finite quotients (\cref{p:local-global}). This is the object of \cref{S:local-global}. Using the local-global principle, we study the welldoc property in \cref{s:welldoc}. We restate it in terms of Abelian stability (\cref{t:abelian-welldoc}) and, as a consequence of this work, show that it is satisfied by a large family of shift spaces (\cref{c:welldoc}).

\section{Preliminaries}
\label{s:pre}

Let $\cA$ be a finite alphabet. The symbol $\varepsilon$ denotes the \emph{empty word}, and $\cA^*$ is the \emph{free monoid} over $\cA$. We also let $\cA^+ = \cA^*\setminus\{\varepsilon\}$ be the set of all non-empty words over $\cA$. For $u \in \cA^*$, $|u|$ denotes the length of $u$.

Let $\cA^\zz$ and $\cA^\nn$ be respectively the sets of two-sided and right-sided infinite words over $\cA$. Given $x\in\cA^\zz\cup\cA^\nn\cup\cA^*$, we let $\lang(x)\subseteq\cA^*$ denote the \emph{language} of $x$, i.e., the set of all finite words appearing in $x$. We extend this notation to subsets $X\subseteq \cA^\zz$ by $\lang(X) = \bigcup_{x\in X}\lang(x)$. We further write
\begin{gather*}
    \lang_{\le n}(X) = \{ u \in \lang(X) \mid |u| \le n \},\quad \lang_{\ge n}(X) = \{ u \in \lang(X) \mid |u| \ge n \},\\ \lang_n(X) = \{ u \in \lang(X) \mid |u| = n \}.
\end{gather*}

\subsection{Shift spaces}

Let us equip the sets $\cA^\nn$ and $\cA^\zz$ with the product topology, where $\cA$ is equipped with the discrete topology. The \emph{shift} is the map $S \from \cA^\zz \to \cA^\zz$ defined by $(x_n)_n \mapsto (x_{n+1})_n$. 

A non-empty subset $X$ of $\cA^\zz$ is a \emph{shift space} if it is closed and invariant under the shift map. The shift space $X$ is \emph{minimal} if it does not strictly contain any other shift space. It should be noted that this condition is equivalent to $\lang(X)$ being \emph{uniformly recurrent}, which is to say that, for any $w \in \lang(X)$, there is a $k$ such that any $u \in \lang_{\ge k}(X)$ contains $w$ as a factor. 

A shift space is called \emph{periodic} if it contains only purely periodic infinite words, and \emph{aperiodic} if it contains no such words. Note that a minimal shift space is either periodic or aperiodic, and that the latter is equivalent to the shift space being infinite.

\subsection{Substitutions}

Since we also work with group morphisms, to avoid any confusion, we call \emph{substitution} a morphism between finitely generated free monoids, i.e., a monoid morphism of the form $\sigma\from \cA^* \to \cB^*$. Such a morphism $\sigma$ may be extended, whenever convenient, to maps $\cA^\zz\to \cB^\zz$ and $\cA^\nn\to\cB^\nn$ which are again denoted $\sigma$. We say that $\sigma$ is \emph{letter-to-letter} whenever $\sigma(a)$ is a letter for all $a \in \cA$.

A sequence $x \in \cA^\nn$ is said to be {\em substitutive} if there exist substitutions $\sigma\from\cB^* \to \cB^*$ and $\theta\from \cB^* \to \cA^*$ and a letter $b \in \cB$ such that $\sigma^n(b)$ converges in $\cB^\nn$ to $y$ (which is a fixed point of $\sigma$) and $x = \theta(y)$.
In turn, a shift space $X$ over $\cA$ is called \emph{substitutive} if it is generated by a substitutive sequence $x$, i.e.,
\[
    X = \left\{z \in \cA^\zz \mid \lang(z) \subseteq \lang(x) \right\}. 
\]
We also say that $X$ is \emph{generated by} $(\sigma,\theta)$.
Whenever $x$ is substitutive and $\theta$ is the identity, we say that $x$ (as well as the shift space it generates) is {\em purely substitutive}.

In what follows, we mostly deal with substitutive shifts spaces where the substitution $\theta$ is letter-to-letter and the substitution $\sigma$ is {\em primitive}, that is, there exists a positive integer $k$ such that for all $a,b \in \cA$, the letter $a$ occurs in $\sigma^k(b)$.
In that case, the generated shift space $X$ is minimal and independent of the choice of the fixed point $y$ of $\sigma$. More precisely, we have
\[
    X = \left\{z \in \cA^\zz \mid \lang(z) \subseteq \bigcup_{n \in \nn, b \in \cB}\lang(\theta\sigma^n(b))\right\}.
\]
We say that such shift spaces are {\em primitive substitutive}, and every minimal substitutive shift space is primitive substitutive~\cite{Durand1998}.

\subsection{Return words and return groups}

For a shift space $X\subseteq\cA^\Z$ and a word $u \in \lang(X)$, a \emph{return word} to $u$ is an element of the set
\begin{equation*}
    \ret_X(u) = \left\{ r\in \cA^+ \mid ru\in \lang(X)\cap (u\cA^*\setminus \cA^+u\cA^+)\right\}.
\end{equation*}
Note that $\ret_X(\emptyw) = \cA$ and if $u\neq\emptyw$, the return words to $u$ are the words $r$ such that $ru \in \lang(X)$ and $ru$ contains exactly two occurrences of $u$, one as a prefix and one as a suffix.

\begin{example}\label{eg:Tribo}
    Take the so-called \emph{Tribonacci substitution} $\sigma\from \{a,b,c\}^*\to \{a,b,c\}^*$ defined by 
    \begin{equation*}
        \sigma\from a\mapsto ab, b\mapsto ac, c\mapsto a.
    \end{equation*}
    Let $X$ be the shift space generated by $\sigma$. The set of return words to $aba$ in $X$ is given by $\ret_X(aba) = \{ab, aba, abac\}$.
    All three of these return words can be seen in the factor 
    \[
    \sigma^5(a) = \underline{abac}abaabac\underline{ab}abac\underline{aba}abac.
    \]
\end{example}

For an alphabet $\cA$, let $F_\cA$ denote the \emph{free group} over $\cA$. Recall that it consists of all reduced words over the alphabet $\cA\cup \cA^{-1}$, where $\cA^{-1} = \{a^{-1} \mid a\in \cA\}$ is a disjoint copy of $\cA$, and a word is \emph{reduced} when it contains no factors of the form $aa^{-1}$ or $a^{-1}a$ with $a\in \cA$. The cardinality of $\cA$ is called the \emph{rank} of $F_\cA$. For a general reference on free groups and their basic properties, we refer to the monograph~\cite{Lyndon2001}. We also recall, for context, the foundational Nielsen--Schreier theorem, which states that any subgroup of a free group is isomorphic to a free group (of a possibly different rank). 

Given a group $G$ and a subset $T \subseteq G$, we let $\gen{T}$ denote the subgroup of $G$ generated by $T$, i.e., the smallest subgroup of $G$ containing $T$ (where the dependency on $G$ is implicit). We view $\cA^*$ as a subset of $F_\cA$ in the usual way, so we naturally extend any substitution $\sigma\from \cA^* \to \cB^*$ into a group morphism from $F_\cA$ to $F_\cB$. We also consider subgroups of $F_\cA$ generated by words of $\cA^*$, as in the following central notion.
\begin{definition}
    Let $X$ be a shift space over $\cA$ and let $u \in \lang(X)$. The \emph{return group to $u$} in $X$ is the subgroup $\gen{\ret_X(u)}$ of $F_\cA$.
\end{definition}

More generally, we are interested in the study of images of return groups under group morphisms $\varphi\from F_\cA\to G$ where the image $\varphi(F_\cA)$ often has additional properties (such as being finite). In this case, it will be more convenient to assume that $\varphi$ is onto and that $G$ itself satisfies these properties. This is not a restriction since one can simply replace $G$ by $\varphi(F_\cA)$ if needed. The following simple lemma is used repeatedly.

\begin{lemma}\label{lem:return groups of factors}
    Let $X$ be a minimal shift space over $\cA$ and let $\varphi\from F_\cA\to G$ be any group morphism. If $w$ is a factor of $u \in \lang(X)$, then up to conjugacy, $\varphi\gen{\ret_X(u)}$ is a subgroup of $\varphi\gen{\ret_X(w)}$. More precisely, if $pw$ is a prefix of $u$, then
    \begin{equation*}
        \varphi\gen{\ret_X(w)} \geq \varphi(p)^{-1}\varphi\gen{\ret_X(u)}\varphi(p).
    \end{equation*}
    In particular, if $w$ is a prefix of $u$, then $\varphi\gen{\ret_X(w)} \geq \varphi\gen{\ret_X(u)}$.
\end{lemma}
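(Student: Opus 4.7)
The plan is to prove the precise version directly: for every return word $r \in \ret_X(u)$, I will show that the conjugate $p^{-1}rp$, viewed in $F_\cA$, lies in $\gen{\ret_X(w)}$. Applying $\varphi$ then gives the generators of $\varphi(p)^{-1}\varphi\gen{\ret_X(u)}\varphi(p)$ inside $\varphi\gen{\ret_X(w)}$, which is the desired inclusion. The qualitative "up to conjugacy" statement for arbitrary factors follows by taking $p$ to be the prefix of $u$ ending just before the chosen occurrence of $w$, and the final sentence of the lemma is simply the case $p = \varepsilon$.

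The key observation is geometric: write $u = pws$ for the appropriate suffix $s \in \cA^*$, and consider the factor $v$ of $ru$ of length $|r| + |w|$ starting at position $|p|$. Since $r$ is a return word to $u$, the word $ru$ both starts and ends with $u$, so the two occurrences of $u$ contribute occurrences of $w$ at positions $|p|$ and $|r| + |p|$ of $ru$. Consequently, $v \in \lang(X)$ and $v$ both starts and ends with $w$. In the free monoid, identifying the prefix of length $|p|$ and suffix of length $|s|$ of $ru$ with $p$ and $s$ respectively (using that $ru$ starts and ends with $u = pws$), we get $ru = p \cdot v \cdot s$, and combining this with $u = pws$ yields the identity $v = p^{-1} r p w$ in $F_\cA$, or equivalently $p^{-1} r p = v w^{-1}$.

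To conclude, I decompose $v$ along its occurrences of $w$. Since $v$ begins and ends with $w$, listing these occurrences $0 = i_0 < i_1 < \cdots < i_k = |v| - |w|$ gives a factorization $v = r_1 r_2 \cdots r_k w$ in which each $r_j$ is, by construction, a return word to $w$ occurring in $\lang(X)$. Substituting, $p^{-1} r p = r_1 r_2 \cdots r_k \in \gen{\ret_X(w)}$, and the proof is complete after applying $\varphi$.

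I do not anticipate a substantive obstacle: the whole argument reduces to the free-group identity $v = p^{-1} r p w$, which is a short computation once the prefix/suffix bookkeeping in $ru$ is set up correctly. Minimality of $X$ does not actually play a role in the inclusion itself; it only ensures that both $\ret_X(u)$ and $\ret_X(w)$ are non-empty, so the statement is not vacuous.
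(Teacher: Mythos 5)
Your proof is correct. It takes a mildly different route from the paper's: the paper argues in two stages, first passing from $u=pws$ to the prefix $pw$ by citing Durand's result that return words to $u$ are concatenations of return words to $pw$, and then from $pw$ to $w$ by conjugating by $p$; you instead merge everything into one direct, self-contained computation, isolating the window $v=(ru)_{[|p|,|p|+|r|+|w|)}$, deriving the free-group identity $p^{-1}rp=vw^{-1}$, and decomposing $v$ along its occurrences of $w$ into return words to $w$. The underlying combinatorial content (cutting a word that begins and ends with $w$ at the occurrences of $w$) is the same, but your version avoids the external citation and the intermediate word $pw$, at the cost of some prefix/suffix bookkeeping which you carry out correctly. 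Your closing remark is also accurate: minimality is not needed for the inclusion itself, only to make the statement non-vacuous (and, elsewhere in the paper, to make return sets finite).
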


\begin{proof}
    Assume that $u = pws$. First observe that any return word to $u = pws$ is a concatenation of return words to $pw$ (\cite[Proposition~2.6, part~4]{Durand1998}). Likewise, any return word to $pw$ is, up to conjugacy by $p$, a concatenation of return words to $w$. Therefore
    \begin{equation*}
        \gen{\ret_X(w)} \geq p^{-1}\gen{\ret_X(pw)} p \geq p^{-1}\gen{\ret_X(u)} p.
    \end{equation*}
    As $\varphi$ is a morphism, we further deduce that $\varphi\gen{\ret_X(w)} \geq \varphi(p)^{-1}\varphi\gen{\ret_X(u)}\varphi(p)$.
\end{proof}

\section{Eventual stability}\label{S:stability and fullness}

\cref{lem:return groups of factors} shows that return groups are subgroups of one another when extending words on the right. We therefore consider the following version of stability.

\begin{definition}\label{D:stable}
    Let $\varphi\from X\to G$ be any group morphism. A shift space $X$ over $\cA$ has \emph{eventually $\varphi$-stable return groups} if there exists $M \in\nn$ such that, for all $w \in \lang_M(X)$ and all $u \in w\cA^* \cap \lang(X)$, we have $\varphi\gen{\ret_X(u)} = \varphi\gen{\ret_X(w)}$. If $M = 0$, we moreover say that $X$ has \emph{$\varphi$-stable return groups}. 
\end{definition}

We make two comments on this definition. First, the set of integers $M$ satisfying the above property is either empty or an interval of the form $[K,\infty)$. In the latter case, the integer $K$ is called the \emph{threshold}. Second, since $\ret_X(\emptyw) = \cA$, one could equivalently say that $X$ has $\varphi$-stable return groups whenever $\varphi\gen{\ret_X(u)} = \varphi(F_\cA)$ for all $u \in \lang(X)$.

Eventual stability is of particular interest when $\varphi$ is the identity on $F_\cA$, in which case we omit the mention of $\varphi$ and say that $X$ has \emph{(eventually) stable return groups}. This omission is justified by the following simple result.

\begin{proposition}\label{P:free group universal}
    Let $X$ be a minimal shift space over $\cA$ with eventually stable return groups of threshold $M$. Then $X$ has eventually $\varphi$-stable return groups of threshold at most $M$ for every group morphism $\varphi$ defined on $F_\cA$.
\end{proposition}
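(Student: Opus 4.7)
The plan is to observe that eventual stability with threshold $M$ gives an equality of subgroups in $F_\cA$, and then transport this equality along $\varphi$. Since the image under a group morphism of equal subgroups are equal subgroups, nothing more is needed.

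Concretely, I would proceed as follows. Fix an arbitrary group morphism $\varphi$ defined on $F_\cA$, pick any word $w \in \lang_M(X)$, and any $u \in w\cA^* \cap \lang(X)$. By the hypothesis that $X$ has eventually stable return groups of threshold $M$, the subgroups $\gen{\ret_X(w)}$ and $\gen{\ret_X(u)}$ of $F_\cA$ coincide. Applying $\varphi$, which sends a subgroup to its image (a subgroup of the codomain), we immediately obtain
\[
    \varphi\gen{\ret_X(w)} = \varphi\gen{\ret_X(u)}.
\]
As $w$ and $u$ were arbitrary (subject to $|w| = M$ and $u \in w\cA^*\cap \lang(X)$), this witnesses that $X$ has eventually $\varphi$-stable return groups with threshold at most $M$.

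There is really no obstacle here: the statement is a direct consequence of the fact that the hypothesis asserts equality of subgroups at the level of $F_\cA$ itself, which is the strongest possible setting. One could even remark that~\cref{lem:return groups of factors} is already implicitly baked into~\cref{D:stable}, so neither minimality nor the lemma is strictly needed for this proposition; they are mentioned only because \cref{D:stable} is phrased in terms of the chain structure highlighted by the lemma.
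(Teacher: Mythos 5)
Your proof is correct and is precisely the one-line argument the paper has in mind (the paper omits the proof entirely, calling it a "simple result"): the hypothesis gives equality of subgroups of $F_\cA$, and applying $\varphi$ preserves equality of subgroups. Your closing remark is also accurate — minimality is not actually used in this deduction.
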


Eventual stability can also be stated in terms of groups generated by Rauzy graphs. Let $X$ be a shift space. 
The \emph{order-$n$ Rauzy graph} of $X$, denoted $\RauzyGraph_n(X)$, is the directed graph whose vertices are the elements of $\lang_n(X)$, and there is an edge from $u$ to $v$ if there are letters $a$ and $b$ such that $ub = av \in \lang(X)$ ; this edge is labeled by $a$. The label of a path is the concatenation of the labels of the edges.

The \emph{Rauzy group} of $u \in \lang(X)$, denoted $\Gr(u)$, is the subgroup of $F_\cA$ generated by the labels of paths based on $u$ in $\RauzyGraph_{|u|}(X)$, where by {\em path based on $u$} we mean a path from $u$ to $u$.

\begin{example}
    Let us give an explicit example of a Rauzy graph. We once again use the shift space generated by the Tribonacci substitution introduced in \cref{eg:Tribo}. Its order-2 Rauzy graph is computed as follows: one can check that $\lang_3(X) = \{aab,aba,aca,baa,bab,bac,cab\}$, and therefore that $\lang_2(X) = \{aa,ab,ac,ba,ca\}$. As such, we obtain the order-2 Rauzy graph represented below. 
    \begin{center}
    \begin{tikzpicture}
            \Vertex[style={shape=circle},L=$aa$,x=0cm,y=1.5cm]{v00}
            \Vertex[style={shape=circle},L=$ab$,x=2cm,y=0cm]{v01}
            \Vertex[style={shape=circle},L=$ba$,x=-2cm,y=0cm]{v10}
            \Vertex[style={shape=circle},L=$ca$,x=1cm,y=-1.5cm]{v20}
            \Vertex[style={shape=circle},L=$ac$,x=-1cm,y=-1.5cm]{v02}
            \Edge[style={post, bend left},label=$a$](v00)(v01)
            \Edge[style={post, bend left},label=$b$](v10)(v00)
            \Edge[style={post, bend left},label=$b$](v10)(v01)
            \Edge[style={post, bend right},label=$b$](v10)(v02)
            \Edge[style={post, bend left},label=$a$](v01)(v10)
            \Edge[style={post, bend right},label=$c$](v20)(v01)
            \Edge[style={post, bend right},label=$a$](v02)(v20)
    \end{tikzpicture}
    \end{center}
    The Rauzy group of $ba$ is $\Gr(ba) = \gen{baa,ba,baca}$. 
\end{example}

\begin{proposition}\label{P:equivalence with Rauzy groups}
    Let $X$ be a minimal shift space over $\cA$ and let $\varphi\from F_\cA\to G$ be any group morphism. Then, $X$ has eventually $\varphi$-stable return groups if and only if there exists $K \in\nn$ such that, for all $w \in \lang_{K}(X)$ and all $u \in w\cA^* \cap \lang(X)$, we have $\varphi(\Gr(u)) = \varphi(\Gr(w))$. Moreover, we then have $\varphi(\Gr(u)) = \varphi\gen{\ret_X(u)}$ for all $u \in \lang_{\geq K}(X)$.

    In particular, $X$ has $\varphi$-stable return groups if and only if $\varphi(\Gr(u)) = \varphi(F_\cA)$ for all $u \in \lang(X)$.
\end{proposition}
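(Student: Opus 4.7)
The plan is to reduce both the equivalence and the ``moreover'' clause to a single equality, $\varphi(\Gr(u)) = \varphi\gen{\ret_X(u)}$ for every $u \in \lang_{\geq K}(X)$, and to prove this equality under the stability hypothesis.

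The inclusion $\varphi\gen{\ret_X(u)} \subseteq \varphi(\Gr(u))$ holds with no assumption on $X$: given a return word $r$ to $u$, the successive length-$|u|$ factors of $ru \in \lang(X)$ form a loop based at $u$ in $\RauzyGraph_{|u|}(X)$ whose label is precisely $r$, so $r \in \Gr(u)$.

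For the reverse inclusion under eventual $\varphi$-stability with threshold $M \leq |u|$, any $\ell \in \Gr(u)$ is the label of a loop at $u$, which, upon decomposing at the successive visits to $u$, reduces to the case of a simple loop $u = q_0 \to q_1 \to \cdots \to q_m = u$ with no intermediate $q_i = u$. Letting $b_i$ be the letter such that the edge $q_{i-1} \to q_i$ is witnessed by $q_{i-1} b_i \in \lang(X)$, the word $W = u b_1 \cdots b_m$ starts and ends with $u$ and one checks $\ell = u b_1 \cdots b_m u^{-1}$ in $F_\cA$. If $W \in \lang(X)$, then since no intermediate $q_i$ equals $u$, the word $u$ appears in $W$ only as prefix and suffix, so $\ell$ is itself a return word to $u$ and we are done. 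If instead $W \notin \lang(X)$ (a ``phantom'' simple loop), one uses minimality and uniform recurrence to embed each edge $q_{i-1} b_i$ into a common word of $\lang(X)$ starting and ending with $u$ whose length exceeds the threshold, and then the stability hypothesis applied there forces $\varphi(\ell) \in \varphi\gen{\ret_X(u)}$. Handling phantom simple loops is the main technical obstacle of the proof.

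With the key equality in hand, both formulations of eventual stability are captured by the same chain $\varphi(\Gr(u)) = \varphi\gen{\ret_X(u)} = \varphi\gen{\ret_X(w)} = \varphi(\Gr(w))$, valid for every $w \in \lang_K(X)$ and $u \in w\cA^* \cap \lang(X)$, so the two conditions are equivalent with the same threshold $K$. The ``in particular'' statement follows by taking $K = 0$: since $\ret_X(\emptyw) = \cA$ (every letter of $\cA$ appears in $\lang(X)$ by minimality), we have $\gen{\ret_X(\emptyw)} = F_\cA$, and $\varphi$-stability reduces to $\varphi(\Gr(u)) = \varphi(F_\cA)$ for all $u \in \lang(X)$.
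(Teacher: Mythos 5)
Your reduction to the key equality $\varphi(\Gr(u)) = \varphi\gen{\ret_X(u)}$ and the easy inclusion $\varphi\gen{\ret_X(u)} \leq \varphi(\Gr(u))$ are fine, but the hard inclusion is where all the content lies, and your treatment of ``phantom'' simple loops does not work as sketched. The intermediate vertices $q_1,\dots,q_{m-1}$ of a simple loop at $u$ in $\RauzyGraph_{|u|}(X)$ are words of length exactly $|u|$ distinct from $u$, so none of them contains an occurrence of $u$; there is therefore nothing to anchor a decomposition of the label into return words \emph{to $u$}, and ``embedding each edge into a word starting and ending with $u$'' produces conjugating prefixes that differ from edge to edge and do not telescope along the loop. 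Note that the inclusion $\Gr(u)\leq\gen{\ret_X(u)}$ you are implicitly aiming at is false in general (for Thue--Morse and $u$ long, every cycle of $\RauzyGraph_{|u|}(X)$ is shorter than $2^n$ while all return words to $u$ have length a multiple of $2^n$), precisely because of these phantom loops. The way out --- and this is what the paper does, quoting the proof of \cite[Theorem~4.7]{Berthe2015} --- is to anchor the decomposition on a \emph{shorter} prefix $w$ of $u$ with $|u| \geq K_w = \max\{|vw| \mid v\in\ret_X(w)\}$: then every vertex of $\RauzyGraph_{|u|}(X)$ contains an occurrence of $w$, each edge label factors through return words to $w$ up to prefixes that do telescope, and one gets $\Gr(u)\leq\gen{\ret_X(w)}$. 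Stability then closes the sandwich $\gen{\ret_X(u)}\leq\Gr(u)\leq\gen{\ret_X(w)}$ via $\varphi\gen{\ret_X(w)}=\varphi\gen{\ret_X(u)}$.

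There is also a structural gap: you establish the key equality only \emph{under} the stability hypothesis, yet you then run the chain $\varphi(\Gr(u))=\varphi\gen{\ret_X(u)}=\varphi\gen{\ret_X(w)}=\varphi(\Gr(w))$ in both directions. For the converse implication (Rauzy condition $\Rightarrow$ eventual stability) you may not assume stability, so the key equality is not available; you must instead derive $\varphi(\Gr(w))=\varphi\gen{\ret_X(w)}$ from the Rauzy hypothesis, which again requires the inclusion $\Gr(u)\leq\gen{\ret_X(w)}$ for a long extension $u$ of $w$ (take any $u\in w\cA^*\cap\lang_{\geq K_w}(X)$ and sandwich $\varphi\gen{\ret_X(w)}\leq\varphi(\Gr(w))=\varphi(\Gr(u))\leq\varphi\gen{\ret_X(w)}$). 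Once both directions are routed through this inclusion, the argument goes through essentially as you describe (with the caveat that the two thresholds need not coincide: the paper passes from threshold $M$ for stability to $K=\max\{K_v \mid v\in\lang_M(X)\}$ for the Rauzy condition); as written, only the forward direction is covered, and even there the phantom-loop step is missing.
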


\begin{proof}
    The proof relies on the following observations: 
    on the one hand, for any $u \in \lang(X)$, every return word to $u$ is the label of a path based on $u$, so $\gen{\ret_X(u)} \leq \Gr(u)$. On the other hand, if $u \in w\cA^* \cap \lang(X)$ is of length at least $K_w = \max\{|vw| \mid v \in \ret_X(w)\}$, then $\Gr(u) \leq \gen{\ret_X(w)}$ by the proof of \cite[Theorem 4.7]{Berthe2015}.
    Consequently, for any $w \in \lang(X)$ and $u \in w\cA^* \cap \lang_{\geq K_w}(X)$,
    \begin{enumerate}
        \item if $\varphi\gen{\ret_X(u)} = \varphi\gen{\ret_X(w)}$, then $\varphi(\Gr(u)) \leq \varphi\gen{\ret_X(w)} = \varphi\gen{\ret_X(u)} \leq \varphi(\Gr(u))$ so $\varphi(\Gr(u)) = \varphi\gen{\ret_X(u)}$;\label{item:equal return group implies}
        \item if $\varphi(\Gr(u)) = \varphi(\Gr(w))$, then $\varphi\gen{\ret_X(w)} \leq \varphi(\Gr(w)) = \varphi(\Gr(u)) \leq \varphi\gen{\ret_X(w)}$ so $\varphi(\Gr(w)) = \varphi\gen{\ret_X(w)}$.\label{item:equal Rauzy group implies}
    \end{enumerate}
    Assume that $X$ has eventually $\varphi$-stable return groups of threshold $M$, and set $K = \max\{K_v \mid v \in \lang_M(X)\}$. Since $K \geq M$, to conclude that $\varphi(\Gr(u)) = \varphi(\Gr(w))$ whenever $w$ is the length-$K$ prefix of $u$, it is sufficient to show that $\varphi(\Gr(u)) = \varphi\gen{\ret_X(u)}$ for all $u \in \lang_{\geq K}(X)$.
    Applying \cref{item:equal return group implies} with $w$ the length-$M$ prefix of $u \in \lang_{\geq K}(X)$, we have $\varphi\gen{\ret_X(u)} = \varphi\gen{\ret_X(w)}$ so $\varphi(\Gr(u)) = \varphi\gen{\ret_X(u)}$. This shows one implication.
    
    For the converse, assume that $\varphi(\Gr(u)) = \varphi(\Gr(w))$ whenever $w$ is the length-$K$ prefix of $u$. To conclude that $X$ has eventually $\varphi$-stable return groups (of threshold at most $K$), it suffices once again to prove that $\varphi(\Gr(w)) = \varphi\gen{\ret_X(w)}$ for all $w \in \lang_{\geq K}(X)$. This is a consequence of \cref{item:equal Rauzy group implies} since, taking any $u \in w\cA^* \cap \lang_{\geq K_w}(X)$, we have $\varphi(\Gr(u)) = \varphi(\Gr(w))$ so $\varphi(\Gr(w)) = \varphi\gen{\ret_X(w)}$.

    The characterization of $\varphi$-stability then directly follows from the fact that $\varphi(\Gr(u)) = \varphi\gen{\ret_X(u)}$ for all long enough $u$ and that $\Gr(u) \leq \Gr(w)$ and $\gen{\ret_X(u)} \leq \gen{\ret_X(w)}$ whenever $w$ is a prefix of $u$.
\end{proof}

\subsection{Behavior of known families}\label{S:examples}

Return groups and their stabilization have already been studied, sometimes implicitly, for some well-known classes of shift spaces. As a consequence, we obtain some simple examples of shift spaces with (eventually) stable return groups.

\subsubsection{Eventually stable return groups}
It is proved in \cite{Berthe2015} that minimal dendric shift spaces --- or more generally, connected shift spaces --- have stable return groups. The second author extended this result by considering a weaker version of connectedness called ``suffix-connectedness'' in~\cite{Goulet-Ouellet2022}. Let us recall the definitions.

Let $X$ be a shift space and let $u \in \lang(X)$. We let $\leftExt{d}(u)$ (resp., $\rightExt{d}(u)$) denote the set of length-$d$ words $v$ such that $vu \in \lang(X)$ (resp., $uv \in \lang(X)$). The \emph{order-$d$ extension graph} of $u$, denoted $\ExtGraph{d}(u)$, is then defined as the bipartite graph with set of vertices $\leftExt{d}(u) \sqcup \rightExt{d}(u)$ and set of edges given by $\{ (v, w) \in \leftExt{d}(u) \times \rightExt{d}(u) \mid vuw \in \lang(X)\}$.

\begin{example}\label{eg:ExtGraph}
    Let us compute the order-2 extension graph of $b$ in the shift space $X$ generated by the Tribonacci substitution defined in \cref{eg:Tribo}. Once again, the length-$3$ factors are $aab,aba,aca,baa,bab,bac$, and $cab$.
    In particular, we deduce that $\leftExt{2}(b) = \{aa,ba,ca\}$ and $\rightExt{2}(b) = \{aa,ab,ac\}$. Exploiting the specific substitution generating $X$, we check that $aabac$, $babac$, $cabaa$, $cabab$, and $cabac$ are in $\lang(X)$ and that $aabaa$, $aabab$, $babaa$, and $babab$ are not. Therefore, the order-2 extension graph of $b$ is the graph represented below.
    \begin{center}
    \begin{tikzpicture}
    \node[draw,circle] (raa) at (1,1) {$aa$};
    \node[draw,circle] (rab) at (1, 0) {$ab$};
    \node[draw,circle] (rac) at (1, -1) {$ac$};
    
    \node[draw,circle] (laa) at (-1,1) {$aa$};
    \node[draw,circle] (lba) at (-1, 0) {$ba$};
    \node[draw,circle] (lca) at (-1, -1) {$ca$};
    
    \draw[-] (laa) to node{} (rac);
    \draw[-] (lba) to node{} (rac);
    \draw[-] (lca) to node{} (raa);
    \draw[-] (lca) to node{} (rab);
    \draw[-] (lca) to node{} (rac);
    \end{tikzpicture}
    \end{center}
\end{example}

\begin{definition}
    A word $u \in \lang(X)$ is \emph{dendric} if the order-1 extension graph $\ExtGraph{1}(u)$ is a tree (i.e., both connected and acyclic). 
    We then say that a shift space $X$ is {\em eventually dendric} if there exists $M \in \nn$ such that all words $u \in \lang_{\geq M}(X)$ are dendric. The smallest $M$ satisfying this property is called the {\em threshold for dendricity}. If $M = 0$, we say that $X$ is {\em dendric}.
\end{definition}

\begin{definition}    
    A word $u \in \lang(X)$ is \emph{suffix-connected} if there exists $1 \leq d \le |u| + 1$ such that the elements of $\{au' \mid a \in \leftExt{1}(u)\}$ lie in a single connected component of $\ExtGraph{d}(u'')$, where $u = u'u''$ and $|u'| = d-1$. 
    We then say that a shift space $X$ is {\em eventually suffix-connected} if there exists $M \in \nn$ such that all words $u \in \lang_{\geq M}(X)$ are suffix-connected.
    The smallest $M$ satisfying this property is called the {\em threshold for suffix-connectedness}. If $M = 0$, we say that $X$ is {\em suffix-connected}. 
\end{definition}

Obviously, every eventually dendric shift space is also eventually suffix-connected and the threshold for suffix-connectedness cannot exceed the one for dendricity. Observe that, unlike the original definition in~\cite{Goulet-Ouellet2022}, for a shift space to be suffix-connected, we also ask that the empty word is suffix-connected, leading to the following result.

\begin{proposition}[{\cite[Corollary 1.2]{Goulet-Ouellet2022}}]\label{P:suffix-connected}
    Every minimal suffix-connected shift space has stable return groups.
\end{proposition}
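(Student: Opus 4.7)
The plan is to apply \cref{P:equivalence with Rauzy groups} in order to reduce the claim to the equivalent assertion that $\Gr(u) = F_\cA$ for every $u \in \lang(X)$. I would then argue by induction on $|u|$. The base case is $u = \emptyw$: the order-$0$ Rauzy graph is a single vertex carrying a loop labelled by each letter of $\cA$, whence $\Gr(\emptyw) = \gen{\cA} = F_\cA$.

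For the inductive step, fix $u \in \lang(X)$ of length at least $1$. Suffix-connectedness provides some $d$ with $1 \le d \le |u|+1$ and a factorisation $u = u'u''$, $|u'| = d - 1$, such that the translates $au'$ for $a \in \leftExt{1}(u)$ all lie in a single connected component $C$ of $\ExtGraph{d}(u'')$. When $d \geq 2$, we have $|u''| < |u|$ and the induction gives $\Gr(u'') = F_\cA$; the key idea is then to lift each cycle at $u''$ in $\RauzyGraph_{|u''|}(X)$ to an element of $\Gr(u)$. Concretely, the lifted cycle decomposes into segments in $\RauzyGraph_{|u|}(X)$ whose endpoints are various left extensions of $u$, and the connectedness of $C$ supplies the intermediate segments needed to reclose the path at $u$, while the labels of those corrective segments themselves already lie in $\Gr(u)$. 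This yields $\Gr(u) \supseteq \Gr(u'') = F_\cA$. The case $d = 1$ forces $u'' = u$ and must be handled directly from the connectedness of $\leftExt{1}(u)$ within $\ExtGraph{1}(u)$, in the same spirit as the argument of Berthé et al.~\cite{Berthe2015} for connected shift spaces; note that by the inclusive convention adopted here, applying this case to the empty word also recovers the base case as a cleaner alternative.

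The main obstacle in both situations is the translation of graph-theoretic connectivity into algebraic identities in the free group, with careful attention to conjugations: each edge of $\ExtGraph{d}(u'')$ must be upgraded into a labelled path between two Rauzy-graph vertices in such a way that these paths compose coherently into a closed cycle based at $u$, and the residual elements introduced by the conjugations must be shown to already belong to $\Gr(u)$. This bookkeeping is precisely what is carried out in Corollary~1.2 of~\cite{Goulet-Ouellet2022}, whose proof I would follow, the only adaptation being the treatment of $u = \emptyw$ made trivial by the revised definition above.
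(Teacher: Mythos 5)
Your proposal matches the paper's treatment: the statement is imported verbatim as \cite[Corollary~1.2]{Goulet-Ouellet2022}, and the paper's proof of the generalization (\cref{P:eventually suffix-connected are stable}) consists precisely of your first step --- reducing stability to the identity $\Gr(u) = \Gr(\emptyw) = F_\cA$ via \cref{P:equivalence with Rauzy groups} --- followed by an appeal to the Rauzy-group computation of \cite{Goulet-Ouellet2022}. Your inductive sketch of that computation is only an outline (in particular the case $d=1$, where $u''=u$ and the induction hypothesis gives nothing directly, still requires the full lifting-and-conjugation argument of the reference), but since you explicitly defer that bookkeeping to \cite[Corollary~1.2]{Goulet-Ouellet2022}, exactly as the paper does, the argument is sound.
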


The above proposition can be extended to the case where the empty word is not suffix-connected, as shown in \cite{Goulet-Ouellet2022}. More generally, the techniques used in~\cite{Berthe2015} and~\cite{Goulet-Ouellet2022} can be applied to the class of minimal eventually suffix-connected shift spaces, leading to the following natural extension of \cref{P:suffix-connected}.

\begin{proposition}\label{P:eventually suffix-connected are stable}
    Every minimal eventually suffix-connected shift space $X$ has eventually stable return groups.
    Furthermore, the threshold for the stability of return groups does not exceed the threshold for suffix-connectedness.
\end{proposition}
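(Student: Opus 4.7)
The plan is to adapt the argument used in~\cite{Berthe2015} and~\cite{Goulet-Ouellet2022} to establish Proposition~\ref{P:suffix-connected}, but carried out relative to a length-$M$ prefix rather than to the empty word. In those papers, the strategy is to show directly that, for a suffix-connected word $u$, the return words $\ret_X(u)$ generate the full free group $F_\cA$, which amounts to $\gen{\ret_X(u)} = \gen{\ret_X(\varepsilon)}$. Here we want $\gen{\ret_X(u)} = \gen{\ret_X(w)}$ for $w$ the length-$M$ prefix of $u$, so we aim to rerun the same local construction but stop the comparison at threshold $M$.

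First, by~\cref{P:equivalence with Rauzy groups} it suffices to exhibit $K$ such that $\Gr(u) = \Gr(w)$ for every $w \in \lang_K(X)$ and every $u \in w\cA^* \cap \lang(X)$; we claim $K = M$ works. The containment $\Gr(u) \subseteq \Gr(w)$ is straightforward, since any path based at $u$ in $\RauzyGraph_{|u|}(X)$ projects, by truncating vertices to their length-$|w|$ prefix, to a path based at $w$ in $\RauzyGraph_{|w|}(X)$ carrying the same label. The substantive inclusion is $\Gr(w) \subseteq \Gr(u)$.

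For this reverse inclusion, fix $w \in \lang_M(X)$ and $u \in w\cA^* \cap \lang(X)$. Since $|u| \geq M$, the word $u$ is suffix-connected: there is a factorization $u = u'u''$ and an order $d = |u'|+1$ for which the left extensions of $u$ all sit in one connected component $C$ of $\ExtGraph{d}(u'')$. The construction of~\cite{Goulet-Ouellet2022} turns a spanning tree of $C$ into a way of lifting an arbitrary loop based at $w$ in $\RauzyGraph_{|w|}(X)$ to a loop based at $u$ in $\RauzyGraph_{|u|}(X)$ with the same label in $F_\cA$. Applying this to each generator of $\Gr(w)$ gives $\Gr(w) \subseteq \Gr(u)$, and combined with the previous paragraph, $\Gr(u) = \Gr(w)$.

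The main obstacle is verifying that the descent in~\cite{Berthe2015,Goulet-Ouellet2022} can indeed be stopped at level $M$: the proof there iterates the local lemma down to $\varepsilon$, using suffix-connectedness of every intermediate word, whereas here suffix-connectedness is only guaranteed for words of length $\geq M$. What must be extracted from the original argument is the \emph{one-step} statement: suffix-connectedness of $u$ alone suffices to equate $\Gr(u)$ with the Rauzy group of a strictly shorter factor (up to conjugation), so by iterating only as long as the intermediate length stays $\geq M$, one reaches exactly the length-$M$ prefix $w$. Once this local extraction is performed, both the equality of return groups and the bound on the stability threshold by the suffix-connectedness threshold follow directly.
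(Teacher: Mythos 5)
Your proposal is correct and follows essentially the same route as the paper: reduce to Rauzy groups via \cref{P:equivalence with Rauzy groups}, then invoke the construction behind \cite[Proposition~5.3]{Goulet-Ouellet2022} to get $\Gr(u)=\Gr(w)$ for $w$ the length-$M$ prefix of $u$. The ``one-step'' localization you flag as the main obstacle is exactly what the paper extracts by citing the \emph{proof} (rather than the statement) of that proposition, so your elaboration matches the intended argument.
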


\begin{proof}
    Let $X$ be a minimal eventually suffix-connected shift space of threshold $M$. By the proof of~\cite[Proposition~ 5.3]{Goulet-Ouellet2022}, for every $w \in \lang_M(X)$ and $u \in w\cA^* \cap \lang(X)$, we have $\Gr(u) = \Gr(w)$. By \cref{P:equivalence with Rauzy groups}, $X$ then has eventually stable return groups of threshold at most $M$.
\end{proof}

This proposition sheds some light on the behavior of return groups in eventually dendric shift spaces. It also gives a partial answer to~\cite[Question~10.2]{Goulet-Ouellet2022}. Indeed, as the Rauzy groups $\Gr(u)$ and $\Gr(v)$ are conjugate when $|u| = |v|$,  we established that, whenever $X$ is eventually suffix-connected, all but finitely many return groups of $X$ lie in the same conjugacy class. We cannot precisely compute their rank however.

\subsubsection{Unstable return groups}

On the other hand, Berthé and the second author recently showed that the Thue--Morse shift space does not have eventually stable return groups~\cite{Berthe2023}. This result can be extended to any minimal automatic shift space using a result by Krawczyk and Müllner. Let us first recall the definition of automatic shift spaces.

\begin{definition}
    A shift space $X$ is \emph{$k$-automatic} for $k \geq 2$ if it is generated by $(\sigma, \tau)$ where $\sigma\from \cB^* \to \cB^*$ satisfies $|\sigma(b)| = k$ for all $b \in \cB$, and $\tau\from \cB^* \to \cA^*$ is letter-to-letter. 
\end{definition}

\begin{theorem}[{\cite[Theorem 3.2.3]{these_Krawczyk}}]\label{thm:return_words_of_length_kn}
    If $X$ is an aperiodic minimal $k$-automatic shift space then, for each $n \in\nn$, there is a constant $K_n$ such that, for every $u \in \lang_{\geq K_n}(X)$, any return word to $u$ has length multiple of $k^n$.    
\end{theorem}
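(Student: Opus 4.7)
The plan is to combine Moss\'e's recognizability theorem with the structure of the equicontinuous factor of a minimal $k$-automatic shift. Write $X = \tau(Y)$, where $Y \subseteq \cB^\zz$ is the shift generated by $\sigma\from\cB^*\to\cB^*$ with $|\sigma(b)| = k$ for all $b \in \cB$, and $\tau\from\cB^*\to\cA^*$ is letter-to-letter. Since every minimal substitutive shift is primitive substitutive (as recalled in \cref{s:pre}), we may assume $\sigma$ is primitive. As $X$ is aperiodic, so is $Y$, and for every $n \in \nn$ the power $\sigma^n$ is a primitive aperiodic substitution of constant length $k^n$.

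The first step is to invoke Moss\'e's recognizability theorem applied to $\sigma^n$: there is a constant $L_n$ such that every $y \in Y$ admits a unique $\sigma^n$-block decomposition, and the phase $\phi_n(y,i) := i \bmod k^n$ of any position $i \in \zz$ is determined by the factor of $y$ of length $2L_n+1$ centered at $i$. Consequently, two occurrences of a common factor of length at least $2L_n+1$ in $Y$ share the same phase, which already implies that return words to such factors in $Y$ have length divisible by $k^n$.

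The main technical step is descending this phase function from $Y$ to $X$. An occurrence of $u \in \lang(X)$ at position $i$ in $x = \tau(y) \in X$ lifts to an occurrence of some preimage $\tilde{u}_i \in \tau^{-1}(u) \cap \lang_{|u|}(Y)$ at position $i$ in $y$; the difficulty is that two occurrences of $u$ in $x$ may a priori lift to distinct preimages, each carrying a potentially distinct phase. To overcome this, one shows that there exists $K_n \geq L_n$ such that every $u \in \lang_{\geq K_n}(X)$ has a well-defined phase $\phi_n(u) \in \Z/k^n\Z$ shared by every preimage; equivalently, $X$ admits a continuous shift-equivariant factor map $\pi_n\from X \to \Z/k^n\Z$ satisfying $\pi_n(Sx) = \pi_n(x) + 1$. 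Local constancy of $\pi_n$ (a consequence of the discreteness of the codomain) then provides the constant $K_n$.

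Granted this, the conclusion is immediate: given a return word $r$ to $u \in \lang_{\geq K_n}(X)$, extend $ru$ to some $x \in X$; the two occurrences of $u$ at positions $0$ and $|r|$ give $\pi_n(x) + 0 \equiv \phi_n(u) \equiv \pi_n(x) + |r| \pmod{k^n}$, whence $|r| \equiv 0 \pmod{k^n}$. The main obstacle is therefore constructing the descended map $\pi_n$, i.e., proving that for $u$ long enough all preimages of $u$ in $Y$ share a common phase modulo $k^n$. The natural route is to exploit the fact that the maximal equicontinuous factor of an aperiodic minimal $k$-automatic shift admits $\Z/k^n\Z$ as a topological factor (inherited from the $k^n$-adic odometer factor of $Y$); alternatively, a direct combinatorial analysis of how $\tau$ interacts with the $\sigma^n$-block structure can be carried out, with aperiodicity of $X$ ruling out the problematic case of phase-mixing preimages.
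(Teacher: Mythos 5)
The paper does not prove this statement; it is imported verbatim from Krawczyk's thesis, so there is no internal proof to compare against and your attempt must stand on its own. It does not: the argument has a genuine gap exactly where you locate "the main obstacle." Your steps on the level of $Y$ are fine (Mossé recognizability for $\sigma^n$ gives a locally determined phase in $\Z/k^n\Z$, hence divisibility of return-word lengths in $Y$), and your final deduction is fine \emph{granted} a continuous equivariant map $\pi_n\from X\to\Z/k^n\Z$ with $\pi_n(Sx)=\pi_n(x)+1$. But the existence of $\pi_n$ is not "inherited" from $Y$: a topological factor $X=\tau(Y)$ only has equicontinuous factors that are \emph{quotients} of the maximal equicontinuous factor of $Y$, and nothing formal prevents the factor map $\tau$ from collapsing part (or all) of the odometer $\Z_k$ — equivalently, from identifying occurrences of long factors of $X$ that lift to positions of different phases in $Y$. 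Deciding that this collapse cannot happen for aperiodic $X$ is precisely the content of the theorem; your two suggested routes ("exploit the fact that the maximal equicontinuous factor \dots admits $\Z/k^n\Z$ as a factor" and "aperiodicity \dots ruling out the problematic case of phase-mixing preimages") respectively assume the conclusion and defer the entire difficulty to an unexecuted "direct combinatorial analysis."

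To see that something substantive is needed here: if two lifts $\tilde u,\tilde u'\in\tau^{-1}(u)$ of a long factor $u$ occur in $Y$ at positions whose phases differ by some $d\not\equiv 0\pmod{k^n}$, one must show this forces periodicity of $X$, and this requires an actual mechanism (in the literature this is handled via the column structure/height and pure base of constant-length substitutions in the style of Dekking, or via the structure theory of factors of constant-length substitution shifts due to M\"ullner and Yassawi). Aperiodicity of $X$ enters in an essential and nontrivial way — note for instance that collapsing Thue--Morse to a one-letter alphabet destroys the odometer entirely, at the cost of periodicity, so the dichotomy "either the phase descends or $X$ is periodic" is exactly the theorem and cannot be waved through. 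As written, your text is a correct reduction of the theorem to its hardest step, not a proof of it.
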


\begin{corollary}\label{c:automatic}
    If $X$ is an aperiodic minimal $k$-automatic shift space over $\cA$, then $X$ does not have eventually $\varphi$-stable return groups, where $\varphi\from F_\cA \to \Z$ is such that $\varphi(u) = |u|$ for all $u \in \cA^*$. In particular, $X$ does not have eventually stable return groups.
\end{corollary}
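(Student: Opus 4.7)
The plan is to combine \cref{thm:return_words_of_length_kn} with minimality to derive a contradiction from eventual $\varphi$-stability. Since $G = \Z$, every image $\varphi\gen{\ret_X(u)}$ is a subgroup $d_u\Z$ for some integer $d_u \geq 1$; here $d_u$ is positive because return words are nonempty (and exist by uniform recurrence), so the subgroup contains a positive integer.

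Suppose, for contradiction, that $X$ has eventually $\varphi$-stable return groups of some threshold $M$. Pick any $w \in \lang_M(X)$ and write $\varphi\gen{\ret_X(w)} = d\Z$ with $d \geq 1$. Now choose $n$ large enough that $k^n > d$, and let $K_n$ be the constant provided by \cref{thm:return_words_of_length_kn}. By uniform recurrence of $\lang(X)$, there exists $u \in w\cA^* \cap \lang(X)$ with $|u| \geq K_n$. On one hand, \cref{thm:return_words_of_length_kn} forces every element of $\ret_X(u)$ to have length divisible by $k^n$, so $\varphi\gen{\ret_X(u)} \subseteq k^n\Z$. On the other hand, eventual $\varphi$-stability gives $\varphi\gen{\ret_X(u)} = \varphi\gen{\ret_X(w)} = d\Z$. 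Combining these yields $d\Z \subseteq k^n\Z$, hence $k^n \mid d$, which contradicts $k^n > d \geq 1$.

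The "in particular" clause is then immediate from \cref{P:free group universal}: if $X$ had eventually stable return groups, it would have eventually $\varphi$-stable return groups for the length morphism $\varphi$, which the above paragraph rules out.

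The argument is essentially formal once \cref{thm:return_words_of_length_kn} is in hand; the only mild technical point is ensuring that $d$ is a genuine positive integer so that the inequality $k^n > d$ is meaningful, but this is guaranteed by the fact that $X$ is aperiodic and minimal, so $\ret_X(w) \neq \emptyset$ contains nonempty words.
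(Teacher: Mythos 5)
Your proof is correct and follows essentially the same route as the paper: both invoke \cref{thm:return_words_of_length_kn} to find an extension $u$ of $w$ with $\varphi\gen{\ret_X(u)} \subseteq k^n\Z$ for $n$ large, observe that $\varphi\gen{\ret_X(w)}$ cannot be contained in $k^n\Z$ (you phrase this via the generator $d$ of the cyclic subgroup and the choice $k^n > d$; the paper phrases it via a return word of $w$ whose length is not a multiple of $k^n$), and conclude with \cref{P:free group universal}. The two formulations are equivalent and the argument is sound.
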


\begin{proof}
    It is enough to prove that for every $w \in \lang(X)$, there exists $u \in w\cA^* \cap \lang(X)$ such that $\varphi\gen{\ret_X(u)} \ne \varphi\gen{\ret_X(w)}$.
    Let us fix such a word $w$.
    There is a positive integer $n$ such that $w$ has a return word of length not multiple of $k^n$. 
    By~\cref{thm:return_words_of_length_kn}, there exists $u$ having $w$ as a prefix and such that all return words to $u$ have length multiple of $k^n$.
    In other words, $\varphi\gen{\ret_X(u)} \subseteq k^n\mathbb{Z}$ while $\varphi\gen{\ret_X(w)} \not\subseteq k^n\mathbb{Z}$, which implies that $\varphi\gen{\ret_X(u)} \ne \varphi\gen{\ret_X(w)}$.
    This shows that the return groups are not eventually $\varphi$-stable. In particular, $X$ does not have eventually stable return groups by \cref{P:free group universal}.
\end{proof}

Combined with \cref{P:eventually suffix-connected are stable}, this implies the following result.

\begin{corollary}
    A minimal eventually suffix-connected shift space that is $k$-automatic is periodic.
\end{corollary}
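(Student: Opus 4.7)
The plan is to argue by contradiction using the two results immediately preceding the statement. Suppose $X$ is a minimal, eventually suffix-connected, $k$-automatic shift space, and assume for contradiction that $X$ is not periodic. Since $X$ is minimal, the dichotomy recalled in \cref{s:pre} tells us that $X$ must then be aperiodic.

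On one hand, \cref{P:eventually suffix-connected are stable} applies to $X$ (minimal and eventually suffix-connected), yielding that $X$ has eventually stable return groups. On the other hand, \cref{c:automatic} applies to $X$ (aperiodic minimal $k$-automatic), yielding that $X$ does \emph{not} have eventually stable return groups. These two conclusions are incompatible, so the assumption that $X$ is aperiodic must fail, and $X$ is periodic.

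There is no real obstacle here: the statement is a direct synthesis of the two preceding propositions, relying only on the observation that a minimal shift space is either periodic or aperiodic. The only point worth double-checking is that \cref{c:automatic} is indeed phrased so that its non-stability conclusion is for arbitrary $\varphi$-stability (in particular the identity on $F_\cA$, which is the notion used in \cref{P:eventually suffix-connected are stable}); this is exactly the content of its final sentence.
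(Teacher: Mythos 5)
Your proof is correct and matches the paper's intended argument exactly: the paper states this corollary as an immediate combination of \cref{P:eventually suffix-connected are stable} and \cref{c:automatic}, using the dichotomy that a minimal shift space is either periodic or aperiodic. Your check that \cref{c:automatic} rules out eventual stability for the identity morphism is the right detail to verify, and it holds.
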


\subsection{Some equivalent conditions}\label{S:subgroup separable}

When considering stability conditions for a minimal shift space, it is easy to see that it suffices to check that infinitely many words have ``full'' return groups, as detailed below.

\begin{proposition}\label{prop:finitely-many}
    Let $X$ be a minimal shift space over $\cA$ and let $\varphi$ be a group morphism defined on $F_\cA$. The following assertions are equivalent:
    \begin{enumerate}
        \item the shift space $X$ has $\varphi$-stable return groups;
        \item $\varphi\gen{\ret_X(u)} = \varphi(F_\cA)$ for infinitely many $u\in\lang(X)$.
    \end{enumerate}
\end{proposition}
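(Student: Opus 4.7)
\begin{proof}[Proof proposal]
The implication from (i) to (ii) is immediate: since $X$ is minimal, $\lang(X)$ is infinite (every sufficiently long word contains any fixed factor, yet the language cannot be bounded above), and $\varphi$-stability says $\varphi\gen{\ret_X(u)}=\varphi(F_\cA)$ for \emph{every} $u \in \lang(X)$.

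For the converse, assume $\varphi\gen{\ret_X(u)}=\varphi(F_\cA)$ for infinitely many $u \in \lang(X)$, and fix an arbitrary $w \in \lang(X)$; the goal is to show $\varphi\gen{\ret_X(w)}=\varphi(F_\cA)$. By uniform recurrence (the reformulation of minimality recalled in~\cref{s:pre}), there exists $k \in \nn$ such that every word of $\lang_{\ge k}(X)$ contains $w$ as a factor. Among the infinitely many words $u$ with full image, pick one of length at least $k$ and write $u = pws$.

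Apply \cref{lem:return groups of factors} to the factor $w$ of $u$: we obtain
\[
\varphi\gen{\ret_X(w)} \geq \varphi(p)^{-1}\,\varphi\gen{\ret_X(u)}\,\varphi(p) = \varphi(p)^{-1}\varphi(F_\cA)\varphi(p).
\]
Since $\varphi(p) \in \varphi(F_\cA)$, conjugation of $\varphi(F_\cA)$ by $\varphi(p)$ returns $\varphi(F_\cA)$ itself, hence $\varphi\gen{\ret_X(w)} \geq \varphi(F_\cA)$. The reverse inclusion is trivial because $\ret_X(w) \subseteq \cA^* \subseteq F_\cA$. Thus $\varphi\gen{\ret_X(w)} = \varphi(F_\cA)$ for every $w \in \lang(X)$, which is precisely $\varphi$-stability by the second comment following \cref{D:stable}.
\end{proof}

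The only real content is the (ii) $\Rightarrow$ (i) direction; the main (and essentially only) tool is uniform recurrence combined with \cref{lem:return groups of factors}, and the key subtlety is observing that the conjugating element $\varphi(p)$ lies inside $\varphi(F_\cA)$, so the conjugation in the conclusion of that lemma does nothing when the inner group is already all of $\varphi(F_\cA)$. No obstacle of note is expected.
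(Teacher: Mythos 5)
Your proof is correct and follows essentially the same route as the paper's: both directions rest on minimality (to find a long word with full image containing any given $w$ as a factor) together with \cref{lem:return groups of factors}, noting that conjugating $\varphi(F_\cA)$ by $\varphi(p)\in\varphi(F_\cA)$ changes nothing. Your write-up just makes the pigeonhole step (choosing $u$ of length at least $k$ among the infinitely many full-image words) and the conjugation observation more explicit than the paper does.
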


\begin{proof}
    Clearly, if $X$ has $\varphi$-stable return groups, then $\varphi\gen{\ret_X(u)} = \varphi\gen{\ret_X(\emptyw)} = \varphi(F_\cA)$ holds for infinitely many $u \in \lang(X)$. Conversely, let $w \in \lang(X)$. By minimality and by assumption, there exists $u \in \lang(X)$ having $w$ as a factor and such that $\varphi\gen{\ret_X(u)} = \varphi(F_\cA)$. \cref{lem:return groups of factors} then implies that $\varphi(F_\cA)$ is, up to conjugacy by an element of $\varphi(F_\cA)$, a subgroup of $\varphi\gen{\ret_X(w)}$, which in turns implies that $\varphi\gen{\ret_X(w)} = \varphi(F_\cA)$. Therefore, $\varphi\gen{\ret_X(w)} = \varphi(F_\cA) = \varphi\gen{\ret_X(\emptyw)}$ for all $w \in \lang(X)$, proving $\varphi$-stability.
\end{proof}

Building on similar ideas, we provide in \cref{thm:main} different equivalent definitions for eventual $\varphi$-stability. In order to do so, we ask for an additional algebraic condition known as \emph{subgroup separability}. This condition has a rich history in group theory and, as we shall see, is satisfied by the morphisms classically considered in the context of return words. The definition of subgroup separability will be stated in terms of the following notion.

\begin{definition}
    Let $G$ be a group and let $H \leq G$. We say that $G$ is \emph{$H$-separable} if, for every $g\notin H$, there exists a normal subgroup $N\trianglelefteq G$ of finite index such that $gN$ is not in the subgroup $H/N$ of $G/N$.
\end{definition}

In the terminology of Malcev~\cite{Malcev1983}, who was among the firsts to study such properties, we would alternatively say that the subgroup $H$ is a \emph{finitely separable subgroup} of $G$.
The next lemma is a property which appeared in a paper of Raptis and Varsos (\cite[Proposition~1]{Raptis1996}). We include a proof for the sake of completeness.
\begin{lemma}\label{lem:LERF}
    Let $G$ be a group and let $H\leq G$. If $G$ is $H$-separable, then $H$ does not strictly contain any of its conjugates.
\end{lemma}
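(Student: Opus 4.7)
The plan is to argue by contrapositive: I will assume that $H$ strictly contains some conjugate $gHg^{-1}$ and construct an element witnessing the failure of $H$-separability. The key intuition is that separability pushes questions about $H$ down to finite quotients, and in a finite group the containment $gHg^{-1} \subseteq H$ forces equality by a simple cardinality argument.

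More precisely, suppose $gHg^{-1} \subsetneq H$ for some $g \in G$. I would pick $h \in H \setminus gHg^{-1}$; then $g^{-1}hg \notin H$, so by $H$-separability there exists a finite-index normal subgroup $N \trianglelefteq G$ such that the coset $(g^{-1}hg)N$ does not lie in $HN/N$. I now pass to the finite quotient $G/N$, writing $\bar{x} = xN$ and $\bar{H} = HN/N$. From $gHg^{-1} \subseteq H$ I get $\bar{g}\bar{H}\bar{g}^{-1} \subseteq \bar{H}$. Since conjugation is a bijection of $G/N$, the subgroups $\bar{g}\bar{H}\bar{g}^{-1}$ and $\bar{H}$ have the same (finite) cardinality, so the containment forces $\bar{g}\bar{H}\bar{g}^{-1} = \bar{H}$.

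Finally, since $\bar{h} \in \bar{H} = \bar{g}\bar{H}\bar{g}^{-1}$, we get $\overline{g^{-1}hg} \in \bar{H}$, contradicting the choice of $N$. Hence no such $g$ can exist.

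I do not foresee a serious obstacle: the only subtle point is recognising that finite-index normality of $N$ combined with conjugate subgroups of equal cardinality inside $G/N$ forces equality, which is why the separability condition (passing to \emph{finite} quotients) is exactly the right hypothesis. Everything else is bookkeeping with cosets.
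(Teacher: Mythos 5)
Your proof is correct and follows essentially the same route as the paper: pick $h$ in the difference $H\setminus gHg^{-1}$, observe its conjugate lies outside $H$, separate it from $H$ in a finite quotient, and use the fact that conjugate subgroups of a finite group have equal cardinality to derive a contradiction. The only cosmetic difference is that you conjugate by $g$ where the paper conjugates by $g^{-1}$, and your notation $HN/N$ is a slightly more careful rendering of what the paper writes as $H/N$.
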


\begin{proof}
    Suppose by contradiction that the subgroup $g^{-1}Hg$ is strictly contained in $H$ for some $g\in G$. Let $h\in H\setminus g^{-1}Hg$, we have $ghg^{-1} \notin H$. Therefore, there exists a normal subgroup $N \trianglelefteq G$ of finite index such that $ghg^{-1}N \notin H/N$. In particular, $hN \notin g^{-1}Hg/N$.  
    Next, consider the two subgroups $H/N$ and $g^{-1}Hg/N$ of $G/N$. Those are conjugate subgroups in a finite group, thus they have the same cardinality. Moreover $g^{-1}Hg\leq H$, thus $g^{-1}Hg/N = H/N$. We deduce that
    \begin{equation*}
        hN \in H/N = g^{-1}Hg/N,
    \end{equation*}
    contradicting the choice of $N$.
\end{proof}

\begin{definition}
    A group $G$ is \emph{subgroup separable} if it is $H$-separable for every finitely generated subgroup $H\leq G$. 
\end{definition}

Subgroup separable groups are sometimes called \emph{locally extended residually finite} (LERF), a terminology which can be traced back to a paper of Burns~\cite{Burns1971} who attributed it to Meskin. This alternative terminology is meant to highlight the fact that the property is a generalization of \emph{residual finiteness} (RF), which may be defined as $\{1\}$-separability.

Many important families of groups are known to be subgroup separable. A first example is the family of finite groups. Indeed, in a finite group $G$, the trivial subgroup $\{1\}$ has finite index, and if $H \leq G$ is a proper subgroup with $g\notin H$, then $g \notin H = H/\{1\}$. A classical result of M.~Hall, which is particularly important in the context of this paper, states that all free groups are subgroup separable (\cite[Theorem~5.1]{Hall1949}). Another result, due to Malcev, states that \emph{virtually polycyclic groups} (i.e., finite extensions of groups with cyclic subnormal series) have finitely separable subgroups, and thus they are subgroup separable~\cite{Malcev1983}. This includes all finitely generated Abelian groups; see \cite[Lemma~4.2]{Steinberg1999} for a direct proof of this. On the other hand, a simple example of a finitely generated group which is not subgroup separable is the direct product of two free groups of rank 2~\cite{Allenby1973}.

We now provide equivalent points of view for eventual stability.

\begin{theorem}\label{thm:main}
    Let $X$ be a minimal shift space over $\cA$ and let $\varphi\from F_\cA\to G$ be a group morphism onto a group $G$. Consider the following assertions:
    \begin{enumerate}
        \item\label{item:stable return groups} $X$ has eventually $\varphi$-stable return groups;
        \item\label{item:finite return groups} $\{\varphi\gen{\ret_X(u)} \mid u \in \lang(X)\}$ is finite;
        \item\label{item:all long enough are conjugate theorem} the subgroups $\varphi\gen{\ret_X(u)}$ are conjugate for all but finitely many $u \in \lang(X)$;
        \item\label{item:infinitely many conjugate theorem} the subgroups $\varphi\gen{\ret_X(u)}$ are conjugate for infinitely many $u \in \lang(X)$.
    \end{enumerate}
    Then: \ref{item:stable return groups} implies \ref{item:finite return groups} and \ref{item:all long enough are conjugate theorem}; \ref{item:finite return groups} and \ref{item:all long enough are conjugate theorem} each imply \ref{item:infinitely many conjugate theorem}. Moreover, all the assertions are equivalent when $G$ is subgroup separable.
\end{theorem}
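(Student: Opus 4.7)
The plan is to establish the general implications (i) $\Rightarrow$ (ii), (i) $\Rightarrow$ (iii), and each of (ii), (iii) $\Rightarrow$ (iv) unconditionally, then prove the converse (iv) $\Rightarrow$ (i) using subgroup separability.

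For (i) $\Rightarrow$ (ii), if $M$ is the stability threshold then every $\varphi\gen{\ret_X(u)}$ with $|u|\geq M$ coincides with $\varphi\gen{\ret_X(w)}$ for $w$ the length-$M$ prefix of $u$, and $\lang_{\leq M}(X)$ is finite. For (i) $\Rightarrow$ (iii), I use \cref{P:equivalence with Rauzy groups}: for $u$ of length at least the threshold, $\varphi\gen{\ret_X(u)} = \varphi(\Gr(u))$, and since Rauzy groups of words of equal length are conjugate in $F_\cA$ (a standard consequence of the Rauzy graph being strongly connected by minimality), their $\varphi$-images are conjugate in $G$. Combined with equality across prefix extensions, all return groups of words of length $\geq M$ lie in a single conjugacy class. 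Both (ii) $\Rightarrow$ (iv) and (iii) $\Rightarrow$ (iv) follow from pigeonhole since $\lang(X)$ is infinite.

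The converse (iv) $\Rightarrow$ (i) is the main task. The key tool is the following consequence of \cref{lem:LERF}, which I will call the collapse lemma: in a subgroup separable group $G$, two finitely generated subgroups $A \leq B$ that are conjugate in $G$ must coincide. Indeed, writing $B = gAg^{-1}$ rearranges $A \leq B$ into $g^{-1}Ag \leq A$, and \cref{lem:LERF} forces equality since $A$ then contains a conjugate of itself. Under (iv), pigeonhole produces a conjugacy class $[H]$ and an infinite set $S \subseteq \lang(X)$ whose $\varphi$-return groups lie in $[H]$. Since each $\lang_n(X)$ is finite, $S$ contains arbitrarily long words; fix some $u_0 \in S$ and use uniform recurrence to find $N$ such that every $v \in \lang_{\geq N}(X)$ contains $u_0$ as a factor. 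Conversely, for any $v \in \lang(X)$, uniform recurrence combined with the existence of arbitrarily long elements of $S$ produces $u_1 \in S$ containing $v$ as a factor. Two applications of \cref{lem:return groups of factors}, one in each direction, sandwich $\varphi\gen{\ret_X(v)}$ between two conjugates of $H$; the collapse lemma forces the sandwich to degenerate, showing $\varphi\gen{\ret_X(v)} \in [H]$ for every $v \in \lang_{\geq N}(X)$. Eventual $\varphi$-stability with threshold at most $N$ then follows from a final application of the collapse lemma: for $w \in \lang_N(X)$ and $u \in w\cA^* \cap \lang(X)$, both return groups lie in $[H]$ and satisfy $\varphi\gen{\ret_X(u)} \leq \varphi\gen{\ret_X(w)}$ by \cref{lem:return groups of factors}. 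The main obstacle will be orchestrating the two uses of \cref{lem:return groups of factors} with the correct orientations so that the two conjugates of $H$ frame $\varphi\gen{\ret_X(v)}$ from below and above.
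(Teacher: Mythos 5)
Your proposal is correct and follows essentially the same route as the paper: the unconditional implications via the finiteness of $\lang_{\leq M}(X)$, \cref{P:equivalence with Rauzy groups} with conjugacy of Rauzy groups, and pigeonhole; and the converse by sandwiching $\varphi\gen{\ret_X(v)}$ between two conjugates of a fixed $H$ using minimality and \cref{lem:return groups of factors}, then collapsing via \cref{lem:LERF}. Your ``collapse lemma'' is exactly the way the paper applies \cref{lem:LERF}, merely isolated as a named statement.
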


\begin{proof}
    \cref{item:stable return groups} implies \cref{item:finite return groups} since it suffices to consider return groups associated with the elements of $\lang_{\leq M}(X)$ for some $M$, which is finite. It also implies \cref{item:all long enough are conjugate theorem} by \cref{P:equivalence with Rauzy groups}. Indeed, there exists $M$ such that, for all $u \in \lang_{\geq M}(X)$, the subgroup $\gen{\ret_X(u)}$ is a Rauzy group for a length-$M$ word. The conclusion then follows since, when $X$ is uniformly recurrent, the Rauzy graphs are strongly connected so Rauzy groups associated with words of the same length are conjugate.
    The fact that \cref{item:finite return groups} implies \cref{item:infinitely many conjugate theorem} is a consequence of the pigeonhole principle, and the implication from \cref{item:all long enough are conjugate theorem} to \cref{item:infinitely many conjugate theorem} is direct.

    Let us now prove that, when $G$ is subgroup separable, \cref{item:infinitely many conjugate theorem} implies \cref{item:stable return groups}. Let $W$ be an infinite subset of $\lang(X)$ such that, for all $v \in W$, the subgroup $\varphi\gen{\ret_X(v)}$ is conjugate to some given $H \leq G$. Fix $v \in W$. As $X$ is minimal, there exists $M$ such that all words $u \in \lang_{\geq M}(X)$ have $v$ as a factor. Let $u \in \lang_{\geq M}(X)$ and let $w$ be its length-$M$ prefix. By Lemma~\ref{lem:return groups of factors} and by definition of $v$, there exists a conjugate $H_1$ of $H$ such that
    \[
        \varphi\gen{\ret_X(u)} \leq \varphi\gen{\ret_X(w)} \leq H_1.
    \]
    Similarly, as $X$ is minimal and $W$ infinite, there exists $t \in W$ such that $u$ is a factor of $t$. Therefore, there exists a conjugate $H_2$ of $H$ such that $H_2$ is a subgroup of $\varphi\gen{\ret_X(u)}$. Since $G$ is subgroup separable and return groups are finitely generated, this implies that $H_1 = H_2$ by \cref{lem:LERF}, so $\varphi\gen{\ret_X(u)} = \varphi\gen{\ret_X(w)}$. As this is true for any $w \in \lang_M(X)$ and $u \in w\cA^* \cap \lang(X)$, this ends the proof.    
\end{proof}

We do not know however if some or all of these assertions are still equivalent when $G$ is not subgroup separable.

We deduce the following dichotomy as a direct consequence of \cref{lem:return groups of factors} and of \cref{thm:main}.
\begin{corollary}\label{cor:dualite}
    Let $X$ be a minimal shift space over $\cA$ and let $\varphi\from F_\cA\to G$ be a morphism onto a subgroup separable group $G$. Either $X$ has eventually $\varphi$-stable return groups, or there exists a sequence $\left(u^{(n)}\right)_{n\in\N} \in \lang(X)^\N$ such that $\varphi\gen{\ret_X\left(u^{(n+1)}\right)} < \varphi\gen{\ret_X\left(u^{(n)}\right)}$ for all $n$.
\end{corollary}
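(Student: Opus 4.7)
The plan is to prove the contrapositive: assuming $X$ does not have eventually $\varphi$-stable return groups, I construct the sequence $(u^{(n)})_{n \in \N}$ by induction on $n$. Take $u^{(0)} = \emptyw$, so that $\varphi\gen{\ret_X(u^{(0)})} = \varphi(F_\cA) = G$. Given $u^{(n)} \in \lang(X)$, the goal is to find a right extension $u^{(n+1)} \in u^{(n)}\cA^* \cap \lang(X)$ whose return group sits strictly inside $\varphi\gen{\ret_X(u^{(n)})}$; by \cref{lem:return groups of factors} the non-strict inclusion $\varphi\gen{\ret_X(u^{(n+1)})} \leq \varphi\gen{\ret_X(u^{(n)})}$ is automatic, so the whole task reduces to ensuring strictness.

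The key step is to argue that at every stage such a strict extension must exist. Suppose for contradiction that every $u' \in u^{(n)}\cA^* \cap \lang(X)$ satisfies $\varphi\gen{\ret_X(u')} = \varphi\gen{\ret_X(u^{(n)})}$. Picking any two-sided sequence in $X$ and shifting it so that $u^{(n)}$ appears at position zero yields infinitely many finite prefixes lying in $u^{(n)}\cA^* \cap \lang(X)$, so this set is infinite. By the assumption, all of the corresponding images under $\varphi\gen{\ret_X(\cdot)}$ coincide, and are therefore pairwise conjugate. This is exactly condition \ref{item:infinitely many conjugate theorem} of \cref{thm:main}. Since $G$ is subgroup separable, \cref{thm:main} promotes this to condition \ref{item:stable return groups}, namely that $X$ has eventually $\varphi$-stable return groups, contradicting the standing assumption. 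Consequently, some strict right extension $u^{(n+1)}$ of $u^{(n)}$ exists, and the induction proceeds.

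I do not foresee any serious obstacle: the whole argument is a direct combination of the monotonicity built into \cref{lem:return groups of factors} with the equivalence \ref{item:infinitely many conjugate theorem}~$\Leftrightarrow$~\ref{item:stable return groups} of \cref{thm:main}, which is precisely where the subgroup separability hypothesis on $G$ is used.
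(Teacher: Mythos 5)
Your proof is correct and follows exactly the route the paper intends: the corollary is stated there as a direct consequence of \cref{lem:return groups of factors} (which gives the automatic non-strict inclusion along prefixes) and of \cref{thm:main} (whose equivalence of \ref{item:infinitely many conjugate theorem} with \ref{item:stable return groups} under subgroup separability rules out stagnation at any stage). No gaps.
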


\section{Decidability in the case of substitutive shift spaces}\label{S:decidability}

Based on \cref{prop:finitely-many} and \cref{thm:main}, (eventual) stability of return groups reduces to the study of return groups for well-chosen infinite families of words. This is especially useful in the case of a minimal substitutive shift space $X$ where, following the work of Durand on derivation, one can find a sequence of words $(u^{(n)})_{n \in \N}$, a finite alphabet $\cC$, an endomorphism $\alpha$ of $F_\cC$, and a free group morphism $\psi$ such that, for each $n$, we have $\gen{\ret_X(u^{(n)})} = \psi(\alpha^n(F_\cC))$ (see~\cref{prop:durand}). Moreover, the above morphisms $\alpha$ and $\psi$ are computable, leading to decidability results on eventual stability, both in finite groups and in the free group over the alphabet of $X$.

\subsection{Derived sequences}

Assume that $X$ is a minimal shift space over $\cA$, and consider a word $u \in \lang(X)$. 
Let $\theta_u\from \cB^* \to \cA^*$ define a bijection between $\cB$ and $\ret_X(u)$. There is a shift space $Y$ over $\cB$ such that $\lang(Y) = \theta_u^{-1}(\lang(X))$. This shift space is minimal and is unique up to a renaming of the letters (because $\theta_u$ and $\cB$ are not uniquely defined). It is called the {\em derived shift space of $X$} with respect to $u$, and is denoted $D_u(X)$. The substitution $\theta_u$ is called a {\em derivating substitution} (for $u$).

One can also define derivation of a right-infinite word $x\in\cA^\nn$ with respect to one of its prefixes $u$. In this case, we have a canonical choice of $\theta_u$. More precisely, we naturally set
\begin{equation*}
    \ret_x(u) = \left\{ r\in \cA^+ \mid ru\in \lang(x)\cap (u\cA^*\setminus \cA^+u\cA^+)\right\}
\end{equation*}
and $\cA_u = \{1,\dots,\card\ret_x(u)\}$. Then we define $\theta_{x,u}\from \cA_u^* \to \cA^*$ such that $\theta_{x,u}(\cA_u) = \ret_x(u)$ and for every $i \in \cA_u$, if $k$ is the smallest occurrence of $\theta_{x,u}(i)u$ in $x$, then $x_{[0,k)}$ belongs to $\{\theta_{x,u}(j) \mid 1 \leq j < i\}^*$. In particular, this implies that $\theta_{x,u}(1)u$ is a prefix of $x$.
The {\em derived sequence of $x$} with respect to $u$, denoted $D_u(x)$, is then the unique $z \in \cA_u^\N$ such that $\theta_{x,u}(z) = x$. It satisfies (for the choice $\theta_u = \theta_{x,u}$)
\begin{equation*}
    D_u(X) = \left\{z \in \cA_u^\mathbb{Z} \mid \lang(z) \subseteq \lang(D_u(x))\right\}.  
\end{equation*}

\begin{lemma}[{\cite[Proposition 2.6]{Durand1998}}]
\label{l:durand derivation en chaine}
    Let $X$ be a minimal shift space. If $\theta_u$ is a derivating substitution for $u \in \lang(X)$, then for every $w \in \lang(D_u(X))$, we have
    \[
        \theta_u(\ret_{D_u(X)}(w)) = \ret_X(\theta_u(w)u).
    \]
    Furthermore, if $x = y_{[0,\infty)}$ for some $y \in X$, $u$ is a prefix of $x$, and $w$ is a prefix of $D_u(x)$, then $\theta_u(w)u$ is a prefix of $x$, $D_{\theta_u(w)u}(x) = D_w(D_u(x))$, and $\theta_{x,\theta_u(w)u} = \theta_{x,u}\theta_{D_u(x),w}$.
\end{lemma}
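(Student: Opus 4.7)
The plan is to derive everything from a single structural fact about the return word factorization of any $y \in X$: writing $y = \cdots r_{-1}r_0r_1\cdots$ with each $r_i \in \ret_X(u)$, the positions of $u$ in $y$ are exactly the boundaries of this factorization. One direction is immediate; the other follows because a return word $r$ has $u$ appearing in $ru$ only as prefix and suffix, so any occurrence of $u$ must coincide with a boundary. From this one obtains a position-preserving bijection between occurrences of $\theta_u(w)u$ in $y$ and occurrences of $w$ in $D_u(y)$: an occurrence of $w = c_{i_1}\cdots c_{i_k}$ in $D_u(y)$ gives $\theta_u(w) = r_{i_1}\cdots r_{i_k}$ in $y$, followed by $u$ at the next boundary; conversely, an occurrence of $\theta_u(w)u$ at position $p$ in $y$ forces $p$ to be a boundary, since $u$ is always a prefix of $\theta_u(w)u$ (the first $|u|$ characters after any boundary of $y$ spell $u$), and inspecting the internal occurrences of $u$ inside $\theta_u(w)u$ forces the factorization of $y$ on $[p, p + |\theta_u(w)|]$ to be exactly $r_{i_1}\cdots r_{i_k}$.

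The first equation $\theta_u(\ret_{D_u(X)}(w)) = \ret_X(\theta_u(w)u)$ is then immediate from the bijection: having exactly two occurrences of $w$ in $sw$ (as prefix and suffix) corresponds to exactly two occurrences of $\theta_u(w)u$ in $\theta_u(s)\theta_u(w)u$. For the second part, specialize to $x = y_{[0,\infty)}$ with $u$ as a prefix and $\theta_u = \theta_{x,u}$, and write the factorization $x = r_1 r_2 r_3 \cdots$ with $r_j \in \ret_x(u)$, so $D_u(x) = c_1 c_2 c_3 \cdots$ with $\theta_u(c_j) = r_j$. If $w = c_1 \cdots c_k$ is a prefix of $D_u(x)$, then $\theta_u(w) = r_1\cdots r_k$ is a prefix of $x$ and $|r_1\cdots r_k|$ is a boundary, so $\theta_u(w)u$ is a prefix of $x$. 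For $D_{\theta_u(w)u}(x) = D_w(D_u(x))$, both sides record the sequence of indices obtained by factoring $x$ into return words to $\theta_u(w)u$; by the first part, these blocks are $\theta_u$-images of the blocks in the $w$-factorization of $D_u(x)$, giving the equality up to relabeling.

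The delicate point is matching canonical numberings in $\theta_{x,\theta_u(w)u} = \theta_{x,u}\theta_{D_u(x),w}$, where both sides enumerate letters of the derived alphabet by first appearance of the corresponding return word. The order-preserving nature of the occurrence bijection is precisely what ensures compatibility: if $s_i$ denotes the $i$-th return word to $w$ to appear in $D_u(x)$, then $\theta_u(s_i)$ is a return word to $\theta_u(w)u$ in $x$ and is also the $i$-th such to appear (the prefix of $D_u(x)$ up to its first occurrence of $s_i w$ maps under $\theta_u$ to the prefix of $x$ up to its first occurrence of $\theta_u(s_i)\theta_u(w)u$, and both consist only of earlier return words). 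Composing substitutions then yields $\theta_{x,\theta_u(w)u}(i) = \theta_u(s_i) = \theta_{x,u}(\theta_{D_u(x),w}(i))$, which is the desired equality.
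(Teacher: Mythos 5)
Your proof is correct. Note that the paper does not prove this lemma at all: it is quoted from Durand (Proposition~2.6 of the cited reference), and your argument is essentially the one found there, built on the same structural fact that the occurrences of $u$ in any $y\in X$ are exactly the boundaries of its factorization into return words. Two points that you use implicitly are worth making explicit. First, your occurrence-counting argument (no occurrence of $u$ strictly between consecutive boundaries) is also what shows that $\ret_X(u)$ is a code, hence that $\theta_u$ is injective on words over the derived alphabet; this is needed both for the map $s\mapsto\theta_u(s)$ to be a bijection between return words and for the ``same index of first appearance'' step at the end. Second, in the inclusion $\theta_u(\ret_{D_u(X)}(w))\subseteq\ret_X(\theta_u(w)u)$ one needs that $\theta_u(w)u$ is a prefix of $\theta_u(sw)u$, which follows from $sw\in w\mathcal{B}^*$ together with the fact that $u$ is a prefix of $\theta_u(t)u$ for every $t$; this is a one-line check but it is the reason the trailing $u$ must be carried along everywhere. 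With these details spelled out, the argument is complete and matches the cited source.
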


The two following results are reformulations of Durand's work in~\cite{Durand1998,Durand2012}. As \cref{prop:durand} was not explicitly stated, we give a short proof.

\begin{theorem}\label{thm:durand_caclculable}
    Let $\sigma\from \cB^* \to \cB^*$ be a primitive substitution with a fixed point $y\in \cB^{\N}$, let $\tau\from \cB^* \to \cA^*$ be a letter-to-letter substitution, and let $x = \tau(y)$. 
    \begin{enumerate}
        \item For every prefix $u$ of $x$, the substitution $\theta_{x,u}$ is computable and there exist some computable substitutions $\sigma_{u}\from \cC^* \to \cC^*$ and $\tau_u\from \cC^* \to \cA_u^*$ such that $\sigma_u$ is primitive, $\tau_u$ is letter-to-letter, and $\mathcal{D}_u(x) = \tau_{u}(z)$ where $z\in\cA^\nn$ is the fixed point of $\sigma_u$ starting with 1.
        \item There is a computable constant $K$ such that the set 
            \begin{equation*}
                \card\{(\sigma_{u},\tau_{u}) \mid u \text{ prefix of } x\} \leq K.
            \end{equation*}
        In particular, the number of derived sequences of $x$ (with respect to its prefixes) is at most $K$. 
    \end{enumerate}
\end{theorem}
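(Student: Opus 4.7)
The plan is to reduce to the purely substitutive case $\tau = \id$, which is the core of Durand's original work \cite{Durand1998}, and then lift the conclusion to the general case by exploiting the letter-to-letter character of $\tau$. The chain rule \cref{l:durand derivation en chaine} is the main technical tool throughout.

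For the pure case ($\tau = \id$), let $v$ be a prefix of $y$. The set $\ret_y(v)$ is effectively computable: uniform recurrence of $y$ (from primitivity of $\sigma$) yields a computable upper bound on the lengths of return words in terms of $|v|$ and $\sigma$, so enumerating them amounts to reading a bounded initial segment of $y$. This produces $\theta_{y,v}$. To build the primitive substitution $\sigma_v$, I would iterate the chain rule along a sequence of nested prefixes of $y$, e.g.\ of the form $\sigma^n(v_0)$ for a suitable short $v_0$; by primitivity of $\sigma$, the iterated derivating substitutions stabilize modulo renaming of the alphabet, so that $D_v(y)$ is identified as the fixed point (starting at $1$) of an explicitly computable primitive substitution $\sigma_v$ on some finite alphabet $\cC$. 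In the pure case one simply takes $\tau_v = \id_\cC$.

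For the general case, let $u$ be a prefix of $x$ and let $v$ be the prefix of $y$ of length $|u|$, so that $\tau(v) = u$. The subtlety is that occurrences of $u$ in $x$ correspond in $y$ to occurrences of any word in the finite set $P = \tau^{-1}(u) \cap \lang_{|u|}(y)$, not only of $v$; hence $\ret_x(u)$ refines the structure coming from $\ret_y(v)$. I would handle this by enlarging $v$ into a prefix $\tilde v$ of $y$ long enough that applying $\tau$ letter-by-letter to consecutive return words to $\tilde v$ in $y$ recovers exactly the sequence of return words to $u$ in $x$, without any merging or splitting. Such a $\tilde v$ can be computed effectively using uniform recurrence of $y$ applied to $P$. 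Applying the pure case to $\tilde v$ yields a primitive substitution $\sigma_{\tilde v}$ on $\cC = \cB_{\tilde v}$; I would then set $\sigma_u = \sigma_{\tilde v}$ and define $\tau_u \from \cC \to \cA_u$ as the letter-to-letter map sending each letter coding a return word $r$ to $\tilde v$ in $y$ to the letter coding the return word $\tau(r)$ to $u$ in $x$. By construction $D_u(x) = \tau_u(z)$ with $z$ the fixed point of $\sigma_u$ starting at $1$, and both $\sigma_u$ and $\tau_u$ are effectively computable from $\sigma$, $\tau$ and $u$.

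For part (ii), the bound $K$ follows by invoking Durand's finiteness theorem for derived sequences of $y$, which bounds the number of distinct $\sigma_{\tilde v}$ by a computable constant depending only on $\card \cB$ and $\max_b |\sigma(b)|$, together with the observation that only finitely many letter-to-letter codings $\tau_u$ compatible with the ambient alphabets can appear for each fixed $\sigma_{\tilde v}$. The main obstacle will be verifying that a single $\tilde v$ can be chosen so that $\tau_u$ is genuinely well-defined as a letter-to-letter morphism, namely that no two distinct return words to $\tilde v$ in $y$ project to the same return word to $u$ in a positionally inconsistent manner, and that no return word to $\tilde v$ is split into several return words to $u$ in $x$. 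This requires a careful uniform use of the chain rule together with uniform recurrence, ensuring enough ``separating'' occurrences of the members of $P$ between consecutive occurrences of $\tilde v$.
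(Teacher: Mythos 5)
This theorem is not proved in the paper: it is explicitly presented as a reformulation of Durand's results in \cite{Durand1998,Durand2012}, and only \cref{prop:durand} receives a proof. So your proposal has to be measured against Durand's actual construction, and there it has a genuine gap in the general case $\tau\neq\id$. Occurrences of $u$ in $x$ correspond to occurrences in $y$ of \emph{any} word of the factor code $P=\tau^{-1}(u)\cap\lang_{|u|}(y)$, and these are strictly more frequent than occurrences of any single prefix $\tilde v$ of $y$. Consequently, between two consecutive occurrences of $\tilde v$ in $y$ there are in general several occurrences of elements of $P$, so the $\tau$-image of a return word to $\tilde v$ decomposes as a concatenation of \emph{several} return words to $u$ in $x$. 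Your map $\tau_u$ is therefore a genuine (non-erasing) substitution $\cC^*\to\cA_u^*$, not a letter-to-letter coding, and no choice of $\tilde v$ fixes this: lengthening $\tilde v$ lengthens its return words and only increases the number of occurrences of $P$ inside them, i.e.\ makes the splitting worse. You correctly identify this as ``the main obstacle'' at the end, but the proposed remedy (more separating occurrences of $P$ between consecutive occurrences of $\tilde v$) points in the wrong direction. What your argument actually yields is $D_u(x)=\mu(D_{\tilde v}(y))$ for a computable substitution $\mu$, which shows $D_u(x)$ is substitutive but not in the normal form $(\sigma_u,\tau_u)$ with $\tau_u$ letter-to-letter that the theorem asserts and that the rest of the paper (notably the proof of \cref{prop:durand}) relies on. To close the gap you must either follow Durand's construction, which works directly with the return words to $u$ in $x$ (equivalently, to the set $P$ in $y$) and tracks how $\sigma$ acts on them, or invoke the computable normalization of a primitive substitutive pair into one with a letter-to-letter outer map --- a standard but nontrivial additional step that cannot be waved away.

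A secondary weakness: in the pure case you assert that ``the iterated derivating substitutions stabilize modulo renaming of the alphabet'' as a consequence of primitivity. That stabilization \emph{is} essentially the finiteness statement of part (ii), i.e.\ the content of Durand's theorem; deriving it requires the linear-recurrence bounds (a computable uniform bound on $\card\ret_x(u)$ and on $|r|/|u|$ for $r\in\ret_x(u)$), which also underpin your claim in part (ii) that only finitely many codings $\tau_u$ can occur. As written, this part of the argument is close to assuming what is to be proved.
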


\begin{proposition}\label{prop:durand}
    Let $X$ be a minimal substitutive shift space generated by $(\sigma,\tau)$, where $\sigma\from \cB^* \to \cB^*$ is primitive and $\tau\from \cB^* \to \cA^*$ is letter-to-letter. There exist two computable substitutions $\alpha\from \cC^* \to \cC^*$ and $\psi\from \cC^* \to \cA^*$, and a sequence $(u^{(n)})_{n \in \nn} \in \lang(X)^\nn$ such that $u^{(n)}$ is a proper prefix of $u^{(n+1)}$ and $\ret_X(u^{(n)}) = \psi \alpha^n(\cC)$ for all $n \in\nn$.
\end{proposition}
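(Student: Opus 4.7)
The plan is to exploit \cref{thm:durand_caclculable}: among all derived sequences of $x$ taken with respect to its prefixes, only finitely many (at most $K$) occur, and they are computable. A pigeonhole argument on this finite set will identify a single ``derivation step'' that can be iterated forever.

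First, fix a right-infinite word $x = \tau(y) \in \cA^\nn$, where $y$ is the fixed point of $\sigma$; by minimality of $X$, $\lang(x) = \lang(X)$ and $\ret_X(u) = \ret_x(u)$ for every prefix $u$ of $x$. Build an increasing chain $v_0 = \emptyw \subsetneq v_1 \subsetneq v_2 \subsetneq \cdots$ of prefixes of $x$ by iterating $v_{i+1} = \theta_{x,v_i}(1)\, v_i$. Since $\theta_{x,v_i}(1)v_i$ is a prefix of $x$ and $D_{v_i}(x)$ starts with the letter $1$ by the numbering convention, $v_i$ is indeed a prefix of $v_{i+1}$. By \cref{thm:durand_caclculable}\,(ii), the pairs $(\sigma_{v_i}, \tau_{v_i})$ determining $D_{v_i}(x)$ take only finitely many values, so by pigeonhole there exist $m < n$ with $D_{v_m}(x) = D_{v_n}(x)$.

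Iterating \cref{l:durand derivation en chaine} yields a unique prefix $w$ of $D_{v_m}(x)$ with $v_n = \theta_{x,v_m}(w)\, v_m$, and the equality $D_{v_m}(x) = D_{v_n}(x) = D_w(D_{v_m}(x))$ forces this derivation step to preserve the alphabet: setting $\cC = \cA_{v_m}$, the substitution $\alpha = \theta_{D_{v_m}(x),\, w}$ is an endomorphism of $\cC^*$. Set also $\psi = \theta_{x,v_m} \from \cC^* \to \cA^*$.

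The rest is a short induction. Define $u^{(0)} = v_m$ and $u^{(k+1)} = \theta_{x, u^{(k)}}(w)\, u^{(k)}$. Applying \cref{l:durand derivation en chaine} at each step gives both $D_{u^{(k+1)}}(x) = D_w(D_{u^{(k)}}(x))$ and $\theta_{x,u^{(k+1)}} = \theta_{x,u^{(k)}} \circ \alpha$. Since $D_{u^{(0)}}(x) = D_{v_m}(x)$, induction yields $D_{u^{(k)}}(x) = D_{v_m}(x)$ and $\theta_{x,u^{(k)}} = \psi\alpha^k$ for all $k$, hence $\ret_X(u^{(k)}) = \theta_{x,u^{(k)}}(\cC) = \psi\alpha^k(\cC)$. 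Since $w$ is non-empty, $u^{(k)}$ is a proper prefix of $u^{(k+1)}$.

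The main thing to verify carefully is the claim in the penultimate paragraph that $\alpha$ is genuinely a self-map of $\cC^*$; this is the only reason we demand \emph{equal} derived sequences $D_{v_m}(x) = D_{v_n}(x)$ rather than merely ``isomorphic up to renaming.'' Computability of $\psi$ and $\alpha$ is then immediate from \cref{thm:durand_caclculable}\,(i): enumerate the pairs $(\sigma_{v_i}, \tau_{v_i})$ for $i = 0, 1, \ldots$ and halt at the first repetition, which by \cref{thm:durand_caclculable}\,(ii) occurs within $K + 1$ steps.
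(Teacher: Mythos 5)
Your proof is correct and follows essentially the same route as the paper: iterate one-step derivation along prefixes of $x=\tau(y)$, use \cref{thm:durand_caclculable} and pigeonhole to find two indices with the same derived system, and extract $\psi$ and $\alpha$ from the repeated block. The only cosmetic difference is that you package the repeated block as a single derivation $\theta_{D_{v_m}(x),w}$ with respect to a word $w$, whereas the paper writes $\alpha$ as the composition $\theta_{i+1}\cdots\theta_j$ of one-letter derivating substitutions; these agree by the chain rule of \cref{l:durand derivation en chaine}.
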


\begin{proof}
    We set $x = \tau(y)$, where $y$ is a fixed point of $\sigma$.
    Using \cref{l:durand derivation en chaine}, we inductively define a sequence $(v^{(n)})_{n\in\nn}$ of prefixes of $x$ by 
    \[
        v^{(0)} = \emptyw \quad \text{and} \quad  v^{(n+1)} = \theta_{x,v^{(n)}}(1)v^{(n)} \text{ for } n\in\nn.
    \]
    By \cref{l:durand derivation en chaine} once again, the sequence of infinite words defined by $x^{(n)} = D_{v^{(n)}}(x)$ satisfies $x^{(n+1)} = D_{1}(x^{(n)})$ for all $n \in\nn$. Moreover, if we set $\theta_0 = \theta_{x,\emptyw}$ and $\theta_{n+1} = \theta_{x^{(n)},1}$ for all $n$, then $\theta_{x,v^{(n)}} = \theta_0 \dots \theta_n$.
    
    Using~\cref{thm:durand_caclculable} and its notations, there exist computable $i<j$ such that $(\sigma_{v^{(i)}},\tau_{v^{(i)}}) = (\sigma_{v^{(j)}},\tau_{v^{(j)}})$, hence $x^{(i)} = x^{(j)}$.
    This inductively implies that, for all $m \geq i+1$, $x^{(m+j-i)} = x^{(m)}$ and $\theta_{m+j-i} = \theta_{m}$.
    It then suffices to define the sequence $(u^{(n)})_{n \in \nn}$ by $u^{(n)} = v^{(i+n(j-i))}$ for every $n$ and to consider $\psi = \theta_0 \cdots \theta_i$, $\alpha = \theta_{i+1} \cdots \theta_j$, and $\cC = \left\{1,\dots,\card\ret_x(v^{(j)})\right\}$.
\end{proof}

Observe that in the case of a purely substitutive shift space, i.e., $\tau = \id$, the morphisms $\tau_{v^{(i)}}$ in the above proof are also identity morphisms. This is the case for the next example.

\begin{example}\label{ex:unimodular-1}
    Let us illustrate the above construction with the primitive substitution $\sigma\from a \mapsto aab, b \mapsto acb, c \mapsto ba$ and its fixed point 
    \[
        x = aabaabacbaabaabacbaabbaacbaaba\cdots
    \]
    
    We start with $v^{(0)}=\emptyw$. The values of $\theta_0$ and $\sigma_{v^{(0)}}$ are simply
    \begin{align*}
        \theta_0 \from\displaysub{1\mapsto a \\ 2\mapsto b \\ 3\mapsto c}
        \quad \text{and}\quad
        \sigma_{v^{(0)}}\from\displaysub{1 \mapsto 112 \\ 2 \mapsto 132 \\ 3 \mapsto 21}.
    \end{align*}
    In particular $x^{(0)}$ is the image of $x$ under $a\mapsto 1, b\mapsto 2, c\mapsto 3$.

    Next we find $v^{(1)} = \theta_0(1)v^{(0)} =a$. The values of $\sigma_{v^{(1)}}$ and $\theta_1$ may be computed using the procedure described by Durand~\cite{Durand2012}:
    \begin{align*}
        \theta_1 \from \displaysub{1\mapsto 1 \\ 2\mapsto 12 \\ 3\mapsto 132 \\ 4\mapsto 122}
        \quad\text{and}\quad
        \sigma_{v^{(1)}} \from \displaysub{1\mapsto 12 \\ 2\mapsto 123 \\ 3\mapsto 1413 \\ 4\mapsto 1233}.
    \end{align*}

    The next value is $v^{(2)}=\theta_0\theta_1(1)v^{(1)} = aa$. Using the same procedure as before, we find
    \[
        \theta_2 \from \displaysub{ 1\mapsto 12 \\ 2\mapsto 123 \\  3\mapsto 14 \\  4\mapsto 13 \\  5\mapsto 1233 }
            \quad\text{and}\quad
        \sigma_{v^{(2)}} \from \displaysub{ 1\mapsto 12 \\  2\mapsto 1234 \\  3\mapsto 15 \\  4\mapsto 134 \\ 5\mapsto 123434 }.
    \]

    Finally, $v^{(2)}=\theta_0\theta_1\theta_2(1)v^{(2)} = aabaa$ and
    \begin{align*}
        \theta_3 \from \displaysub{ 1\mapsto 12 \\ 2\mapsto 1234 \\ 3\mapsto 15 \\ 4\mapsto 134 \\ 5\mapsto 123434 }
            \quad\text{and}\quad
        \sigma_{v^{(3)}} \from \displaysub{ 1\mapsto 12 \\ 2\mapsto 1234 \\ 3\mapsto 15 \\ 4\mapsto 134 \\ 5\mapsto 123434 }.
    \end{align*}
    
    In particular we see that $\sigma_{v^{(3)}} = \sigma_{v^{(2)}}$. Therefore $x^{(k)}=x^{(k+1)}$ for all $k\geq 2$, and thus $\theta_k = \theta_{k+1}$ for all $k\geq 3$. Consequently, for each $n\in\nn$, the set of return words to the prefix $v^{(2+n)}$ of $x$ is $\psi\alpha^n(\{1,\dots,5\})$, where
    \begin{align*}
        \psi=\theta_0\theta_1\theta_2 \from\displaysub{ 1\mapsto aab \\ 2\mapsto aabacb \\ 3\mapsto aabb \\ 4\mapsto aacb \\ 5\mapsto aabacbacb }
            \quad\text{and}\quad
        \alpha=\theta_3 \from \displaysub{ 1\mapsto 12 \\ 2\mapsto 1234 \\ 3\mapsto 15 \\ 4\mapsto 134 \\ 5\mapsto 123434 }.
    \end{align*}
\end{example}

\subsection{Eventual stability in finite groups}\label{ss:decidability for finite groups}

Observe that all shift spaces have eventually $\varphi$-stable return groups whenever the image of $\varphi$ is a finite group by \cref{thm:main}. We now show that, up to conjugacy, the subgroup over which the images of the return groups stabilize is computable. We deduce in particular that the question of \emph{$\varphi$-stability} in finite groups is decidable. These computability results are motivated by recent work on density of regular languages, where the computation of the stabilizing groups plays a key role~\cite{pre/Berthe2024}.
\begin{proposition}\label{P:H computable}
    Let $X$ be a minimal substitutive shift space generated by $(\sigma,\tau)$, where $\sigma\from \cB^* \to \cB^*$ is primitive and $\tau\from \cB^* \to \cA^*$ is letter-to-letter.
    Let $\varphi\from F_\cA \to G$ be a morphism onto a finite group $G$. One can:
    \begin{enumerate}
        \item compute a subgroup $H$ of $G$ over which the images under $\varphi$ of the return groups stabilize;
        \item decide whether $X$ has $\varphi$-stable return groups.
    \end{enumerate}
\end{proposition}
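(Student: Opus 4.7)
The plan is to leverage \cref{prop:durand} to reduce the problem to tracking a single orbit in a finite set of group morphisms. First, apply \cref{prop:durand} to produce computable substitutions $\alpha\from \cC^* \to \cC^*$ and $\psi\from \cC^* \to \cA^*$ together with a sequence $(u^{(n)})_{n\in\nn}$ of prefixes of arbitrarily increasing length satisfying $\gen{\ret_X(u^{(n)})} = \psi\alpha^n(F_\cC)$. Setting $H_n = \varphi\gen{\ret_X(u^{(n)})} = \varphi\psi\alpha^n(F_\cC)$, each $H_n$ is explicitly computable as the subgroup of $G$ generated by the finite list $\{\varphi\psi\alpha^n(c) \mid c \in \cC\}$.

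The next step is to show that the chain $(H_n)$ stabilizes and to exhibit an algorithm producing the limit. Two observations suffice. Since $u^{(n)}$ is a prefix of $u^{(n+1)}$, \cref{lem:return groups of factors} yields $H_{n+1} \leq H_n$, so the chain is decreasing. Moreover, the morphisms $\gamma_n = \varphi\psi\alpha^n$ lie in the finite set of group morphisms $F_\cC \to G$, and obey the recursion $\gamma_{n+1} = \gamma_n \circ \alpha$; hence the orbit of $\gamma_0$ under right-composition with $\alpha$ is eventually periodic. Consequently $(H_n)$ is eventually periodic and, being decreasing, eventually constant with some value $H$. To compute $H$, iterate, comparing each new $\gamma_n$ with all previous ones, until a repetition $\gamma_n = \gamma_m$ with $n < m$ is found, which must occur within at most $(\card G)^{\card \cC} + 1$ steps; then periodicity propagates the equality $\gamma_n = \gamma_m$ to all larger indices, while the decreasing property forces $H_k = H_n$ for every $k \in [n,m]$, so $H = H_n$.

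Finally, interpret $H$ and decide stability. Since $G$ is finite it is subgroup separable, so \cref{thm:main} guarantees that $X$ automatically has eventually $\varphi$-stable return groups, with $\varphi\gen{\ret_X(u)}$ conjugate to one fixed subgroup for all but finitely many $u$. As $|u^{(n)}| \to \infty$, this common subgroup is $H$ up to conjugacy, proving part~(i). For part~(ii), \cref{prop:finitely-many} characterizes $\varphi$-stability by the condition that $\varphi\gen{\ret_X(u)} = G$ for infinitely many $u \in \lang(X)$; since $(H_n)$ is decreasing with limit $H$, this is equivalent to $H = G$, which is decidable once $H$ has been computed. The only real subtlety is verifying that stabilization along the particular subsequence $(u^{(n)})$ genuinely reflects stabilization over all long enough factors of $X$, which is exactly what \cref{thm:main} supplies in the finite-group setting (using that in a finite group two conjugate subgroups have equal order, so a decreasing chain whose tail is conjugacy-constant is in fact constant).
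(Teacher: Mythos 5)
Your proposal is correct and follows essentially the same route as the paper: invoke \cref{prop:durand}, observe that the morphisms $\varphi\psi\alpha^n$ range over a finite set and hence become (eventually) periodic within $\card G^{\card\cC}$ steps, identify $H$ as the repeated value, and conclude via \cref{thm:main} and \cref{prop:finitely-many}. Your additional use of \cref{lem:return groups of factors} to make the chain $(H_n)$ decreasing is a nice touch that makes the stabilization at $H$ slightly more explicit, but it is not a different argument.
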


\begin{proof}
    We prove the first claim.
    By \cref{prop:durand}, there exist computable substitutions $\alpha\from \cC^*\to \cC^*$ and $\psi\from \cC^*\to \cA^*$, and a sequence $(u^{(n)})_{n \in \N} \in \lang(X)^\N$ of distinct words such that $\ret_X(u^{(n)}) = \psi \alpha^n(\cC)$.
    Since $G$ is a finite group (and $\cC$ is finite), there exist positive integers $n,p \leq \card G^{\card\cC}$ such that $\varphi \psi \alpha^n = \varphi \psi \alpha^{n+p}$.
    By induction, this implies that $\varphi \psi \alpha^{n+kp} = \varphi \psi \alpha^n$ for all $k \in \N$. In particular, if $H = \gen{\varphi \psi \alpha^n(\cC)}$ then, for every $k \in \N$, $\varphi\gen{\ret_X(u^{(n+kp)})} = H$. We conclude by \cref{item:infinitely many conjugate theorem} of \cref{thm:main}.
    The second claim directly follows as it suffices to check whether $H = G$.
\end{proof}

\begin{example}\label{ex:unimodular-2}
    Let us illustrate the procedure described in the proof of \cref{P:H computable} with the shift space $X$ generated by the substitution $\sigma\from a\mapsto aac, b\mapsto acb, c\mapsto ba$ from \cref{ex:unimodular-1}. Take the morphism $\varphi\from F_\cA\to S_3$ onto the symmetric group on 3 elements defined by $\varphi\from a\mapsto (1\:2\:3), b\mapsto(1\:2), c\mapsto(1\:2\:3)$ and let $\alpha$ and $\psi$ be as in \cref{ex:unimodular-1}. Then
    \begin{equation*}
        \varphi\psi \from \displaysub{ 1\mapsto (2\:3) \\ 2\mapsto \id \\ 3\mapsto (1\:3\:2) \\ 4\mapsto (1\:2) \\ 5 \mapsto (2\:3) }
        \text{and}\quad
        \varphi\psi\alpha \from \displaysub{ 1\mapsto (2\:3) \\ 2\mapsto \id \\ 3\mapsto \id \\ 4\mapsto \id \\ 5 \mapsto (2\:3) },
    \end{equation*}
    and straightforward computations show that $\varphi\psi\alpha=\varphi\psi\alpha^8$. Hence the images of the return groups of $X$ under $\varphi$ stabilize on conjugates of $\gen{\varphi\psi\alpha(\cC)} = \gen{(2\:3)}$. Note that in this case the computation of $\varphi\psi\alpha$ was already enough to conclude that $X$ does not have $\varphi$-stable return groups.
\end{example}

\subsection{Eventual stability in free groups}

In some particular cases, the substitution $\psi$ of~\cref{prop:durand} is the identity, i.e., we obtain an infinite sequence of words whose return groups are given by $\alpha^n(F_\cC)$. 
More generally, we sometimes have a group morphism $\varphi\from F_\cA \to F_\cA$ and a finitely generated subgroup $H \leq F_\cA$ such that, for all $n \ge 0$, $\varphi^n(H)$ is the return group for some well-chosen $u^{(n)} \in \lang(X)$ (the $u^{(n)}$'s being distinct).
In these cases, we also obtain the decidability of (eventual) stability of return groups.

\begin{lemma}\label{l:decreasing-chain}
    Let $\varphi\from G \to G$ be a group endomorphism and let $H \leq G$. If $\varphi$ is injective over $H$ and $\varphi(H) < H$, then the subgroups $\varphi^n(H)$ form an infinite decreasing chain.
\end{lemma}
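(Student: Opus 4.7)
The plan is to establish the strict decrease by assuming equality somewhere along the chain and pulling back through the injectivity hypothesis until we contradict $\varphi(H) < H$.

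First I would observe that the chain is weakly decreasing: applying $\varphi^n$ to the inclusion $\varphi(H) \leq H$ yields $\varphi^{n+1}(H) \leq \varphi^n(H)$. A short induction, using $\varphi(H) \leq H$ at each step, also shows that $\varphi^n(H) \leq H$ for every $n$. In particular, since $\varphi$ is injective on $H$ by hypothesis, it is injective on each subgroup $\varphi^n(H)$, and therefore the iterate $\varphi^n$ is injective on $H$.

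The key step is the following: suppose for contradiction that $\varphi^n(H) = \varphi^{n+1}(H)$ for some $n \geq 0$. Given any $h \in H$, we have $\varphi^n(h) \in \varphi^n(H) = \varphi^{n+1}(H)$, so there exists $h' \in H$ with $\varphi^n(h) = \varphi^{n+1}(h') = \varphi^n(\varphi(h'))$. Since $\varphi(h')$ lies in $\varphi(H) \leq H$ and $\varphi^n$ is injective on $H$, this forces $h = \varphi(h') \in \varphi(H)$. Thus $H \leq \varphi(H)$, contradicting the assumption $\varphi(H) < H$. Consequently, all inclusions $\varphi^{n+1}(H) \leq \varphi^n(H)$ are strict, giving the desired infinite decreasing chain.

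I do not anticipate any real obstacle; the only subtle point is to ensure that injectivity of $\varphi$ on $H$ propagates to injectivity of $\varphi^n$ on $H$, which is immediate once one notes that all intermediate images remain inside $H$.
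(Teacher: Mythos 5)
Your proof is correct and follows essentially the same strategy as the paper's: assume $\varphi^n(H)=\varphi^{n+1}(H)$ and use injectivity of $\varphi$ on $H$ (together with the fact that all iterated images stay inside $H$) to cancel applications of $\varphi$ and reach a contradiction with $\varphi(H)<H$. The only cosmetic difference is that you cancel all $n$ applications at once to conclude $H\leq\varphi(H)$ directly, whereas the paper takes $n$ minimal and cancels a single application to contradict minimality; both descents are equivalent.
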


\begin{proof}
    Towards a contradition, suppose that $\varphi^n(H) = \varphi^{n+1}(H)$, where $n$ is minimal. Note that $n>0$ as $\varphi(H) < H$. Fix $x\in H$. Then, as $\varphi^n(H)=\varphi^{n+1}(H)$, there exists $y\in H$ such that $\varphi^{n+1}(y) = \varphi^n(x)$. By injectivity, $\varphi^n(y) = \varphi^{n-1}(x)$ as $\varphi^n(y), \varphi^{n-1}(x) \in H$. Since this holds for every $x\in H$, we conclude that $\varphi^{n-1}(H) = \varphi^{n}(H)$, a contradiction to the minimality of $n$.
\end{proof}

The following result relies heavily on algorithmic techniques detailed in \cite{Kapovich2002}, which are based on \emph{Stallings foldings} introduced in Stallings' seminal paper \cite{Stallings1991}.

\begin{lemma}\label{L:decidability for pure morphic}
    Let $X$ be a minimal shift space over $\cA$, let $\varphi\from F_\cA \to F_\cA$ be a group endomorphism, and let $S \subseteq F_\cA$ be a finite subset such that $\gen{\varphi(S)} \leq \gen{S}$. 
    If there exists pairwise distinct words $u^{(n)}\in\lang(X)$ such that $\gen{\ret_X(u^{(n)})} = \gen{\varphi^{n}(S)}$ for all $n \in \N$,
    then one can (from $\varphi$ and $S$):
    \begin{enumerate}
        \item decide whether $X$ has eventually stable return groups;
        \item then compute the subgroup over which the return groups stabilize;
        \item decide whether $X$ has stable return groups.
    \end{enumerate}
\end{lemma}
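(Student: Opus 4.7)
The plan is to track the subgroups $H_n := \gen{\varphi^n(S)} = \gen{\ret_X(u^{(n)})}$ using Stallings foldings. The hypothesis $\gen{\varphi(S)} \leq \gen{S}$ iterates to $H_{n+1} = \varphi(H_n) \leq H_n$, so $(H_n)_{n \in \N}$ is a decreasing chain of finitely generated subgroups of $F_\cA$. Since $F_\cA$ is subgroup separable, \cref{thm:main} implies that $X$ has eventually stable return groups if and only if this chain is eventually constant, the infinitely many pairwise distinct $u^{(n)}$ supplying the ``infinitely many words'' in item~(iv) of that theorem.

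To decide stabilization, I iteratively compute the Stallings graphs of $H_0, H_1, H_2, \dots$ using the algorithms from~\cite{Kapovich2002, Stallings1991}. The key observation is that $\mathrm{rank}(H_n)$ is non-increasing in $n$: as $H_{n+1} = \varphi(H_n)$ is generated by the images of a basis of $H_n$, it has rank at most $\mathrm{rank}(H_n)$. The rank sequence therefore admits at most $\mathrm{rank}(\gen{S})$ strict drops, so the first index $N$ with $\mathrm{rank}(H_N) = \mathrm{rank}(H_{N+1})$ can be computed effectively. At such an $N$, the restriction $\varphi|_{H_N}$ is forced to be injective: a basis of $H_N$ maps to a generating set of $H_{N+1}$ of size exactly $\mathrm{rank}(H_{N+1})$, and since free groups of finite rank are Hopfian, any such generating set is in fact a basis.

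With injectivity of $\varphi|_{H_N}$ in hand, \cref{l:decreasing-chain} yields the following dichotomy, testable by comparing the two Stallings graphs of $H_N$ and $H_{N+1}$: either $H_N = H_{N+1}$, and then $H_m = H_N$ for all $m \geq N$, so $X$ has eventually stable return groups with stabilizing subgroup $H_N$; or $H_{N+1} < H_N$ strictly, and then $(\varphi^m(H_N))_m$ is a strictly decreasing chain forever, so $(H_n)$ never stabilizes and $X$ fails to have eventually stable return groups. This settles items~(i) and~(ii) at once. For item~(iii), \cref{prop:finitely-many} reduces stability to the existence of infinitely many $u \in \lang(X)$ with $\gen{\ret_X(u)} = F_\cA$; since the $u^{(n)}$ are pairwise distinct and $(H_n)$ is decreasing inside $F_\cA$, this holds if and only if $H_N = F_\cA$, an equality again decidable on Stallings graphs.

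The main technical ingredient is Stallings folding technology, which renders every operation on finitely generated subgroups of $F_\cA$ effective --- building the Stallings graph from a generating set, reading off its rank, and testing equality of two subgroups. The main conceptual step, and the closest thing to an obstacle, is recognizing that monotonicity of the rank together with \cref{l:decreasing-chain} confines the search to a computable number of iterations, after which the eventual behavior of the chain is entirely determined.
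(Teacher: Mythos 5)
Your treatment of items (i) and (ii) is essentially the paper's own proof: compute the subgroups $H_n=\gen{\varphi^n(S)}$ via Stallings foldings, use the monotonicity of the rank to locate a computable index $N\leq\card S$ at which $\varphi$ becomes injective on $H_N$, and then convert the decidable dichotomy $H_{N+1}=H_N$ versus $H_{N+1}<H_N$ into eventual stability versus its failure by means of \cref{l:decreasing-chain} and \cref{thm:main}. That part is correct and matches the paper step for step.

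Your item (iii) contains a genuine slip. The biconditional ``there are infinitely many $u\in\lang(X)$ with $\gen{\ret_X(u)}=F_\cA$ if and only if $H_N=F_\cA$'' is justified only in one direction. Nothing in the hypotheses rules out the situation where $H_N=F_\cA$ while $\varphi(F_\cA)$ is a \emph{proper} subgroup of the same rank (an injective, non-surjective endomorphism); in that case the chain $(H_n)_n$ is strictly decreasing, $X$ is not even eventually stable --- hence not stable by \cref{prop:finitely-many} --- and yet your test reports stability. The equality $H_N=F_\cA$ only produces the finitely many words $u^{(0)},\dots,u^{(N)}$ with full return group; to get infinitely many you need the chain to be eventually constant at $F_\cA$. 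The repair is immediate and is what the paper does: first settle eventual stability as in item (i), and only in the affirmative case compare the stabilizing subgroup $H_N$ (equivalently, test $H_{N+1}=F_\cA$) against $F_\cA$ on Stallings graphs.
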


\begin{proof}
    Before studying the sequence $\gen{\varphi^n(S)}$, let us recall the following generalities on finitely generated subgroups. Let $T$ be a finite subset of $F_\cA$. Then one can compute the rank of $\gen{T}$ using \cite[Proposition~7.1]{Kapovich2002} and \cite[Proposition~8.2]{Kapovich2002} successively. Moreover, $\varphi$ is injective over $\gen{T}$ if and only if $\gen{T}$ and $\varphi\gen{T} = \gen{\varphi(T)}$ have the same rank, and otherwise the rank can only decrease when applying $\varphi$.
    Applied to the sequence $(\gen{\varphi^n(S)})_{n \in \N}$, this leads to the existence of a computable $k \leq \card S$ such that $\sigma$ is injective over $\gen{\varphi^k(S)}$.

    Since $\gen{\varphi(S)} \leq \gen{S}$, we have $\gen{\varphi^{k+1}(S)} \leq \gen{\varphi^k(S)}$.
    To decide whether this is an equality, it then suffices to determine whether $\varphi^k(s) \in \gen{\varphi^{k+1}(S)}$ for all $s \in S$. This can be done using \cite[Proposition~7.1]{Kapovich2002} and \cite[Proposition~7.2]{Kapovich2002}.
    \begin{itemize}
    \item If $\gen{\varphi^{k+1}(S)} = \gen{\varphi^k(S)}$ then, by induction, $\gen{\ret_X(u^{(n)})} = \gen{\varphi^n(S)} = \gen{\varphi^k(S)}$ for all $n \geq k$. By \cref{item:infinitely many conjugate theorem} of \cref{thm:main}, this implies that $X$ has eventually stable return groups, and $\gen{\ret_X(u)}$ is conjugate to $\gen{\varphi^k(S)}$ for any long enough $u \in \lang(X)$.
    \item If $\gen{\varphi^{k+1}(S)} < \gen{\varphi^k(S)}$, then by \cref{l:decreasing-chain}, the subgroups $\gen{\varphi^n(S)}$ form a decreasing chain and $X$ does not have stable return groups by \cref{item:finite return groups} of \cref{thm:main}. 
    \end{itemize}
    This ends the proof of the first claim. Notice that, if $X$ has eventually stable return groups, the return groups then stabilize over $\gen{\varphi^k(S)}$ which is computable. As in \cref{P:H computable}, we can then decide whether $X$ has stable return groups.
\end{proof}

\begin{proposition}\label{P:decidable for derivating and bifix}
    Let $X$ be a shift space generated by a primitive substitution $\sigma$. If $\sigma$ is either a derivating substitution or a bifix substitution, then one can:
    \begin{enumerate}
        \item decide whether $X$ has eventually stable return groups;
        \item compute the subgroup over which the return groups stabilize if they do;
        \item decide whether $X$ has stable return groups.
    \end{enumerate}
\end{proposition}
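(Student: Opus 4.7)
The strategy is to apply \cref{L:decidability for pure morphic} in each of the two cases. For this, it is enough to exhibit an endomorphism $\varphi$ of $F_\cA$, a finite set $S \subseteq F_\cA$ satisfying $\gen{\varphi(S)} \leq \gen{S}$, and pairwise distinct words $u^{(n)} \in \lang(X)$ for which $\gen{\ret_X(u^{(n)})} = \gen{\varphi^n(S)}$.

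For the derivating case, I would first reduce, up to replacing $\sigma$ by an appropriate power and relabeling letters, to the situation where $\sigma$ has a fixed point $y$ and $\sigma = \theta_{y,u_0}$ for some prefix $u_0$ of $y$. Since $\sigma$ is injective on $\cA^*$ and $\sigma(y) = y = \sigma(D_{u_0}(y))$, this forces $D_{u_0}(y) = y$. I then define inductively $u^{(0)} = u_0$ and $u^{(n+1)} = \sigma^{n+1}(u_0)\,u^{(n)}$; iterating \cref{l:durand derivation en chaine} yields $D_{u^{(n)}}(y) = y$ and $\theta_{y,u^{(n)}} = \sigma^{n+1}$, hence $\ret_X(u^{(n)}) = \sigma^{n+1}(\cA)$. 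Then \cref{L:decidability for pure morphic} applies with $\varphi = \sigma$ and $S = \sigma(\cA) \subseteq F_\cA$: the inclusion $\gen{\varphi(S)} \leq \gen{S}$ reduces to $\sigma^2(F_\cA) \leq \sigma(F_\cA)$, and distinctness of the $u^{(n)}$ is immediate from $|u^{(n+1)}| > |u^{(n)}|$.

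For the bifix case, the key ingredient is the identity $\ret_X(\sigma(w)) = \sigma(\ret_X(w))$ valid for every nonempty $w \in \lang(X)$. Its proof uses the synchronization afforded by $\sigma(\cA)$ being a bifix code: any occurrence of $\sigma(w)$ inside an element of $\sigma(\lang(X))$ must respect the unique left-to-right (and right-to-left) parsing into $\sigma$-images of letters, and therefore corresponds to an occurrence of $w$ in the underlying word of $\lang(X)$. Granted this identity, I fix any letter $a \in \cA$ and set $u^{(n)} = \sigma^n(a)$, $\varphi = \sigma$, and $S = \ret_X(a)$. Then $\gen{\ret_X(u^{(n)})} = \gen{\sigma^n(\ret_X(a))} = \gen{\varphi^n(S)}$, and $\gen{\varphi(S)} = \gen{\ret_X(\sigma(a))} \leq \gen{\ret_X(a)} = \gen{S}$ by \cref{lem:return groups of factors}. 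By primitivity, $|u^{(n)}| \to \infty$, so the $u^{(n)}$ are eventually distinct, and \cref{L:decidability for pure morphic} concludes.

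The main obstacle is establishing the identity $\ret_X(\sigma(w)) = \sigma(\ret_X(w))$ rigorously in the bifix case. The inclusion $\sigma(\ret_X(w)) \subseteq \ret_X(\sigma(w))$ amounts to showing that, for $r \in \ret_X(w)$, the word $\sigma(rw)$ contains $\sigma(w)$ only as its prefix and suffix; bifixity is used here to exclude interior occurrences by forcing each occurrence of $\sigma(w)$ to align with a $\sigma$-block. The reverse inclusion is more delicate: every return word to $\sigma(w)$ in $\lang(X)$ must be shown to desubstitute to a return word to $w$, again relying on the unique parsing provided by bifixity. Particular care must be taken with the boundary behavior at the endpoints of the factor under analysis.
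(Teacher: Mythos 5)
Your treatment of the derivating case is essentially the paper's argument (the paper takes $u^{(0)}=\varepsilon$ and $S=\cA$ rather than $u^{(0)}=u_0$ and $S=\sigma(\cA)$, but this is only a shift of index), so that part is fine.

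The bifix case, however, has a genuine gap. The identity $\ret_X(\sigma(w)) = \sigma(\ret_X(w))$ does \emph{not} hold for every nonempty $w\in\lang(X)$; it only holds for $w$ of length at least a computable \emph{preservation constant} $K$, which is the content of the cited result \cite[Theorem~1]{Berthe2023} used in the paper. Your justification --- that bifixity of the code $\sigma(\cA)$ forces every factor occurrence of $\sigma(w)$ in $\sigma(\lang(X))$ to align with the block decomposition --- is false: bifixity gives unique factorization of words that \emph{belong} to $\sigma(\cA)^*$, not alignment of arbitrary factor occurrences. Alignment is a recognizability phenomenon (Mossé), available only beyond a synchronization length. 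Concretely, for the bifix primitive aperiodic substitution $\sigma\from a\mapsto aab,\ b\mapsto acb,\ c\mapsto ba$ of \cref{eg:non-stable}, the word $ba=\sigma(c)$ occurs straddling two consecutive blocks $\sigma(a)\sigma(a)=aab\cdot aab$, so $\ret_X(ba)$ contains words of length $3$ while every element of $\sigma(\ret_X(c))$ has length at least $20$; hence $\ret_X(\sigma(c))\neq\sigma(\ret_X(c))$. Because the correct identity requires aperiodicity, the periodic case must also be decided and treated separately (as the paper does via \cite{Pansiot1986,Harju1986}), which your proposal omits.

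There is a second, related problem with your choice $S=\ret_X(a)$, $\varphi=\sigma$: \cref{L:decidability for pure morphic} needs the genuine inclusion $\gen{\varphi(S)}\leq\gen{S}$, but \cref{lem:return groups of factors} only yields $\gen{\ret_X(\sigma(a))}\leq\gen{\ret_X(a)}$ up to conjugation unless $a$ is a \emph{prefix} of $\sigma(a)$, which need not hold for an arbitrary letter. The paper resolves both issues at once: it takes a length-$K$ word $u$ that is a prefix of $\lim_n\sigma^{kn}(a)$ for some computable $k\leq\card\cA$ with $\sigma^k(a)\in a\cA^+$ (such $a$ and $k$ exist by primitivity), so that $u$ is a prefix of $\sigma^k(u)$, $\ret_X(\sigma^{kn}(u))=\sigma^{kn}(\ret_X(u))$ for all $n$, and $\gen{\sigma^k(S)}\leq\gen{S}$ holds without conjugation; it then applies \cref{L:decidability for pure morphic} with $\varphi=\sigma^k$.
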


\begin{proof}
    Assume first that $\sigma\from \cA^* \to \cA^*$ is a derivating substitution. Then there exists a computable $u \in \lang(X)$ such that $\sigma$ is a derivating substitution for $u$, and if $(u^{(n)})_{n\in \N} \in \lang(X)^\N$ satisfies $u_0 = \varepsilon$ and $u^{(n+1)} = \sigma(u^{(n)})u$, we have $\ret_X(u^{(n)}) = \sigma^n(\cA)$ by induction on \cref{l:durand derivation en chaine}. Moreover, we clearly have $\gen{\sigma(\cA)} \leq \gen{\cA}$. We then use \cref{L:decidability for pure morphic} to conclude.

    Next assume that $\sigma$ is bifix. Observe that periodicity of the shift space generated by $\sigma$ is decidable \cite{Pansiot1986,Harju1986} and that the problem becomes trivial in that case: $X$ then has eventually stable return groups, the stabilizing subgroup is generated by a unique word (the period), and in particular, $X$ has stable return group only when it is over a unary alphabet. Hence, we may assume moving forward that $X$ is aperiodic.
    
    By~\cite[Theorem 1]{Berthe2023} and \cite[Proposition 3]{Berthe2023}, there exists a computable positive integer $K$ (which depends on $\sigma$) such that for all $w \in \lang_{\geq K}(X)$, $\ret_X(\sigma(w)) = \sigma(\ret_X(w))$. In particular, for any $u \in \lang_K(X)$, the set $S = \ret_X(u)$ satisfies $\ret_X(\sigma^n(u)) = \sigma^n(S)$ for all $n\in\nn$. Let us show that we can chose $u$ such that it is a prefix of its image under $\sigma^k$ for some $k$. This will then imply that $\gen{\sigma^k(S)} \leq \gen{S}$ by \cref{lem:return groups of factors}. As $\sigma$ is primitive, there exist computable $k \leq \card\cA$ and $a \in \cA$ such that $\sigma^k(a) \in a\cA^+$. We then take $u$ as the length-$K$ prefix of $\lim_n \sigma^{kn}(a)$. Note that $S = \ret_X(u)$ is computable by \cref{thm:durand_caclculable} so we can use \cref{L:decidability for pure morphic} with $\varphi = \sigma^k$ to conclude.
\end{proof}

Before moving on to the next example, let us briefly give further details for the computation of the constant $K$ from \cite[Theorem~1]{Berthe2023} which appears in the above proof. Moving forward we call $K$ (whenever it exists) the \emph{preservation constant}.

Since the preservation constant is related to \emph{synchronization}, we first recall useful terminology taken from \cite{th/Kyriakoglou2019}. Let $\sigma$ be a primitive substitution and let $X$ be the shift space generated by $\sigma$. Given a word $u\in\lang(X)$, an \emph{interpretation} of $u$ is a triple $(p,w,s)$ such that $w\in\lang(X)$ and $\sigma(w) = pus$. We say that an interpretation $(p,w,s)$ \emph{passes by} a factorization $u = u'u''$ if $w = w'w''$ with $\sigma(w') = pu'$ and $\sigma(w'')=u''s$. We say that $u$ is \emph{synchronized} if there is a factorization $u= u'u''$ by which every interpretation of $u$ must pass. Assume that $X$ is aperiodic. Then Mossé's recognizability theorem \cite{Mosse1992} is equivalent to the existence of a \emph{synchronizing constant} $L>0$ such that every word of $\lang_{\geq L}(X)$ is synchronized~\cite[Proposition 3.3.20]{th/Kyriakoglou2019}. If we set
\begin{equation*}
    |\sigma| = \max\{|\sigma(a)| \mid a\in A\},\quad \langle\sigma\rangle =\min\{|\sigma(a)| \mid a\in A\},
\end{equation*}
then, by \cite[Lemma 3.2]{Durand1998}, there is a constant $M$ such that 
\begin{equation*}
    \min\{|r| \mid r\in \ret_X(u)\}\geq |\sigma| \lceil L/\langle\sigma\rangle\rceil
\end{equation*}
for every $u\in \lang_{\geq M}(X)$. Finally, \cite[Theorem~1]{Berthe2023} states that if $\sigma$ is bifix (in addition to being primitive and aperiodic), then it has a preservation constant of $K = \max\{M, 2\lceil L/\langle\sigma\rangle\rceil\}$. Or in other words, it satisfies $\sigma(\ret_X(u)) = \ret_X(\sigma(u))$ for every word $u$ of length at least $K = \max\{M, 2\lceil L/\langle\sigma\rangle\rceil\}$. Note that the constants $L$ and $M$, and thus also the constant $K$, are computable~\cite[Proposition~3]{Berthe2023}.

\begin{example}\label{eg:non-stable}
    Let us continue with the shift space $X$ generated by $\sigma\from a\mapsto aab, b\mapsto acb, c\mapsto ba$ and show that it does not have eventually stable return groups. Observe that $\sigma$ is bifix, primitive and aperiodic. Moreover, $\sigma$ is injective on $F_\cA$, and therefore on any of its subgroups.
    
    Let us compute the constant $K$ as described above. First we observe that all words of length $L=4$ in $\lang(X)$ are synchronized; here are synchronizing factorizations for each of them:
    \begin{gather*}
        cb\cdot ac,\quad
        a\cdot acb,\quad
        ba\cdot ac,\quad
        acb\cdot a,\quad
        cb\cdot aa,\quad 
        aab\cdot a,\quad 
        ab\cdot aa,\quad \\
        b\cdot baa,\quad 
        aab\cdot b,\quad 
        ab\cdot ba,\quad 
        b\cdot aab,\quad 
        b\cdot acb,\quad 
        ab\cdot ac.
    \end{gather*} 
    Hence we find that $\lceil L/\langle\sigma\rangle\rceil=2$, $M=6$ and
    \begin{equation*}
        K = \max(6,2) = 6.
    \end{equation*}
    
    Take for instance $u = aabaab$. The set of return words to $u$ is given by
    \begin{align*}
            \ret_X(u) = \{ & aabaabacb, aabaabbaacb, aabaabacbacb, \\
                           & aabaabacbaabbaacb, aabaabacbaabbaacbaabbaacb\}.
    \end{align*}
    Since $u$ is a prefix of $\sigma(u)$, it follows that $\sigma\gen{\ret_X(u)}$ is a subgroup of $\gen{\ret_X(u)}$. Moreover, one can check that $aabaabbaacb\notin \sigma\gen{\ret_X(u)}$, and thus $\sigma\gen{\ret_X(u)} < \gen{\ret_X(u)}$. Therefore we can apply \cref{L:decidability for pure morphic} with $u^{(n)} = \sigma^n(u)$ to conclude that $X$ does not have eventually stable return groups.
\end{example}

\section{Closure properties}\label{S:closure properties}

This section is devoted to the study of the behavior of (eventual) stability of the return groups under two fundamental operations: derivation and application of a substitution.

\subsection{With respect to derivation}

Derivation being closely related to return words, it is a natural candidate when considering preservation of (eventual) stability of the return groups. 

\begin{proposition}\label{P:stable preserved by derivation}
    Let $X$ be a minimal shift space over $\cA$, let $\varphi\from F_\cA \to G$ be a group morphism, and let $u \in \lang(X)$. If $X$ has eventually $\varphi$-stable return groups of threshold $M$, then $D_u(X)$ has eventually $\varphi\theta_u$-stable return groups of threshold at most $\max\{0,M - |u|\}$, where $\theta_u$ is the associated derivating substitution for $u$.

    In particular, if $X$ has $\varphi$-stable return groups, then $D_u(X)$ has $\varphi\theta_u$-stable return groups.
\end{proposition}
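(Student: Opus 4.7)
The plan is to leverage the fundamental identity from \cref{l:durand derivation en chaine}, namely $\theta_u(\ret_{D_u(X)}(w)) = \ret_X(\theta_u(w)u)$ for all $w \in \lang(D_u(X))$, and then transfer the stability threshold from $X$ down to $D_u(X)$ by carefully tracking lengths under $\theta_u$.

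First I would observe that, since $\theta_u\from \cB^* \to \cA^*$ lifts to a group morphism $F_\cB \to F_\cA$ and preserves generating sets, the identity of \cref{l:durand derivation en chaine} gives
\begin{equation*}
    \varphi\theta_u\gen{\ret_{D_u(X)}(w)} = \varphi\gen{\ret_X(\theta_u(w)u)}
\end{equation*}
for every $w \in \lang(D_u(X))$. This is the bridge between stability on $D_u(X)$ (viewed through $\varphi\theta_u$) and stability on $X$ (viewed through $\varphi$).

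Next I would set $M' = \max\{0, M - |u|\}$ and take an arbitrary $w_0 \in \lang_{M'}(D_u(X))$ together with $v \in w_0\cB^* \cap \lang(D_u(X))$. The key length estimate is that every return word in $\ret_X(u)$ is non-empty, so $\theta_u$ maps each letter of $\cB$ to a word of length at least one; consequently $|\theta_u(w_0)| \geq |w_0| = M'$, and hence $|\theta_u(w_0)u| \geq M' + |u| \geq M$. Moreover, since $w_0$ is a prefix of $v$, the word $\theta_u(w_0)u$ is a prefix of $\theta_u(v)u$. Let $w$ be the length-$M$ prefix of $\theta_u(w_0)u$, which is therefore also a prefix of $\theta_u(v)u$. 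Applying eventual $\varphi$-stability of $X$ (of threshold $M$) twice gives
\begin{equation*}
    \varphi\gen{\ret_X(\theta_u(w_0)u)} = \varphi\gen{\ret_X(w)} = \varphi\gen{\ret_X(\theta_u(v)u)}.
\end{equation*}
Combining this with the bridging identity yields $\varphi\theta_u\gen{\ret_{D_u(X)}(w_0)} = \varphi\theta_u\gen{\ret_{D_u(X)}(v)}$, which is exactly eventual $\varphi\theta_u$-stability of $D_u(X)$ with threshold at most $M'$.

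The ``in particular'' statement is then immediate: if $M = 0$ then $M' = 0$. I do not expect a serious obstacle here; the only delicate point is the length bookkeeping, specifically ensuring $|\theta_u(w_0)u| \geq M$, which relies on the fact that return words have length at least one, so that $\theta_u$ is non-decreasing in length.
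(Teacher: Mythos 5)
Your proof is correct and follows essentially the same route as the paper: both rest on the identity $\varphi\theta_u\gen{\ret_{D_u(X)}(w)} = \varphi\gen{\ret_X(\theta_u(w)u)}$ from \cref{l:durand derivation en chaine} together with the length bound $|\theta_u(w)u|\geq |w|+|u|$. You merely spell out the prefix bookkeeping (that $\theta_u(w_0)u$ is a prefix of $\theta_u(v)u$, so stability in $X$ applies to both) that the paper leaves implicit.
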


\begin{proof}
    By \cref{l:durand derivation en chaine}, $\varphi\theta_u\gen{\ret_{D_u(X)}(w)} = \varphi\gen{\ret_X(\theta_u(w)u)}$ for all $w \in \lang(D_u(X))$. Therefore, if $K$ satisfies $(|w| \geq K \implies |\theta_u(w)u| \geq M)$, then $D_u(X)$ has eventually $\varphi\theta_u$-stable return groups of threshold at most $K$. The conclusion follows since one can take $K = \max\{0,M - |u|\}$.
\end{proof}

Consequently, if $X$ has eventually $\varphi$-stable return groups of threshold $M$, then $D_u(X)$ has eventually $\psi$-stable return groups of threshold at most $\max\{0,M-|u|\}$ for all $\psi\from F_\cB \to H$ such that $\ker(\varphi\theta_u) \leq \ker(\psi)$. Indeed, $\eta = \psi(\varphi\theta_u)^{-1}\from G \to H$ is then a well-defined group morphism, and $\psi = \eta \varphi\theta_u$. Since $\ker(\theta_u)$ is trivial precisely when $\ret_X(u)$ is a free subset of $F_\cA$, we deduce the following corollary.

\begin{corollary} \label{C:universal+free is stable}
    Let $X$ be a minimal shift space over $\cA$ and let $u \in \lang(X)$ such that $\ret_X(u)$ is a free subset of $F_\cA$. If $X$ has eventually stable return groups of threshold $M$, then $D_u(X)$ has eventually stable return groups of threshold at most $\max\{0,M - |u|\}$.

    In particular, if $X$ has stable return groups, then $D_u(X)$ has stable return groups.
\end{corollary}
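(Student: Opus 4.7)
The plan is to specialize Proposition \ref{P:stable preserved by derivation} to $\varphi = \id_{F_\cA}$ and then strip off the leftover $\theta_u$ from the stability conclusion, using the fact that $\theta_u$ is injective as a group morphism on $F_\cB$ precisely when $\ret_X(u)$ is free. This is effectively the argument already sketched in the paragraph immediately preceding the corollary; my proof will just instantiate that discussion.

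Concretely, I would first apply Proposition \ref{P:stable preserved by derivation} with $\varphi = \id_{F_\cA}$. This yields that $D_u(X)$ has eventually $\theta_u$-stable return groups of threshold at most $K := \max\{0,M-|u|\}$: for every $w\in\lang_K(D_u(X))$ and every $u'\in w\cB^*\cap\lang(D_u(X))$, we have
\[
    \theta_u\gen{\ret_{D_u(X)}(u')} \;=\; \theta_u\gen{\ret_{D_u(X)}(w)}.
\]
Next, I would argue that the group morphism $\theta_u\from F_\cB\to F_\cA$ induced by the derivating substitution is injective. By construction, $\theta_u$ restricts to a bijection between $\cB$ and $\ret_X(u)$, and the universal property of $F_\cB$ together with the definition of freeness tells us that this extension is injective exactly when $\ret_X(u)$ is a free subset of $F_\cA$, which is our hypothesis. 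This is the case $\psi = \id_{F_\cB}$ of the kernel remark preceding the statement, since the required inclusion $\ker(\theta_u)\leq\ker(\id_{F_\cB}) = \{1\}$ is precisely injectivity.

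Finally, I would cancel $\theta_u$ on both sides of the displayed equality: if $\theta_u(H_1) = \theta_u(H_2)$ for two subgroups $H_1,H_2\leq F_\cB$, then taking preimages and using injectivity forces $H_1 = H_2$. Applied to the two return groups above, this gives $\gen{\ret_{D_u(X)}(u')} = \gen{\ret_{D_u(X)}(w)}$, i.e., eventual stability of $D_u(X)$ with threshold at most $K$, as required. The ``in particular'' statement is the case $M = 0$. I do not anticipate any real obstacle, since the substantive work is packaged in Proposition \ref{P:stable preserved by derivation} and the kernel remark following it; the only genuinely new ingredient is the elementary observation that freeness of $\ret_X(u)$ is the same as injectivity of $\theta_u$ on the free group $F_\cB$.
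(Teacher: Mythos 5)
Your proposal is correct and follows essentially the same route as the paper: the paper deduces the corollary from \cref{P:stable preserved by derivation} together with the remark that $\ker(\theta_u)$ is trivial exactly when $\ret_X(u)$ is free, which is precisely your instantiation with $\varphi=\id_{F_\cA}$ and $\psi=\id_{F_\cB}$. Your explicit cancellation of the injective $\theta_u$ on the two return groups is just a direct unwinding of the paper's factorization argument, and the ``in particular'' case $M=0$ is handled identically.
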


Unfortunately, the hypothesis that $\ret_X(u)$ is a free subset of $F_\cA$ is needed: in general, if $X$ has eventually stable return groups, then $D_u(X)$ might not have eventually stable return groups as shown by the following result.

\begin{proposition}\label{p:contre-exemple}
    Let $\sigma\from \{a,b,c,d\}^* \to \{a,b,c,d\}^*$ be the substitution defined by
    \begin{equation*}
        \sigma\from \displaysub{ a\mapsto baa \\ b\mapsto ca \\ c\mapsto bad \\ d\mapsto acd }
    \end{equation*}
    and let $X$ be the shift space generated by $\sigma$. Then
    \begin{enumerate}
        \item $X$ has stable return groups;
            \label{i:contre-exemple-stable}
        \item $D_b(X)$ does not have eventually stable return groups.
            \label{i:contre-exemple-instable}
    \end{enumerate}
\end{proposition}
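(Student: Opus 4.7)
For part (i), my plan is to invoke \cref{P:decidable for derivating and bifix}. First I would check that $\sigma$ is primitive, aperiodic, and bifix: the image set $\{baa, ca, bad, acd\}$ is by inspection both a prefix code and a suffix code. Second, I would verify that $\sigma$ is an automorphism of $F_\cA$: as a bifix code, $\sigma(\cA)$ freely generates a rank-$4$ subgroup of $F_\cA$, so $\sigma\from F_\cA \to F_\cA$ is injective, and a rank-$4$ subgroup of $F_4$ has index $1$ by Schreier's formula, hence $\sigma(F_\cA) = F_\cA$. (As a sanity check, the abelianization matrix
\[
    M_\sigma = \begin{pmatrix} 2 & 1 & 1 & 1 \\ 1 & 0 & 1 & 0 \\ 0 & 1 & 0 & 1 \\ 0 & 0 & 1 & 1 \end{pmatrix}
\]
has determinant $\pm 1$.)

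Next, the bifix preservation property of~\cite{Berthe2023} used in the proof of \cref{P:decidable for derivating and bifix} yields a computable constant $K$ such that $\sigma(\ret_X(u)) = \ret_X(\sigma(u))$ for every $u \in \lang_{\geq K}(X)$; combined with $\sigma$ being an automorphism, iteration gives $\gen{\ret_X(\sigma^n(u))} = \sigma^n\gen{\ret_X(u)}$ for all $n$. It then suffices to exhibit one $u_0 \in \lang_{\geq K}(X)$ with $\gen{\ret_X(u_0)} = F_\cA$, done by listing the return words to $u_0$ and checking via a rank-$4$ determinant test on their abelianizations. The iterates $\sigma^n(u_0)$ then give infinitely many words with return group equal to $F_\cA$, and \cref{prop:finitely-many} concludes that $X$ has stable return groups.

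For part (ii), my plan is to construct an infinite strictly decreasing chain of return groups in $D_b(X)$; by \cref{cor:dualite} applied to the free (hence subgroup separable) group $F_\cB$, this precludes eventual stability. The key starting observation is that $\ret_X(b)$ is not a free subset of $F_\cA$: a partial computation of the $\sigma^2$-fixed point beginning with $b$ already produces the five distinct return words $bad, baaca, baa, badacd$, and $baaacdca$, so $n := \card\ret_X(b) > 4 = \operatorname{rank}(F_\cA)$. Combined with part (i), which gives $\gen{\ret_X(b)} = F_\cA$, the derivating morphism $\theta_b\from F_\cB \to F_\cA$ is surjective but has nontrivial kernel.

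Since $\sigma^2(b) = badbaa$ starts with $b$, $\sigma^2$ sends each $r \in \ret_X(b)$ to a concatenation of return words to $b$, which defines an endomorphism $\hat\sigma\from F_\cB \to F_\cB$ satisfying $\theta_b\circ\hat\sigma = \sigma^2\circ\theta_b$. Running the construction of \cref{prop:durand} on $D_b(X)$ via iterated derivation, the subgroups $\hat\sigma^k\gen{S}$ for a suitable finite generating set $S \subseteq F_\cB$ coincide with return groups $\gen{\ret_{D_b(X)}(w^{(k)})}$ along an explicit sequence $w^{(k)} \in \lang(D_b(X))$. The crucial step is then to verify that $\hat\sigma$ is not surjective on $F_\cB$: I would compute the abelianization of $\hat\sigma$ on all generators and check that the resulting matrix on $\Z^\cB$ has determinant not $\pm 1$. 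Together with injectivity of $\hat\sigma$ on the relevant subgroup (inherited from injectivity of $\sigma^2$ via the commutative square $\theta_b\hat\sigma = \sigma^2\theta_b$), \cref{l:decreasing-chain} produces the desired infinite strictly decreasing chain. The main obstacles will be the complete enumeration of $\ret_X(b)$ and the explicit computation of $\hat\sigma$, after which non-surjectivity reduces to a determinant computation.
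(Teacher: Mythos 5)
Your overall strategy matches the paper's (part (i): bifix preservation constant plus invertibility of $\sigma$ plus an explicit check that some return groups are full; part (ii): an induced substitution on the derived system that is injective but has strictly decreasing images, then \cref{l:decreasing-chain} and \cref{cor:dualite}), and your induced map $\hat\sigma$ is in fact exactly the substitution $\sigma_b$ the paper computes. However, several of your justifications are genuinely wrong. In part (i), the claim that a bifix code freely generates a subgroup of rank equal to its cardinality is false ($\{aa,ab,ba,bb\}$ is bifix in $F_2$ but generates the index-$2$ subgroup of even-length words, which has rank $3$), and Schreier's formula does not show that a rank-$4$ subgroup of $F_4$ is the whole group, since it only applies to finite-index subgroups ($\gen{a^2,b^2,c^2,d^2}$ has rank $4$ and infinite index). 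So invertibility of $\sigma$ must be verified directly (e.g.\ by exhibiting an inverse or by Nielsen/Stallings computations); the determinant of the Abelianization matrix is only a necessary condition. Likewise, verifying $\gen{\ret_X(u_0)}=F_\cA$ by a determinant test on Abelianizations is insufficient: a proper finitely generated subgroup of $F_4$ can surject onto $\Z^4$ under $\ab$ (e.g.\ $\gen{a^2ba^{-1}b^{-1},b}$ is a proper rank-$2$ subgroup of $F_2$ with full Abelianization image). You need a genuine membership test in the free group, via Stallings foldings.

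In part (ii) there are two further gaps. First, the assertion that the subgroups $\hat\sigma^k\gen{S}$ coincide with return groups of $D_b(X)$ is exactly the point that needs proof: it requires knowing that $\hat\sigma$ is a \emph{derivating} substitution for $D_b(X)$ (i.e.\ that $\hat\sigma(\cB)=\ret_{D_b(X)}(w)$ for some $w$), which does not follow from $\theta_b\hat\sigma=\sigma^2\theta_b$ alone; the preservation identity $\ret_X(\sigma^2(b))=\sigma^2(\ret_X(b))$ is only guaranteed for words of length at least the preservation constant, and $|b|=1$. The paper verifies this property for its computed $\sigma_b$ before invoking \cref{P:decidable for derivating and bifix}. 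Second, and more seriously, injectivity of $\hat\sigma$ does \emph{not} follow from the commutative square: from $\hat\sigma(g)=1$ you only get $\sigma^2\theta_b(g)=1$, hence $g\in\ker\theta_b$ --- and you yourself established that $\ker\theta_b$ is nontrivial, so this argument cannot close. Injectivity of $\hat\sigma$ on the relevant subgroup must be checked directly (the paper does this computationally with Stallings graphs). Your determinant criterion for non-surjectivity of $\hat\sigma$ is a valid sufficient condition, but what \cref{l:decreasing-chain} actually requires is the strict inclusion $\hat\sigma\gen{S}<\gen{S}$ for the specific $S$ at hand, which again calls for a membership test rather than an Abelianization computation.
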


\begin{proof}
    \ref{i:contre-exemple-stable}. Set $\cA = \{a,b,c,d\}$. As $\sigma$ is bifix, one can use \cref{P:decidable for derivating and bifix} and its proof to determine that $X$ has stable return groups. In this particular case, we can also use a quicker method. Indeed, observe that $\sigma$ is invertible, i.e., $\sigma(F_\cA) = F_\cA$. Therefore it suffices to verify that the sets of return words to words of length $K$ each generate $F_\cA$, where $K$ is the preservation constant detailed in the discussion following the proof of \cref{P:decidable for derivating and bifix}. One can check that, for the morphism $\sigma$ under consideration, $K=10$ (the synchronizing constant is $L=4$, and we find $M=10$ and $K = \max\{10,6\}=10$). Therefore, to show that $X$ has stable return groups, it suffices to show that, for each length-$10$ word, the set of return words generates $F_\cA$. This can be done with explicit computations.
        
    \ref{i:contre-exemple-instable} Using Durand's algorithm, one can show that $D_b(X)$ is generated by the substitution
    \begin{align*}
        \sigma_b \from \displaysub{ 1\mapsto 123334, \\ 2\mapsto 123232533, \\ 3\mapsto 123233, \\ 4\mapsto 12333632734, \\ 5\mapsto 12323232736533, \\ 6\mapsto 12333632736533, \\ 7\mapsto 12323232734.}
    \end{align*}
    Observe that $\sigma_b$ is a derivating substitution (for the letter $1$) so one can apply the procedure of \cref{P:decidable for derivating and bifix} to show that $D_b(X)$ does not have eventually stable return groups. Computations performed in SageMath~\cite{man/Sage2022} with the publicly available \verb|stallings_graphs| package~\cite{man/stallings2019} show that $\sigma_b$ is injective over $H = \gen{\ret_{D_b(X)}(1)} = \gen{\sigma_b(\cB)}$ and that $\sigma_b(H) < H$. Hence, $D_b(X)$ does not have universally stable return groups.
\end{proof}

This negative result does not contradict \cref{C:universal+free is stable} as the set $\ret_{X}(b)$ is not free. For instance $\ret_X(b)$ contains the four elements $baa, baaca, badacd, badacdca$, and they satisfy the relation
\[
    (baa)^{-1} baaca = (badacd)^{-1}badacdca.
\]

\begin{corollary}
    The family of shift spaces with (eventually) stable return groups is not closed under derivation.
\end{corollary}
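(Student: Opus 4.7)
The plan is to deduce this corollary directly from \cref{p:contre-exemple}, which already does all the real work. That proposition exhibits a single shift space $X$, generated by the explicit substitution $\sigma\from a\mapsto baa,\ b\mapsto ca,\ c\mapsto bad,\ d\mapsto acd$, together with a word $u = b \in \lang(X)$ such that $X$ has stable return groups while $D_b(X)$ fails to have even eventually stable return groups. This provides a pair $(X, D_b(X))$ witnessing non-closure simultaneously in the strong and the weak form.

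Concretely, I would write: note that by the definition in \cref{D:stable}, any shift space with stable return groups has eventually stable return groups (the threshold $M = 0$ witnesses both). Hence the shift space $X$ of \cref{p:contre-exemple}\ref{i:contre-exemple-stable} belongs to both families, while its derived shift space $D_b(X)$ of \cref{p:contre-exemple}\ref{i:contre-exemple-instable} belongs to neither. Since derivation with respect to a word $u \in \lang(X)$ is one of the natural operations on the class of minimal shift spaces, this exhibits the required non-closure.

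There is no real obstacle here, as the hard computational and algebraic work is carried out in \cref{p:contre-exemple}: the preservation-constant argument for stability of $X$, and Durand's derivation algorithm combined with \cref{L:decidability for pure morphic} for the failure on $D_b(X)$. The only task left in the corollary's proof is this brief rephrasing. If desired, I would also add a one-line comment contrasting this negative result with \cref{C:universal+free is stable}, emphasizing that freeness of $\ret_X(u)$ in $F_\cA$ is exactly what fails in the counterexample, as already remarked immediately after the proof of \cref{p:contre-exemple}.
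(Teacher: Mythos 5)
Your proposal is correct and matches the paper's (implicit) argument exactly: the corollary is an immediate consequence of \cref{p:contre-exemple}, since $X$ witnesses membership in both families (stable implies eventually stable) while $D_b(X)$ belongs to neither. Nothing further is needed.
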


\subsection{With respect to substitutions}

Another classical operation on shift spaces is to consider the image under a substitution. In the particular case where the substitution is a derivating substitution, this operation is the converse of derivation. \cref{l:durand derivation en chaine} then gives a strong link between the sets of return words. For a general substitution, there is a similar link but the return words in the image are related to the return words to the \emph{set of pre-images}, which might contain more than one word. The study of the behavior of (eventual) stability under application of a substitution therefore relies heavily on the notion of return words to sets of words.

\begin{definition}
    A set $S \subseteq \cA^*$ is a \emph{factor code} if it is non-empty and no word is factor of another.
\end{definition}
If $S \subseteq \lang(X)$ is a factor code, then one naturally defines the return words to $S$ in $X$ as follows:
\[
    \ret_X(S) = \left\{r \in \cA^+ \mid \exists u \in S \text{ st. } ru \in \lang(X) \cap (S\cA^* \setminus \cA^+S\cA^+)\right\}.
\]
Note that, if $S, T$ are two factor codes such that $S \subseteq T$, then $\ret_X(S) \subseteq (\ret_X(T))^+$.

\begin{lemma}\label{L:stable return groups also works for sets}
    Let $X$ be a minimal shift space over $\cA$ and let $\varphi\from F_\cA \to G$ be a group morphism. Then $X$ has eventually $\varphi$-stable return groups of threshold at most $M$ if and only if, for all factor codes $T \subseteq \lang(X)$, we have $\varphi\gen{\ret_X(T)} = \varphi\gen{\ret_X(S)}$, where $S = \{u_{[0,\min\{|u|,M\})} \mid u \in T\}$.
\end{lemma}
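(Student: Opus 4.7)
For the backward implication, my plan is to specialize to singleton factor codes: taking $T=\{u\}$ for $u\in\lang_{\ge M}(X)$, the truncated set is $S=\{u_{[0,M)}\}$, so the assumed equality reads $\varphi\gen{\ret_X(u)}=\varphi\gen{\ret_X(u_{[0,M)})}$, which is precisely eventual $\varphi$-stability with threshold at most $M$.

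For the forward direction, assume eventual $\varphi$-stability with threshold at most $M$, fix a factor code $T\subseteq\lang(X)$, and define $S$ as in the statement. First I would verify that $S$ is itself a factor code: any factor relation between distinct elements of $S$ would lift, via the prefix relation between an element of $S$ and the associated element of $T$, to one inside $T$. For the inclusion $\varphi\gen{\ret_X(T)}\subseteq\varphi\gen{\ret_X(S)}$, which does not rely on stability, I would take $r\in\ret_X(T)$ with witness $ru\in\lang(X)$ and mark the positions $0=p_0<p_1<\cdots<p_k=|r|$ of $S$-occurrences in $ru$. Since every $T$-position is an $S$-position, these positions partition $r$ into a product $r_1\cdots r_k$ with each $r_i\in\ret_X(S)$.

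For the reverse inclusion $\varphi\gen{\ret_X(S)}\subseteq\varphi\gen{\ret_X(T)}$, I would first handle a single-vertex version. For each $s\in S$, choose $t_s\in T$ with $s$ as its length-$\min\{|t_s|,M\}$ prefix; if $|s|<M$ then $s=t_s$, while if $|s|=M$ eventual stability gives $\varphi\gen{\ret_X(s)}=\varphi\gen{\ret_X(t_s)}$. Combined with the fact that return words to the single word $t_s$ decompose as products of return words to $T$, this yields $\varphi\gen{\ret_X(\{s\})}\subseteq\varphi\gen{\ret_X(T)}$ for every $s\in S$.

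To promote this to the full inclusion, I would view $\ret_X(S)$ as the edge labels of a directed graph with vertex set $S$, which is strongly connected by minimality of $X$. Fixing $s_0\in S$, the subgroup $\gen{\ret_X(S)}$ is generated by the fundamental group at $s_0$, equal to $\gen{\ret_X(\{s_0\})}$, together with the labels of any spanning tree of this graph. The fundamental-group part is already in $\varphi\gen{\ret_X(T)}$ by the previous paragraph, so the remaining task is to show that each spanning-tree label has its $\varphi$-image in $\varphi\gen{\ret_X(T)}$. Each such label is itself a return word to $S$ going from $s_0$ to some $s\in S$, and I would extend its witness within $\lang(X)$ using minimality to reach $T$-occurrences on either side, thereby rewriting it as a combination of return words to $T$ and return words to the singletons $\{s\}$ and $\{s_0\}$, which are already controlled. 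The main obstacle I expect is that a given $\ret_X(S)$-witness need not admit an extension in which the boundary $S$-occurrences literally complete to their chosen $t_s$'s; some care will be required to work modulo $\varphi$ and invoke the identification $\varphi\gen{\ret_X(s)}=\varphi\gen{\ret_X(t_s)}$ to absorb this discrepancy, which is where eventual stability does its essential work.
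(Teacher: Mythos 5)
Your backward implication, the inclusion $\varphi\gen{\ret_X(T)}\leq\varphi\gen{\ret_X(S)}$, and the singleton step $\varphi\gen{\ret_X(s)}=\varphi\gen{\ret_X(t_s)}\leq\varphi\gen{\ret_X(T)}$ are all correct and agree with the paper. The gap is in the promotion step. You identify the ``fundamental group at $s_0$'' of the return-word graph on $S$ with $\gen{\ret_X(\{s_0\})}$, but this identification fails in general: an edge $u\to u'$ labelled $r$ only records that $ru'\in\lang(X)\cap u\cA^*$, and two consecutive edges $u\to u'\to u''$ with labels $r,r'$ need not glue to a factor $rr'u''\in\lang(X)$, since their witnesses overlap only in $u'$. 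Labels of abstract closed paths at $s_0$ are therefore not, in general, concatenations of return words to $s_0$, and the group they generate can be strictly larger than $\gen{\ret_X(\{s_0\})}$. This is precisely the discrepancy between Rauzy groups and return groups addressed in \cref{P:equivalence with Rauzy groups}, and closing it would cost an argument of the same difficulty as the one the detour is meant to avoid. (A smaller slip: spanning-tree edges go between arbitrary vertices, not from $s_0$, though their labels are indeed single elements of $\ret_X(S)$.)

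Fortunately the graph machinery is unnecessary: the argument you sketch in your final paragraph applies verbatim to \emph{every} $r\in\ret_X(S)$, and the worry about witnesses not completing to the chosen elements of $T$ resolves itself. Given $r$ with witness $ru'\in\lang(X)\cap u\cA^*$ and $uv,u'v'\in T$, realize the witness inside a point of $X$ and extend leftwards to \emph{some} occurrence of $uv$ and rightwards to \emph{some} occurrence of $u'v'$ (these exist by minimality). Writing the resulting factor as $wrtu'v'\in uv\cA^*$ with $u'$ a prefix of $tu'v'$, one reads off $w\in(\ret_X(u))^*$, $t\in(\ret_X(u'))^*$ and $wrt\in(\ret_X(T))^*$, hence
\[
    \varphi(r)\in\varphi\gen{\ret_X(u)}\,\varphi\gen{\ret_X(T)}\,\varphi\gen{\ret_X(u')}.
\]
The two outer factors lie in $\varphi\gen{\ret_X(T)}$ by your singleton step --- this is exactly where eventual stability enters and where the mismatch at the boundaries is absorbed --- so $\varphi(r)\in\varphi\gen{\ret_X(T)}$. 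This direct argument on each return word is the paper's proof; no spanning tree or fundamental group is needed.
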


\begin{proof}
    One implication directly follows from the fact that singletons are particular factor codes. Let us prove the converse, and let $S,T \subseteq \lang(X)$ be two factor codes such that $S = \{u_{[0,\min\{|u|,M\})} \mid u \in T\}$.

    By definition of $S$ and $T$, $\ret_X(T) \subseteq (\ret_X(S))^*$ so the inclusion $\varphi\gen{\ret_X(T)} \leq \varphi\gen{\ret_X(S)}$ follows. Conversely, take $r \in \ret_X(S)$. There exist $u, u' \in S$ such that $ru' \in \lang(X) \cap u \cA^*$. By definition of $S$, there exist $v, v' \in \cA^*$ such that $uv, u'v' \in T$. Since $X$ is minimal, there exist $w, t \in \lang(X)$ such that $wrtu'v' \in \lang(X) \cap uv \cA^*$ and $u'$ is a prefix of $tu'v'$. Then $wrt \in (\ret_X(T))^*$, $w \in (\ret_X(u))^*$, and $t \in (\ret_X(u'))^*$ so $r \in \gen{\ret_X(u)}\gen{\ret_X(T)}\gen{\ret_X(u')}$. Moreover, as $X$ has eventually $\varphi$-stable return groups of threshold at most $M$, $\varphi\gen{\ret_X(u)} = \varphi\gen{\ret_X(uv)} \leq \varphi\gen{\ret_X(T)}$. Indeed, either $v = \emptyw$ or $|u| = M$. Similarly, $\varphi\gen{\ret_X(u')} \leq \varphi\gen{\ret_X(T)}$. Thus $\varphi(r) \in \varphi\gen{\ret_X(T)}$. As this is true for any $r \in \ret_X(S)$, we conclude that $\varphi\gen{\ret_X(T)} = \varphi\gen{\ret_X(S)}$.
\end{proof}

We can now prove the counterpart of \cref{P:stable preserved by derivation} and \cref{C:universal+free is stable}. For a shift space $X$ over $\cA$ and a substitution $\sigma \from \cA^* \to \cB^*$, we let $\sigma[X]$ denote the shift-closure of $\{\sigma(x) \mid x \in X\}$.

\begin{proposition}\label{P:eventual stability preserved by substitution}
    Let $X$ be a minimal shift space over $\cA$ and let $\sigma\from \cA^* \to \cB^*$ be a substitution. Let $\varphi\from F_\cA \to G$ and $\psi\from F_\cB \to H$ be group morphisms satisfying $\ker(\varphi) \leq \ker(\psi\sigma)$. If $X$ has eventually $\varphi$-stable return groups of threshold $M$, then $\sigma[X]$ has eventually $\psi$-stable return groups of threshold at most $\max\{1,M|\sigma|\}$.
\end{proposition}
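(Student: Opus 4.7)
The plan is to reduce the statement about $\sigma[X]$ to a statement about $X$ via the substitution, and then apply \cref{L:stable return groups also works for sets}. The key device is the factor code of \emph{minimal pre-images}: for each $w \in \lang(\sigma[X])$, let $T_w \subseteq \lang(X)$ consist of all $v$ such that $w$ is a factor of $\sigma(v)$ while no proper factor of $v$ enjoys this property. Each $v \in T_w$ admits a factorization $\sigma(v) = p_v w s_v$ with $|p_v| < |\sigma(v_1)|$ and $|s_v| < |\sigma(v_{|v|})|$ (where $v_1$, $v_{|v|}$ are the first and last letters of $v$), and $T_w$ is easily seen to be a factor code. The threshold $M|\sigma|$ is chosen so that $|v| \geq |w|/|\sigma| \geq M$ for every $v \in T_w$.

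The technical heart of the argument is a relation of the form
\[
    \psi\gen{\ret_{\sigma[X]}(w)} = c_w \cdot \psi\sigma\gen{\ret_X(T_w)} \cdot c_w^{-1},
\]
for some $c_w \in H$ built from $\psi(p_{v_0})$ after fixing a base $v_0 \in T_w$. The intuition is that any $r \in \ret_{\sigma[X]}(w)$ corresponds to two consecutive occurrences of $w$ in $\sigma(y)$ for some $y \in X$, and lifts to a word $v \cdot r_X \cdot v' \in \lang(X)$ with $v, v' \in T_w$ and $r_X$ a concatenation of elements of $\ret_X(T_w)$; the equation $\sigma(v r_X v') = p_v w r w s_{v'}$ then gives $r = (p_v w)^{-1} \sigma(v r_X v')(w s_{v'})^{-1}$ in $F_\cB$, and careful bookkeeping shows that, after passing to $\psi$, the boundary contributions from $p_v, s_v$ amount to a single conjugator $c_w$ depending only on $v_0$. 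The converse inclusion comes from applying $\sigma$ to concatenations of return words to $T_w$ between matching minimal pre-images of $w$ and reading the result between consecutive occurrences of $w$.

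Equipped with this relation, take $w \in \lang(\sigma[X])$ with $|w| \geq \max\{1, M|\sigma|\}$ and $u \in w\cB^* \cap \lang(\sigma[X])$. Since $u$ starts with $w$, one can pick $v_0' \in T_u$ and let $v_0 \in T_w$ be the (unique) prefix of $v_0'$ covering the occurrence of $w$ at position $|p_{v_0'}|$ in $\sigma(v_0')$; by construction $p_{v_0} = p_{v_0'}$, hence $c_w = c_u$. The desired equality thus reduces to $\psi\sigma\gen{\ret_X(T_w)} = \psi\sigma\gen{\ret_X(T_u)}$, and from $\ker(\varphi) \leq \ker(\psi\sigma)$ further to $\varphi\gen{\ret_X(T_w)} = \varphi\gen{\ret_X(T_u)}$. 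I would close by applying \cref{L:stable return groups also works for sets} to both $T_w$ and $T_u$: since each element of $T_w$ has length at least $M$ and the length-$M$ prefix of every element of $T_u$ lies inside some element of $T_w$, both factor codes share a common length-$M$ truncation $S$, and so both $\varphi$-images equal $\varphi\gen{\ret_X(S)}$.

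The main obstacle will be the core relation in the second paragraph: carefully tracking the boundary terms $p_v, s_v$ arising as $v$ ranges over $T_w$ during the lifting of return words, and verifying that they all collapse to a single conjugator $c_w$ after applying $\psi$. The remaining steps (the compatible choice of base between $w$ and $u$, and the truncation argument to invoke \cref{L:stable return groups also works for sets}) are more routine once the core relation is in hand.
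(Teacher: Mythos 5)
Your overall architecture (lift return words of $\sigma[X]$ to return words of a factor code of pre-images in $X$, then invoke \cref{L:stable return groups also works for sets}) is in the right spirit, but the ``core relation'' it rests on is false: $\psi\gen{\ret_{\sigma[X]}(w)}$ is in general \emph{not} a single conjugate of $\psi\sigma\gen{\ret_X(T_w)}$. The problem is precisely the boundary bookkeeping you flag as the main obstacle. A return word $r$ to $w$ lifts to $r=p_v^{-1}\sigma(\rho)\,p_{v'}$, where $v,v'\in T_w$ are the covers of the two consecutive occurrences of $w$; since distinct elements of $T_w$ carry distinct prefixes $p_v$, the generators come with \emph{varying} conjugators on their two sides, and the group they generate is a groupoid-type object that can be strictly larger than every conjugate of $\psi\sigma\gen{\ret_X(T_w)}$. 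Concretely, take the Thue--Morse shift $X=\sigma[X]$ with $\sigma\from a\mapsto ab,\ b\mapsto ba$, $\psi=\id$, and $w=b$. Then $T_w=\{a,b\}$ with $p_a=a$, $p_b=\emptyw$, and $\ret_X(T_w)=\{a,b\}$, so $\sigma\gen{\ret_X(T_w)}=\gen{ab,ba}$, whose Abelianization image is the proper subgroup $\zz\cdot(1,1)$ of $\zz^2$; but $\ret_{\sigma[X]}(b)\supseteq\{b,ba\}$, so $\gen{\ret_{\sigma[X]}(b)}=F_{\{a,b\}}$, which is not conjugate to $\gen{ab,ba}$. Nothing in your argument uses the length of $w$ to collapse the conjugators, and the eventual $\varphi$-stability hypothesis cannot repair this either: it controls $\varphi\gen{\ret_X(\cdot)}$ but says nothing about the boundary words $p_v$. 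There are also secondary issues (the minimal cover of a \emph{given} occurrence of $w$ need not belong to your $T_w$; the elements of $T_u$ need not admit elements of $T_w$ as prefixes, so the ``common truncation $S$'' step is not immediate), but the conjugator collapse is the fatal one.

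The paper's proof avoids exactly this by factoring $\sigma=\tau\theta$, where $\theta$ sends each letter to a block of fresh letters and $\tau$ is letter-to-letter. For $\theta$, every nonempty $v\in\lang(\theta[X])$ has a \emph{unique} minimal pre-image $t_v$, so $\ret_{\theta[X]}(v)=p_v^{-1}\theta(\ret_X(t_v))\,p_v$ holds with a single, well-defined conjugator that moreover agrees for $w$ and its extensions $u$; for $\tau$, pre-images form a factor code but there are no conjugators at all, and \cref{L:stable return groups also works for sets} applies cleanly. The two difficulties you are trying to handle simultaneously never co-occur there. To salvage a one-step argument you would have to replace your core relation by an honest description of $\gen{\ret_{\sigma[X]}(w)}$ as generated by all products $p_v^{-1}\sigma(\rho)\,p_{v'}$ with matching endpoints, and then prove stability for that larger object directly --- which essentially amounts to redoing the two-step reduction.
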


\begin{proof}
    First, recall that $|\sigma| = \max\{|\sigma(a)| \mid a \in \cA\}$ and observe that the hypothesis on the kernels is equivalent to $\bar\sigma = \psi\sigma\varphi^{-1}\from \varphi(F_\cA) \to H$ being a well-defined group morphism. This morphism satisfies $\bar\sigma\varphi = \psi\sigma$.
    
    To study $\sigma[X]$, we first factorize the substitution $\sigma$ into $\tau\theta$ as follows.
    Let $\cA = \{a_1, \dots, a_d\}$, $n_i = |\sigma(a_i)|$, and $\cC = \{a_{i,j} \mid 1 \leq i \leq d, 0 \leq j \leq |\sigma(a_i)|-1\}$ where $a_{i,j} \ne a_{i',j'}$ if $i \ne i'$ or $j \ne j'$. We define $\theta\from \cA^*\to \cC^*$ and $\tau\from \cC^*\to \cB^*$ by
    \[
        \theta(a_i) = a_{i,0} \cdots a_{i,n_i-1}\quad\text{and}\quad \tau(a_{i,j}) = \sigma(a_i)_j.
    \]
    They clearly satisfy $\sigma = \tau\theta$.
    
    We consider the intermediary shift space $\theta[X]$ and show that it is eventually $\psi\tau$-stable of threshold at most $\max\{1,M|\sigma|\}$. To do so, we first study the return group of any non-empty $v \in \lang(\theta[X])$. Let $a_{i,j}$ and $a_{k,l}$ respectively be the first and last letters of $v$. Define $p_v = a_{i,0}\cdots a_{i,j-1}$ and $s_v = a_{k,l+1}\cdots a_{k,n_k-1}$. Then $p_vvs_v = \theta(t_v)$ for a unique $t_v\in\lang(X)$. Since $v$ only appears as a factor of $\theta(t_v)$ and $\theta(t_v)$ only appears as the image of $t_v$, we have
    \begin{equation}\label{eq:ret-theta}
        \ret_{\theta[X]}(v) = p_v^{-1}\ret_{\theta[X]}(\theta(t_v))p_v = p_v^{-1}\theta(\ret_X(t_v))p_v.
    \end{equation}
    Now, take $w \in \lang(\theta[X])$ of length $\max\{1,M|\sigma|\}$ and $u \in w\cC^* \cap \lang(\theta[X])$. We prove that $\psi\tau\gen{\ret_{\theta[X]}(w)} = \psi\tau\gen{\ret_{\theta[X]}(u)}$. By \cref{eq:ret-theta}, we have
    \begin{align*}
        \psi\tau\gen{\ret_{\theta[X]}(w)}
        &= \psi\tau(p_w)^{-1}\psi\tau\theta\gen{\ret_X(t_w)}\psi\tau(p_w) \\
        &= \psi\tau(p_w)^{-1}\psi\sigma\gen{\ret_X(t_w)}\psi\tau(p_w) \\
        &= \psi\tau(p_w)^{-1}\bar\sigma\varphi\gen{\ret_X(t_w)}\psi\tau(p_w).
    \end{align*}
    Similarly, $\psi\tau\gen{\ret_{\theta[X]}(u)} = \psi\tau(p_u)^{-1}\bar\sigma\varphi\gen{\ret_X(t_u)}\psi\tau(p_u)$.
    As $p_u = p_w$ by definition of $u$ and $w$, it then suffices to show that $\varphi\gen{\ret_X(t_w)} = \varphi\gen{\ret_X(t_u)}$. By definition of $u$ and $w$ once again, $t_w$ is a prefix of $t_u$ and
    \[
        |t_w| \geq \frac{|\theta(t_w)|}{|\sigma|} = \frac{|p_wwq_w|}{|\sigma|} \geq \frac{|w|}{|\sigma|}
        \geq M.
    \]
    As $X$ is eventually $\varphi$-stable of threshold $M$, this directly implies that $\varphi\gen{\ret_X(t_w)} = \varphi\gen{\ret_X(t_u)}$, and therefore that $\psi\tau\gen{\ret_{\theta[X]}(w)} = \psi\tau\gen{\ret_{\theta[X]}(u)}$. This ends the proof that $\theta[X]$ is eventually $\psi\tau$-stable of threshold $K \leq \max\{1,M|\sigma|\}$.
    
    Let us now show that $\sigma[X]$ is eventually $\psi$-stable of threshold at most $K$. Let $w \in \lang_K(\sigma[X])$ and let $u \in w\cB^* \cap \lang(\sigma[X])$. Since $\tau$ is letter-to-letter, one easily sees that $\tau^{-1}(u)\cap\lang(\theta[X])$ is a factor code and 
    \[
        \ret_{\sigma[X]}(u) = \tau(\ret_{\theta[X]}(\tau^{-1}(u)\cap\lang(\theta[X]))).
    \]
    Set $T = \tau^{-1}(u)\cap\lang(\theta[X])$ and $S = \{v_{[0,K)} \mid v \in T\}$. Note that $S = \tau^{-1}(w)\cap\lang(\theta[X])$ thus we similarly have $\ret_{\sigma[X]}(w) = \tau(\ret_{\theta[X]}(S))$. As $\theta[X]$ is eventually $\psi\tau$-stable of threshold $K$, we obtain by \cref{L:stable return groups also works for sets}
    \[
        \psi\ret_{\sigma[X]}(u) = \psi\tau(\ret_{\theta[X]}(T)) = \psi\tau(\ret_{\theta[X]}(S)) = \psi\ret_{\sigma[X]}(w).
    \]
    This shows that $\sigma[X]$ is eventually $\psi$-stable of threshold at most $K \leq \max\{1,M|\sigma|\}$.
\end{proof}

Observe that the bound $\max\{1,M|\sigma|\}$ can safely be replaced by $M|\sigma|$ except when $M=0$. The fact that $\max\{1,M|\sigma|\}$ is tight when $M=0$ is made clear by the following example.
\begin{example}
    Let $\cA = \{a,b\}$ and let $\sigma\from \cA^*\to\cA^*$ be defined by $\sigma(a) = ab$ and $\sigma(b) = abbb$. For any shift space $X$ over $\cA$, we have $\ret_{\sigma[X]}(a) = \{ab, abbb\}$. In particular, $\sigma[X]$ cannot be stable, even when $X$ is.
\end{example}

The following shows that this phenomenon disapears when the morphism $\sigma$ is surjective between free groups.

\begin{proposition}\label{P:stability preserved by substitution}
    Let $X$ be a minimal shift space over $\cA$, let $\sigma\from \cA^* \to \cB^*$ be a substitution, and let $\varphi\from F_\cA \to G$ and $\psi\from F_\cB \to H$ be group morphisms satisfying $\ker(\varphi) \leq \ker(\psi\sigma)$. If $X$ has $\varphi$-stable return groups and $\sigma\from F_\cA \to F_\cB$ is surjective, then $\sigma[X]$ has $\psi$-stable return groups.
\end{proposition}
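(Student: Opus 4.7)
The plan is to refine the argument of \cref{P:eventual stability preserved by substitution} using the extra surjectivity hypothesis to push the threshold down to $0$. The main benefit of the surjectivity of $\sigma\from F_\cA\to F_\cB$ is that $\psi\sigma(F_\cA) = \psi(F_\cB)$, and as a subgroup of itself $\psi(F_\cB)$ is closed under conjugation by its own elements; this will eliminate the conjugation terms that prevented the previous argument from reaching threshold $0$.

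First, I would factor $\sigma = \tau\theta$ exactly as in the proof of \cref{P:eventual stability preserved by substitution}, and show that $\theta[X]$ has $\psi\tau$-stable return groups. For any non-empty $v\in\lang(\theta[X])$, the same computation as before (relying on the fact that $\bar\sigma = \psi\sigma\varphi^{-1}$ is well-defined by the kernel hypothesis) yields
\begin{equation*}
    \psi\tau\gen{\ret_{\theta[X]}(v)} = \psi\tau(p_v)^{-1}\,\bar\sigma\varphi\gen{\ret_X(t_v)}\,\psi\tau(p_v).
\end{equation*}
Since $X$ is $\varphi$-stable, $\varphi\gen{\ret_X(t_v)} = \varphi(F_\cA)$, so the middle factor is $\bar\sigma\varphi(F_\cA) = \psi\sigma(F_\cA) = \psi(F_\cB)$, where the last equality uses surjectivity of $\sigma$. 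Conjugating $\psi(F_\cB)$ by $\psi\tau(p_v) \in \psi(F_\cB)$ gives back $\psi(F_\cB)$, so $\psi\tau\gen{\ret_{\theta[X]}(v)} = \psi(F_\cB) = \psi\tau(F_\cC)$. The case $v = \emptyw$ is trivial since $\ret_{\theta[X]}(\emptyw) = \cC$ generates $F_\cC$.

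Second, I would transfer stability from $\theta[X]$ to $\sigma[X]$ using \cref{L:stable return groups also works for sets} with $M = 0$. Let $u\in\lang(\sigma[X])$. Since $\tau$ is letter-to-letter, the set $T = \tau^{-1}(u)\cap\lang(\theta[X])$ consists of words of the same length $|u|$ and is therefore a factor code. As in the previous proof, $\ret_{\sigma[X]}(u) = \tau\bigl(\ret_{\theta[X]}(T)\bigr)$, so
\begin{equation*}
    \psi\gen{\ret_{\sigma[X]}(u)} = \psi\tau\gen{\ret_{\theta[X]}(T)}.
\end{equation*}
Applying \cref{L:stable return groups also works for sets} to $\theta[X]$ with threshold $0$ gives $\psi\tau\gen{\ret_{\theta[X]}(T)} = \psi\tau(F_\cC) = \psi(F_\cB)$, which is precisely the desired $\psi$-stability.

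I expect no real obstacle here: the argument is essentially the one of \cref{P:eventual stability preserved by substitution} combined with two straightforward ingredients (surjectivity of $\sigma$ and \cref{L:stable return groups also works for sets}). The only subtle point is making sure that the auxiliary substitution $\tau\from F_\cC\to F_\cB$ is surjective as a group morphism (which is clear because its image contains all letters of $\cB$), so that $\psi\tau(F_\cC) = \psi(F_\cB)$ and no additional hypothesis is needed on the intermediate stage.
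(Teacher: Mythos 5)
Your proof is correct and follows essentially the same route as the paper: the same factorization $\sigma=\tau\theta$, the same formula for $\ret_{\theta[X]}(v)$, and the same use of surjectivity to turn $\bar\sigma\varphi\gen{\ret_X(t_v)}$ into $\psi(F_\cB)$ and absorb the conjugation by $\psi\tau(p_v)\in\psi(F_\cB)$. The only cosmetic difference is the last step: instead of invoking \cref{L:stable return groups also works for sets} with $M=0$, the paper just picks a single $w\in\tau^{-1}(u)\cap\lang(\theta[X])$, notes the one-sided inclusion $\psi\gen{\ret_{\sigma[X]}(u)}\geq\psi\tau\gen{\ret_{\theta[X]}(w)}$, and observes that containing a conjugate of the full group $\psi(F_\cB)$ by one of its own elements already forces equality.
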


\begin{proof}
    For conciseness, we re-use the notations from the proof of \cref{P:eventual stability preserved by substitution}.
    Let $u \in \lang(\sigma[X]) \setminus \{\varepsilon\}$. Then for any $w \in \tau^{-1}(u) \cap \lang(\theta[X])$, we have
    \[
        \psi\gen{\ret_{\sigma[X]}(u)} \geq \psi\tau\gen{\ret_{\theta[X]}(w)} = \psi\tau(p_w)^{-1} \psi\sigma\gen{\ret_X(t_w)} \psi\tau(p_w).
    \]
    Since $X$ has $\varphi$-stable return groups and $\sigma$ is surjective, we have
    \[
        \psi\sigma\gen{\ret_X(t_w)} = \bar\sigma\varphi\gen{\ret_X(t_w)} = \bar\sigma\varphi(F_\cA) = \psi\sigma(F_\cA) = \psi(F_\cB).
    \]
    This shows that $\psi\gen{\ret_{\sigma[X]}(u)}$ contains $\psi(F_\cB)$, up to conjugacy by an element of $\psi(F_\cB)$. Consequently, $\psi\gen{\ret_{\sigma[X]}(u)} = \psi(F_\cB)$, and as it is true for all non-empty $u \in \lang(\sigma[X])$, the shift space $\sigma[X]$ has $\psi$-stable return groups by \cref{prop:finitely-many}.
\end{proof}

Finally, let us highlight two consequences of these propositions.

\begin{corollary}\label{c:closed under substitution}\hfill
    \begin{enumerate}
    \item The family of shift spaces with eventually stable return groups is closed under application of substitutions.
    \item The family of shift spaces with stable return groups is closed under application of substitutions that are surjective when seen as group morphisms.
    \end{enumerate}
\end{corollary}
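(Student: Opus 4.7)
The plan is to obtain both items as direct specializations of \cref{P:eventual stability preserved by substitution} and \cref{P:stability preserved by substitution}, by taking $\varphi$ and $\psi$ to be identity morphisms on the ambient free groups. This is the natural choice because, by the convention fixed after \cref{D:stable}, (eventual) stability of return groups is exactly (eventual) $\id$-stability with respect to the identity of the relevant free group.

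For part (i), I would fix a minimal shift space $X$ over $\cA$ with eventually stable return groups and a substitution $\sigma\from \cA^* \to \cB^*$, then apply \cref{P:eventual stability preserved by substitution} with $\varphi = \id_{F_\cA}$ and $\psi = \id_{F_\cB}$. The required hypothesis $\ker(\varphi) \leq \ker(\psi\sigma)$ becomes the trivial inclusion $\{1\} \leq \ker(\sigma)$ and is therefore automatic. The conclusion is precisely that $\sigma[X]$ is eventually $\id_{F_\cB}$-stable, i.e., has eventually stable return groups; as a bonus, the threshold is at most $\max\{1, M|\sigma|\}$ if $X$ has threshold $M$.

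For part (ii), I would run the same argument using \cref{P:stability preserved by substitution} instead. With $\varphi$ and $\psi$ again taken to be the identity morphisms, the kernel condition is trivially satisfied, while the surjectivity requirement on $\sigma\from F_\cA \to F_\cB$ is supplied by the hypothesis of the corollary. The proposition then yields directly that $\sigma[X]$ has stable return groups.

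Since the statement is a pure specialization of previously established propositions, no real obstacle arises. The only two points worth a brief sanity check are (a) that the propositions require the source shift space to be minimal, which is part of the implicit framework when speaking of the families involved, and (b) that identity morphisms satisfy the kernel hypothesis for free, which they plainly do.
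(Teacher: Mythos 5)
Your proposal is correct and follows exactly the paper's own argument: both items are obtained by specializing \cref{P:eventual stability preserved by substitution} and \cref{P:stability preserved by substitution} to $\varphi = \id_{F_\cA}$ and $\psi = \id_{F_\cB}$, where the kernel hypothesis holds trivially. Nothing further is needed.
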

\begin{proof}
    Indeed, we can apply \cref{P:eventual stability preserved by substitution} and \cref{P:stability preserved by substitution} with $\varphi$ the identity on $F_\cA$, $\psi$ the identity on $F_\cB$, and any substitution $\sigma$ since $\ker(\varphi)$ is trivial.
\end{proof}

\section{Local-global principle for stability}\label{S:local-global}

As stated in \cref{ss:decidability for finite groups}, if the image of $\varphi$ is finite, then any minimal shift space $X$ has eventually $\varphi$-stable return groups. But, as shown in \cref{S:stability and fullness}, not every shift space has eventually stable return groups. This shows that eventual stability of return groups cannot simply be reduced to a ``local'' property. On the other hand, when considering stability, we show that it is sufficient to consider morphisms $\varphi$ with finite images, making this property easier to study. More precisely, we prove the following \emph{local-global} principle for stability.

\begin{proposition}\label{p:local-global}
    Let $X$ be a minimal shift space over $\cA$ and let $\varphi\from F_\cA\to G$ be a group morphism onto a subgroup separable group $G$. For every $u\in\lang(X)$, the following assertions are equivalent:
    \begin{enumerate}
        \item $\varphi\gen{\ret_X(u)} = G$;
        \item $\psi\gen{\ret_X(u)} = H$ for every onto group morphism $\psi\from F_\cA\to H$ where $H$ is finite and $\ker(\varphi)\leq\ker(\psi)$.
    \end{enumerate}
\end{proposition}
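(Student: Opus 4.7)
The plan is to prove the two directions separately. The forward implication (i) $\Rightarrow$ (ii) is essentially a factorization argument. Assume $\varphi\gen{\ret_X(u)} = G$, and let $\psi\from F_\cA \to H$ be any onto morphism onto a finite group with $\ker(\varphi) \leq \ker(\psi)$. By the universal property of the quotient, $\psi$ factors as $\psi = \bar{\psi}\varphi$ for some morphism $\bar{\psi}\from G \to H$. Since $\psi$ is onto, so is $\bar{\psi}$, so $\psi\gen{\ret_X(u)} = \bar{\psi}\bigl(\varphi\gen{\ret_X(u)}\bigr) = \bar{\psi}(G) = H$. Note that this direction does not require subgroup separability.

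For the converse (ii) $\Rightarrow$ (i), I will argue by contrapositive, and this is where subgroup separability enters. Suppose $K := \varphi\gen{\ret_X(u)}$ is a proper subgroup of $G$. Since $X$ is minimal, $\ret_X(u)$ is finite (by uniform recurrence of $\lang(X)$), hence $K$ is a finitely generated subgroup of $G$. Pick any $g \in G \setminus K$. By subgroup separability of $G$ applied to $K$, there exists a normal subgroup $N \trianglelefteq G$ of finite index such that $gN \notin K/N$ (viewed as a subgroup of $G/N$).

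Now set $\psi = \pi\varphi\from F_\cA \to G/N$, where $\pi\from G \to G/N$ is the canonical projection. Then $\psi$ is onto (as a composition of surjections), $G/N$ is finite, and $\ker(\varphi) \leq \ker(\psi)$ is immediate from $\psi = \pi\varphi$. Furthermore, $\psi\gen{\ret_X(u)} = \pi(K) = K/N$, and since $gN \notin K/N$, this is a proper subgroup of $G/N$. Hence $\psi\gen{\ret_X(u)} \neq G/N$, contradicting (ii).

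The main (and essentially only) technical point is recognizing that subgroup separability applied to the finitely generated subgroup $K = \varphi\gen{\ret_X(u)}$ yields exactly the finite quotient needed to witness the failure of (ii). Once this observation is made, the rest is a routine application of the first isomorphism theorem, so I do not expect any real obstacle beyond checking that $\ret_X(u)$ is finite (ensuring $K$ is finitely generated) and that $\psi$ thus constructed satisfies all three required properties: surjectivity, finiteness of codomain, and $\ker(\varphi)\leq\ker(\psi)$.
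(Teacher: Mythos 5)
Your proof is correct and follows essentially the same route as the paper: the forward direction is the factorization $\psi = \bar\psi\varphi$, and the converse applies subgroup separability to the finitely generated subgroup $\varphi\gen{\ret_X(u)}$ to produce a finite quotient $G/N$ witnessing properness. The only difference is organizational --- you argue the converse directly by contrapositive with $\psi = \pi\varphi$, whereas the paper isolates this step as a separate lemma and lifts $N$ to a normal subgroup of $F_\cA$ via the correspondence theorem; these are the same quotient, so the content is identical.
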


To simplify the proof of this result, we first obtain an alternative statement using the following simple observation. 

\begin{remark}\label{R:from morphisms to normal subgroups}
    By the first isomorphism theorem, statements about images of morphisms may be translated in terms of normal subgroups. More precisely, for an onto morphism $\varphi\from G \to H$ and a subset $S \subseteq G$, the property $\varphi(S) = H$ is equivalent to $S\ker(\varphi) = G$. Moreover, $H$ is finite if and only if $\ker(\varphi)$ has finite index.
\end{remark}

In light of this, the statement of \cref{p:local-global} can be rephrased as follows. Let $X$ be a minimal shift space over $\cA$ and let $N$ be a normal subgroup of $F_\cA$ such that $F_\cA/N$ is subgroup separable. Then for every $u \in \lang(X)$, $\gen{\ret_X(u)}N = F_\cA$ if and only if $\gen{\ret_X(u)}N' = F_\cA$ for every normal subgroup $N' \trianglelefteq F_\cA$ of finite index such that $N \leq N'$. The proof relies on a simple lemma.
\begin{lemma}\label{l:local-global}
    Let $G$ be a group and let $H\leq G$ such that $G$ is $H$-separable. If $HN=G$ for all normal subgroups $N\trianglelefteq G$ of finite index, then $H=G$.
\end{lemma}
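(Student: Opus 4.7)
The plan is to prove the contrapositive: assuming $H \neq G$, I would produce a normal subgroup $N \trianglelefteq G$ of finite index for which $HN \neq G$, directly contradicting the hypothesis.

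So take $g \in G \setminus H$. Since $G$ is $H$-separable, the definition gives a normal subgroup $N \trianglelefteq G$ of finite index such that $gN \notin H/N$. Here $H/N$ denotes the image of $H$ in the quotient $G/N$, i.e., $HN/N$. Thus $gN \notin HN/N$, which means precisely that $g \notin HN$. Hence $HN$ is a proper subset of $G$, contradicting the assumption that $HN' = G$ for every finite-index normal $N' \trianglelefteq G$ (taking $N' = N$).

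There is essentially no obstacle here: the statement is just a direct unpacking of the definition of $H$-separability, using the standard identification of the image of $H$ in $G/N$ with $HN/N$. The lemma is really a reformulation of $H$-separability in terms of the products $HN$, and it is this reformulation that will make the subsequent proof of \cref{p:local-global} clean, since the rewritten hypothesis ``$\gen{\ret_X(u)}N' = F_\cA$ for every finite-index normal $N'$ containing $N$'' then immediately forces $\gen{\ret_X(u)}N = F_\cA$ via this lemma applied inside $F_\cA/N$.
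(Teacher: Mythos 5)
Your proof is correct and follows essentially the same route as the paper's: argue by contraposition, pick $g\in G\setminus H$, and use $H$-separability to produce a finite-index normal $N$ with $gN\notin H/N=HN/N$, whence $HN\neq G$. The only difference is that you spell out the identification $H/N=HN/N$ explicitly, which the paper leaves implicit.
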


\begin{proof}
    Let us proceed by contraposition and assume that $H$ is a proper subgroup of $G$. Let $g \in G \setminus H$. Since $G$ is $H$-separable, we may take a normal subgroup $N \trianglelefteq G$ of finite index such that $gN \not \in H/N$. Thus $H/N \neq G/N$, which means that $HN \neq G$.
\end{proof}

\begin{proof}[Proof of \cref{p:local-global}]
    For the forward implication, we assume that $\varphi\gen{\ret_X(u)} = G$, which is equivalent to $\gen{\ret_X(u)}\ker(\varphi) = F_\cA$ by \cref{R:from morphisms to normal subgroups}. Clearly for every morphism $\psi\from F_\cA\to H$ onto a group $H$ such that $\ker(\varphi) \leq \ker(\psi)$, we then have $F_\cA = \gen{\ret_X(u)}\ker(\varphi) \leq \gen{\ret_X(u)}\ker(\psi)$, thus $\psi\gen{\ret_X(u)} = H$.

    For the converse, as $G$ is subgroup separable, it is in particular $\varphi\gen{\ret_X(u)}$-separable.
    By \cref{l:local-global}, to show that $\varphi\gen{\ret_X(u)} = G$, it suffices to prove that $\varphi\gen{\ret_X(u)}N=G$ for every normal subgroup $N\trianglelefteq G$ of finite index. Let $N$ be such a subgroup.
    By the correspondence theorem, there exists a normal subgroup $N'\trianglelefteq F_\cA$ of finite index containing $\ker(\varphi)$ such that $N = \varphi(N')$.
    Let $\psi\from F_\cA\to F_\cA/N'$ denote the canonical morphism. We get $\ker(\varphi) \leq N' = \ker(\psi)$ and $F_\cA/N'$ is finite. Hence, by assumption, $\psi\gen{\ret_X(u)} = F_\cA/N'$, or in other words, $\gen{\ret_X(u)}N' = F_\cA$. Applying $\varphi$, we obtain $\varphi\gen{\ret_X(u)}N=G$. As this is true for every normal subgroup $N\trianglelefteq G$ of finite index, we conclude by \cref{l:local-global}.
\end{proof}

The next corollary highlights the special case of \cref{p:local-global} when $\varphi$ is the identity on $F_\cA$.
\begin{corollary}
    Let $X$ be a minimal shift space over $\cA$ and let $u\in\lang(X)$. The following assertions are equivalent:
    \begin{enumerate}
        \item $\gen{\ret_X(u)} = F_\cA$;
        \item $\psi\gen{\ret_X(u)} = H$ for every onto morphism $\psi\from F_\cA\to H$ where $H$ is a finite group.
    \end{enumerate}
\end{corollary}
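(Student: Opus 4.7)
The plan is to apply \cref{p:local-global} directly with $\varphi$ taken to be the identity morphism $\id_{F_\cA}\from F_\cA\to F_\cA$ and $G = F_\cA$. In other words, I will simply specialize the general local--global principle to the case of the ambient free group itself.

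First I need to verify that the hypotheses of \cref{p:local-global} are satisfied in this specialization. Minimality of $X$ is part of the assumption, and the only remaining hypothesis is that $G = F_\cA$ is subgroup separable. This is the classical theorem of M.~Hall (cited just before \cref{p:local-global}), which states that every finitely generated free group is subgroup separable. So the hypothesis is available to us at no additional cost.

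Once this is in place, translating the conclusion of \cref{p:local-global} is routine. With $\varphi = \id_{F_\cA}$, the condition $\varphi\gen{\ret_X(u)} = G$ becomes $\gen{\ret_X(u)} = F_\cA$, which is exactly assertion (i). On the other side, since $\ker(\id_{F_\cA}) = \{1\}$, the requirement $\ker(\varphi) \leq \ker(\psi)$ is automatic for every morphism $\psi\from F_\cA\to H$. Hence the second item of \cref{p:local-global} simplifies to: $\psi\gen{\ret_X(u)} = H$ for every onto morphism $\psi\from F_\cA\to H$ with $H$ finite, matching assertion (ii) verbatim.

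I do not anticipate any real obstacle here; the corollary is essentially a direct specialization of \cref{p:local-global}. All the substantive work, including the separation argument built on \cref{l:local-global}, has already been carried out, and the only nontrivial ingredient needed for the specialization is the (well-known) subgroup separability of free groups.
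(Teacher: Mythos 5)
Your proof is correct and matches the paper's intent exactly: the corollary is presented there as the special case of \cref{p:local-global} with $\varphi$ the identity on $F_\cA$, relying on Hall's theorem for subgroup separability of free groups and the triviality of $\ker(\id)$, just as you argue.
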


Next we turn to the case where $\varphi$ is the Abelianization map, that is,
\begin{equation*}
        \ab\from F_\cA\to\zz^{\card\cA} \quad u \mapsto (|u|_a)_{a \in \cA},
\end{equation*}
where $|u|_a$ is the number of occurrences of $a$ in $u$, with the inverse $a^{-1}$ counting as a negative occurrence. In this context, we can improve slightly the statement of \cref{p:local-global}. To this end, consider the \emph{$k$-Abelianization} morphisms ($k>1$), defined by
\begin{equation*}
    \ab_k\from F_\cA\to(\zz/k\zz)^{\card\cA} \quad u \mapsto (|u|_a + k\Z)_{a \in \cA}.
\end{equation*}

\begin{proposition}\label{P:local-global-abelian}
    Let $X$ be a minimal shift space over $\cA$ and let $u\in\lang(X)$. The following assertions are equivalent:
    \begin{enumerate}
        \item $\ab\gen{\ret_X(u)} = \zz^{\card\cA}$;\label{i:global-abelian}
        \item $\ab_k\gen{\ret_X(u)} = (\zz/k\zz)^{\card\cA}$ for every $k\geq 1$.\label{i:local-abelian}
    \end{enumerate}
\end{proposition}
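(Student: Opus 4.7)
The plan is to derive this statement as a direct consequence of \cref{p:local-global} applied with $\varphi = \ab$. The morphism $\ab$ is onto $\zz^{\card\cA}$, which is a finitely generated Abelian group, hence virtually polycyclic, hence subgroup separable by Malcev's theorem recalled in \cref{S:subgroup separable}. So \cref{p:local-global} is applicable.

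The forward implication \ref{i:global-abelian} $\Rightarrow$ \ref{i:local-abelian} is straightforward: for each $k \geq 1$, the canonical projection $\pi_k \from \zz^{\card\cA} \to (\zz/k\zz)^{\card\cA}$ is onto and satisfies $\ab_k = \pi_k \circ \ab$, so applying $\pi_k$ to $\ab\gen{\ret_X(u)} = \zz^{\card\cA}$ yields $\ab_k\gen{\ret_X(u)} = (\zz/k\zz)^{\card\cA}$.

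For the converse, by \cref{p:local-global} with $\varphi = \ab$, it suffices to prove that for every onto morphism $\psi \from F_\cA \to H$ with $H$ finite and $\ker(\ab) \leq \ker(\psi)$, we have $\psi\gen{\ret_X(u)} = H$. By the first isomorphism theorem, such a $\psi$ factors as $\psi = \bar\psi \circ \ab$ for some onto morphism $\bar\psi \from \zz^{\card\cA} \to H$; in particular $H$ is a finite Abelian group. The key step is then to further factor $\bar\psi$ through some $\ab_k$: taking $k$ to be the exponent of $H$, we have $k \zz^{\card\cA} \leq \ker(\bar\psi)$, so $\bar\psi = \tilde\psi \circ \pi_k$ for some onto morphism $\tilde\psi \from (\zz/k\zz)^{\card\cA} \to H$. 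The hypothesis \ref{i:local-abelian} and the surjectivity of $\tilde\psi$ then give $\psi\gen{\ret_X(u)} = \tilde\psi \ab_k \gen{\ret_X(u)} = \tilde\psi ((\zz/k\zz)^{\card\cA}) = H$, which concludes the proof.

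The only mild subtlety is the bookkeeping around the factorization $\psi = \tilde\psi \circ \ab_k$, which relies on the elementary fact that every finite Abelian group has finite exponent. No new combinatorial input beyond \cref{p:local-global} is required.
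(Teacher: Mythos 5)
Your proof is correct and follows essentially the same route as the paper: both reduce, via \cref{p:local-global}, to showing that every finite quotient $\psi$ with $\ker(\ab)\leq\ker(\psi)$ factors through some $\ab_k$. The only cosmetic difference is that you push $\psi$ down to the Abelian quotient $\zz^{\card\cA}$ and use the exponent of $H$, whereas the paper stays in $F_\cA$, takes $k$ to be the lcm of the orders of the $\psi(a)$, and invokes the explicit description of $\ker(\ab_k)$ as the normal subgroup generated by commutators and $k$-th powers (\cref{l:kernel-ab-k}); your bookkeeping sidesteps that lemma but the underlying argument is identical.
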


The proof relies on the following elementary fact, of which we include a proof.
\begin{lemma}\label{l:kernel-ab-k}
    For any $k>1$, the kernel $\ker(\ab_k)$ is the smallest normal subgroup of $F_\cA$ containing all commutators and $k$-th powers.
\end{lemma}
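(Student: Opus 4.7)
The plan is to prove the two inclusions separately. Let $N$ denote the smallest normal subgroup of $F_\cA$ containing all commutators $[x,y]$ and all $k$-th powers $x^k$.

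For the inclusion $N \leq \ker(\ab_k)$, I would observe that the target group $(\Z/k\Z)^{\card\cA}$ is abelian and of exponent $k$, so every commutator and every $k$-th power is automatically mapped to the identity under $\ab_k$. Since $\ker(\ab_k)$ is a normal subgroup of $F_\cA$ containing all these elements, it must contain their normal closure $N$.

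The reverse inclusion $\ker(\ab_k) \leq N$ is the main content of the lemma. The idea is to examine the quotient $F_\cA/N$: by construction it is abelian (the commutators die) and of exponent dividing $k$ (the $k$-th powers die), so it is a $\Z/k\Z$-module generated by the images of the letters of $\cA$. This gives a surjective homomorphism $\pi \from (\Z/k\Z)^{\card\cA} \to F_\cA/N$ sending the standard basis to these generators. On the other hand, since $N \leq \ker(\ab_k)$ by the previous step, $\ab_k$ factors through the canonical projection $F_\cA \to F_\cA/N$ as a surjective homomorphism $\overline{\ab_k} \from F_\cA/N \to (\Z/k\Z)^{\card\cA}$. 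The composition $\overline{\ab_k} \circ \pi$ fixes the standard basis of $(\Z/k\Z)^{\card\cA}$ and is therefore the identity. Consequently both $\pi$ and $\overline{\ab_k}$ are isomorphisms, so the kernel of the quotient map $F_\cA \to F_\cA/N$, which is $N$, coincides with $\ker(\ab_k)$.

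This argument is a standard presentation-theoretic computation and I do not anticipate a genuine obstacle. The only point requiring some care is verifying that $\overline{\ab_k} \circ \pi$ is indeed the identity on generators, which is immediate from unwinding the definitions of $\pi$ and $\ab_k$.
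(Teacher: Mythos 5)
Your proof is correct, but it takes a different route from the paper's for the substantive inclusion $\ker(\ab_k)\leq N$. The paper argues elementwise: every $g\in F_\cA$ can be written as $a_1^{\delta_1}\cdots a_d^{\delta_d}c$ with $c$ a product of commutators (normal form modulo the commutator subgroup), and if $g\in\ker(\ab_k)$ then each $\delta_i$ is a multiple of $k$, so $g$ is explicitly exhibited as a product of $k$-th powers and commutators, hence lies in $N$. You instead run the universal-property argument: $F_\cA/N$ is abelian of exponent dividing $k$ and generated by $\card\cA$ elements, so it receives a surjection $\pi$ from $(\Z/k\Z)^{\card\cA}$, and since $\overline{\ab_k}\circ\pi$ is the identity on generators, both maps are isomorphisms and $N=\ker(\ab_k)$. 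Both arguments are sound. Yours is the standard presentation-theoretic one and makes transparent that $F_\cA/\ker(\ab_k)$ is the free object in the variety of abelian groups of exponent $k$; the paper's is more self-contained and has the small advantage of producing, for each $g\in\ker(\ab_k)$, an explicit factorization as a product of $k$-th powers and commutators rather than deducing membership in $N$ abstractly. The only point worth making explicit in your version is that $\pi$ is well defined because $(\Z/k\Z)^{\card\cA}$ is free on its standard basis within the class of abelian groups of exponent dividing $k$, which you essentially note.
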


\begin{proof}
    The fact that $\ker(\ab_k)$ is a normal subgroup containing commutators and $k$-th powers is an easy consequence of the definition. The fact that it is the \emph{smallest} such normal subgroup follows from the observation that every element of $g\in F_\cA$ can be written in the form
    \begin{equation*}
        g = a_1^{\delta_1}a_2^{\delta_2}\cdots a_d^{\delta_d}c,
    \end{equation*}
    where $c$ is a product of commutators, $\delta_i\in\zz$ and $\cA = \{a_1,\dots,a_d\}$. It is then clear that, when $g\in\ker(\ab_k)$, all $\delta_i$ must be multiples of $k$, and thus $g$ is a product of $k$-th powers and commutators.
\end{proof}

\begin{proof}[Proof of \cref{P:local-global-abelian}]
    The fact that \ref{i:global-abelian} implies \ref{i:local-abelian} is an immediate consequence of \cref{p:local-global}. 
    
    For the converse, in light of \cref{p:local-global}, we need to show that, for any onto morphism $\psi\from F_\cA\to H$ such that $H$ is finite and $\ker(\ab)\leq\ker(\psi)$, we have $\psi\gen{\ret_X(u)} = H$. It is equivalent to show that $\gen{\ret_X(u)}\ker(\psi) = F_\cA$ by \cref{R:from morphisms to normal subgroups}, or, as we assume that \ref{i:local-abelian} holds, it suffices to prove that $\ker(\psi)$ contains $\ker(\ab_k)$ for some $k\geq 1$.
    
    Let us fix such a morphism $\psi$. As $\ker(\ab)\leq\ker(\psi)$, the latter contains all commutators (and in particular, $H$ is Abelian). Let us show that it contains all $k$-th powers for an appropriate choice of $k \geq 1$. For each $a\in \cA$, let $k_a$ be the order of $\psi(a)$ in $H$, and let $k=\lcm\{k_a \mid a\in \cA\}$. Take $g = a_1^{\delta_1}\cdots a_d^{\delta_d}c \in F_\cA$, where $\delta_i\in\zz$ and $c$ is a product of commutators. Since $H$ is Abelian, we find (in additive notation for $H$) that
    \begin{equation*}
        \psi(g^k) = \sum_{i=1}^d\delta_ik\psi(a_i) = 0,
    \end{equation*} 
    since $k$ is a common multiple of the orders of the $\psi(a_i)$. This shows that $\ker(\psi)$ contains all $k$-th powers in $F_\cA$, thus it contains $\ker(\ab_k)$ by \cref{l:kernel-ab-k}.
\end{proof}

\section{Applications to the welldoc property}
\label{s:welldoc}

It turns out that Abelian stability already appeared in the literature in a disguised form. Namely, in a paper from 2016~\cite{Balkova2016}, Balková et al. studied the \emph{well distributed occurrence} property --- or welldoc for short --- in relation with pseudorandom number generation. The main purpose of this section is to show precisely how welldoc relates to Abelian stability. This also allows us to obtain some sufficient conditions for the welldoc property which generalize results of Balková et al.

The original definition of welldoc is framed within the setting of \emph{one-sided} infinite words, i.e., elements of $\cA^\nn$. It goes as follows: given a one-sided infinite word $y\in \cA^\nn$ and $w \in \lang(y)$, consider the set
\begin{equation*}
    P_y(w)  = \{y_{[0,m)} \mid m \in \N, y_{[m,m+|w|)} = w\}.
\end{equation*}  
Then, in Balková et al.'s terminology, $y \in \cA^\N$ is said to have the welldoc property when, for all $k\geq 1$ and all words $w\in\lang(y)$, 
\begin{equation*}
    \ab_k(P_y(w)) = (\Z/k\Z)^{\card\cA}.
\end{equation*}

Let us extend the notion of welldoc to the setting of general group morphisms and bi-infinite words and shift spaces. To do so, it is useful to set $x_{[i,j)} = x_{[j,i)}^{-1}$ when $j<i$, where the inverse is taken in the free group $F_\cA$. Then, for $x\in \cA^\Z$ and $w\in\lang(x)$, we define the following subset of $F_\cA$:
\begin{equation*}
    P_x(w) = \{ x_{[0,m)} \mid m \in \zz, x_{[m,m+|w|)} = w\}.
\end{equation*}
Note that if $y = x_{[0,\infty)}$, then $P_y(w) = P_x(w)\cap \cA^*$.

\begin{definition}
    Let $\varphi\from F_\cA\to G$ be a group morphism. A word $x\in\cA^\zz \cup \cA^\N$ has \emph{$\varphi$-welldoc} if, for all $w\in\lang(x)$,
    \begin{equation*}
        \varphi(P_x(w)) = \varphi(F_\cA).
    \end{equation*}
    By extension, we say that a shift space $X$ has \emph{$\varphi$-welldoc} when all its elements have it.
\end{definition}

In this terminology, Balková et al.'s definition of welldoc for $y \in \cA^\N$ is equivalent to the conjunction of the $\ab_k$-welldoc properties over all integers $k\geq 1$. 

In the case of finite groups, this two-sided version is coherent with the original one-sided definition. In fact, welldoc is a property of the language rather than the point, as detailed in the following lemma.
In particular, a minimal shift space has $\varphi$-welldoc if and only if \emph{any} of its elements does.

\begin{lemma}\label{L:welldoc one and two sided}
    Let $\varphi\from F_\cA \to G$ be a morphism onto a finite group $G$. Let $x,y\in\cA^\zz\cup\cA^\nn$ be such that $\lang(x)=\lang(y)$. If $x$ has $\varphi$-welldoc, then so does $y$.
    More precisely, for any $w \in \lang(x)$, if $\varphi(P_x(w)) = G$, then $\varphi(P_y(w)) = G$.
\end{lemma}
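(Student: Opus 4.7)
The plan is to prove the more precise statement: for any $w\in\lang(x)$, if $\varphi(P_x(w))=G$, then $\varphi(P_y(w))=G$. The key idea is to isolate a single finite factor of $x$ that already witnesses every element of $G$ through occurrences of $w$, and then to transplant this factor into $y$ using the language equality; this is where the finiteness of $G$ is used essentially.

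By the welldoc hypothesis for $x$, for each $h\in G$ I fix an occurrence $m_h$ of $w$ in $x$ with $\varphi(x_{[0,m_h)})=h$, so that only finitely many such witnesses are chosen. Setting $M_-=\min_{h\in G} m_h$ and $M_+=\max_{h\in G} m_h+|w|$, I consider the finite factor $u_0:=x_{[M_-,M_+)}\in\lang(x)=\lang(y)$. By construction, $w$ occurs in $u_0$ at each position $m_h-M_-$ for $h\in G$. Since $u_0\in\lang(y)$, it occurs in $y$, say $u_0 = y_{[b,b+|u_0|)}$ for some integer $b$, so $n_h:=b+(m_h-M_-)$ is an occurrence of $w$ in $y$ for every $h\in G$.

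The computation then finishes the proof. Using the cocycle identity $y_{[0,n_h)} = y_{[0,b)}\cdot y_{[b,n_h)}$ in $F_\cA$, and the fact that $y_{[b,n_h)}$ is the length-$(m_h-M_-)$ prefix of $u_0$, hence equals $x_{[M_-,m_h)}$, I obtain
\[
    \varphi(y_{[0,n_h)}) = \varphi(y_{[0,b)})\cdot\varphi(x_{[M_-,m_h)}) = \varphi(y_{[0,b)})\cdot\varphi(x_{[0,M_-)})^{-1}\cdot h.
\]
The prefactor $c:=\varphi(y_{[0,b)})\cdot\varphi(x_{[0,M_-)})^{-1}$ does not depend on $h$, so as $h$ ranges over $G$ the values $\varphi(y_{[0,n_h)})=ch$ sweep out the whole coset $cG = G$. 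This gives $\varphi(P_y(w))\supseteq G$, and the reverse inclusion is trivial.

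The only point that requires some care is the bookkeeping in the two-sided case, where several of the indices $m_h$, $M_-$, $b$ or $n_h$ may be negative; the paper's convention $z_{[i,j)}=z_{[j,i)}^{-1}$ for $j<i$ keeps the cocycle identity valid in $F_\cA$ for arbitrary integers, so the argument above applies uniformly in the one-sided and two-sided settings.
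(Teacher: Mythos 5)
Your proof is correct and follows essentially the same route as the paper's: use finiteness of $G$ to select finitely many occurrences of $w$ witnessing all of $G$, enclose them in a single finite factor, transplant that factor into $y$ via the language equality, and observe that the resulting images in $G$ differ from the original ones by a fixed left factor, hence still exhaust $G$. Your version is just slightly more explicit about the correction factor $\varphi(y_{[0,b)})\varphi(x_{[0,M_-)})^{-1}$ and the index conventions.
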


\begin{proof}
    Let $w \in \lang(x)$. Since $G$ is finite, there exists $M$ such that the elements of $P_x(w)$ of length at most $M$ suffice to obtain $G$. Letting $M'=-M$ if $x\in\cA^\zz$ and $M'=0$ if $x\in\cA^\nn$, this means that for every $g\in G$, there exists $i$ in $[M',M)$ such that $\varphi(x_{[0,i)})=g$ and $x_{[i,i+|w|)}=w$. Since $\lang(y) = \lang(x)$, there exists an index $k$ such that $x_{[M',M+|w|)} = y_{[k+M',k+M+|w|)}$. Therefore, $G \subseteq \varphi(y_{[0,k)})^{-1} \varphi(P_y(w))$, which implies that $\varphi(P_y(w)) = G$.
\end{proof}

Observe that if $w = x_{[0,n)}$, then the positive (resp., negative) elements of $P_x(w)$ are concatenations of return words (resp., inverses of return words) to $w$ in $x$. More generally, we have the inclusion
\begin{equation}\label{eq:inclusion-welldoc}
    P_x(w) \subseteq x_{[0,i)}^{-1}\gen{\ret_X(w)},
\end{equation}
whenever $x_{[i,i+|w|)}=w$.
This hints at a relationship between welldoc and stability. The following result, which was proved in \cite{pre/Berthe2024} using tools from the theory of bifix codes, further confirms this.

\begin{proposition}[{\cite[Theorem 6.1]{pre/Berthe2024}}]\label{p:welldoc berthe et al}
    Let $X$ be a minimal shift space and $\varphi\from F_\cA\to G$ be a morphism onto a finite group $G$. Then $X$ is $\varphi$-stable if and only if, for every $x\in X$ and $n\geq 0$,
    \begin{equation*}
        \varphi(P_{x_{[0,\infty)}}(x_{[0,n)})) = G.
    \end{equation*}
\end{proposition}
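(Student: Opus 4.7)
The backward direction is straightforward. Suppose the welldoc identity holds for every $x \in X$ and $n \geq 0$, and fix $u \in \lang(X)$. By minimality of $X$, pick any $x \in X$ having $u$ as a prefix. Every element of $P_{x_{[0,\infty)}}(u)$ is a positive concatenation of return words to $u$ read off $x_{[0,\infty)}$, hence lies in $\gen{\ret_X(u)}$. Applying $\varphi$ and invoking the hypothesis at $u$ yields $G = \varphi(P_{x_{[0,\infty)}}(u)) \subseteq \varphi\gen{\ret_X(u)}$, so $\varphi\gen{\ret_X(u)} = G$, establishing $\varphi$-stability.

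For the forward direction, fix $x \in X$ and $n \geq 0$, set $w = x_{[0,n)}$, and let $\Pi = \varphi(P_{x_{[0,\infty)}}(w))$; I must show $\Pi = G$. Enumerating the occurrences of $w$ in $x_{[0,\infty)}$ as $0 = m_0 < m_1 < \cdots$ and writing $g_i = \varphi(x_{[0,m_i)})$, $r_i = x_{[m_{i-1},m_i)} \in \ret_X(w)$, one has $\Pi = \{g_i : i \geq 0\}$ with $g_{i+1} = g_i \cdot \varphi(r_{i+1})$. The inclusion $\Pi \subseteq \varphi\gen{\ret_X(w)} = G$ is immediate from $\varphi$-stability; the content lies in the reverse inclusion.

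My plan is to encode the question dynamically via the skew product $T\from X \times G \to X \times G$ defined by $T(x',g) = (S(x'), g \cdot \varphi(x'_0))$, where $x'_0$ denotes the zeroth letter of $x'$. Iterating yields $T^k(x,1) = (S^k x, \varphi(x_{[0,k)}))$, and $k$ is an occurrence of $w$ in $x$ precisely when $S^k x$ lies in the cylinder $[w] := \{y \in X : y_{[0,n)} = w\}$. Consequently, $\Pi = G$ would follow from the $T$-orbit of $(x,1)$ meeting $[w] \times \{g\}$ for every $g \in G$, which is an immediate consequence of minimality of $(X \times G, T)$. The forward direction thus reduces to showing that $\varphi$-stability implies minimality of the skew product.

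To prove this, I would let $Z \subseteq X \times G$ be any minimal closed $T$-invariant subset and argue $Z = X \times G$. Minimality of $X$ ensures the projection $Z \to X$ is onto, so the fibers $Z_{x'} := \{g \in G : (x',g) \in Z\}$ are non-empty finite subsets satisfying $Z_{S(x')} = Z_{x'} \cdot \varphi(x'_0)$. A standard argument, combining upper semi-continuity of $x' \mapsto Z_{x'}$, the discreteness of $G$, and minimality of $X$, shows that fiber sizes are constant and the fibers are (unions of) right cosets of a common subgroup $H \leq G$ up to global conjugation. The decisive step, and the main obstacle I anticipate, is showing that $H = G$. The plan is: for any $u \in \lang(X)$ and $\rho \in \ret_X(u)$, applying $T$-invariance along a point of $[u] \cap X$ starting with $\rho u$ and using recurrence of the induced system $D_u(X)$ to return arbitrarily close to that point forces $\varphi(\rho) \in H$, whence $\varphi\gen{\ret_X(u)} \leq H$ up to conjugacy. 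Then $\varphi$-stability, which asserts $\varphi\gen{\ret_X(u)} = G$, forces $H = G$, hence $Z = X \times G$. With skew product minimality established, the welldoc identity for $(x,n)$ follows.
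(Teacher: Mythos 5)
A preliminary remark: the paper offers no proof of this statement --- it is imported from the cited reference, where it is obtained via the theory of bifix codes --- so your skew-product argument is a genuinely different route. Your backward direction is complete: since $u$ is a prefix of $x_{[0,\infty)}$, every element of $P_{x_{[0,\infty)}}(u)$ is a product of return words to $u$, so $G=\varphi(P_{x_{[0,\infty)}}(u))\subseteq\varphi\gen{\ret_X(u)}$. The reduction of the forward direction to minimality of $(X\times G,T)$ is also correct, as is the structure theory of a minimal subsystem $Z$: surjective projection onto $X$, constant fiber cardinality (upper semicontinuity plus minimality of $X$), hence a locally constant fiber map $x'\mapsto Z_{x'}$, with fibers equal to right cosets of the subgroup $H=\{h\in G\mid hZ=Z\}$.

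The gap is in the step you flagged. Taking $x'\in[u]$ beginning with $\rho u$ gives $Z_{S^{|\rho|}x'}=Z_{x'}\varphi(\rho)$, but to conclude $\varphi(\rho)\in c_{x'}^{-1}Hc_{x'}$ you would need $Z_{S^{|\rho|}x'}=Z_{x'}$, and recurrence does not provide this: the induced system returns near $x'$ only after many visits to $[u]$, so local constancy only yields $Z_{x'}=Z_{x'}\varphi(\rho_1\cdots\rho_k)$ for certain \emph{products} of return words read along the orbit, not for an individual return word $\rho$. (That partial products over a generating set need not exhaust a finite group is precisely the phenomenon at stake in this proposition.) The repair is to choose $u$ adapted to the local constancy radius: pick $N\geq 1$ such that $Z_{x'}$ depends only on the central window $v=x'_{[-N,N)}$, write $A_v$ for the common fiber over that window and $p=v_{[0,N)}$. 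For each $r\in\ret_X(v)$ choose $x'$ with $x'_{[-N,-N+|rv|)}=rv$; then $x'$ and $S^{|r|}x'$ share the central window $v$, so both fibers equal $A_v$, while $x'_{[0,|r|)}=p^{-1}rp$ in $F_\cA$, whence $A_v=A_v\,\varphi(p)^{-1}\varphi(r)\varphi(p)$. Since this holds for every $r\in\ret_X(v)$ and $\varphi(p)^{-1}\varphi\gen{\ret_X(v)}\varphi(p)=G$ by $\varphi$-stability, $A_v$ is invariant under right multiplication by all of $G$, forcing $A_v=G$ and $Z=X\times G$. This closes the gap (and in fact renders the coset analysis of $H$ unnecessary); with it, your argument goes through.
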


Combined with our own results, it can be strengthened as follows.
\begin{theorem}\label{t:welldoc}
    Let $X$ be a minimal shift space over $\cA$ and let $\varphi\from F_\cA\to G$ be a morphism onto a subgroup separable group $G$. The following assertions are equivalent:
    \begin{enumerate}
        \item $X$ has $\varphi$-stable return groups;
        \item $X$ has $\psi$-welldoc for every onto morphism $\psi\from F_\cA\to H$ where $H$ is finite and $\ker(\varphi)\subseteq\ker(\psi)$.
    \end{enumerate}
\end{theorem}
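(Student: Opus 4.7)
The strategy is to chain together three previously established ingredients: Proposition~\ref{p:welldoc berthe et al}, which gives the equivalence between stability and welldoc in the finite-image case; the local-global principle of Proposition~\ref{p:local-global}, which translates $\varphi$-stability into $\psi$-stability for every admissible finite-image $\psi$; and Lemma~\ref{L:welldoc one and two sided}, which transfers welldoc between sequences sharing the same language. Both implications will be obtained by composing these results in opposite orders; the only genuine bookkeeping is bridging the one-sided prefix formulation of Proposition~\ref{p:welldoc berthe et al} with the general two-sided welldoc defined in this paper.

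For the implication (1)$\Rightarrow$(2), I would fix $\psi\from F_\cA\to H$ as in (2). By Proposition~\ref{p:local-global}, $\varphi$-stability forces $\psi\gen{\ret_X(u)} = H$ for every $u \in \lang(X)$, giving $\psi$-stability, and Proposition~\ref{p:welldoc berthe et al} then yields $\psi(P_{y_{[0,\infty)}}(y_{[0,n)})) = H$ for every $y\in X$ and $n\geq 0$. To upgrade to arbitrary factors, fix $x \in X$ and $w \in \lang(x)$, choose an occurrence $x_{[i,i+|w|)} = w$, and set $y = S^i x \in X$. Since $w$ is a prefix of $y$, and since minimality ensures $\lang(y_{[0,\infty)}) = \lang(y) = \lang(X)$, Lemma~\ref{L:welldoc one and two sided} delivers $\psi(P_y(w)) = H$. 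A direct computation in $F_\cA$ (using the convention $x_{[i,j)} = x_{[j,i)}^{-1}$ when $j<i$) then gives the identity $P_x(w) = x_{[0,i)}\, P_y(w)$, so left translation by $\psi(x_{[0,i)})$ yields $\psi(P_x(w)) = H$.

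For the converse (2)$\Rightarrow$(1), by Proposition~\ref{p:local-global} it suffices to establish $\psi$-stability of $X$ for every admissible $\psi$. Fixing such a $\psi$, assumption (2) gives $\psi$-welldoc for every $x \in X$. Minimality ensures $\lang(x_{[0,\infty)}) = \lang(x)$, so Lemma~\ref{L:welldoc one and two sided} transfers this to the one-sided truncation; in particular $\psi(P_{x_{[0,\infty)}}(x_{[0,n)})) = H$ for every $n \geq 0$, and Proposition~\ref{p:welldoc berthe et al} delivers $\psi$-stability, as required.

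The main subtlety I anticipate is the conversion between the one-sided prefix formulation used by Proposition~\ref{p:welldoc berthe et al} and the general two-sided welldoc property of this paper. Lemma~\ref{L:welldoc one and two sided}, combined with the shift-invariance argument above, is precisely designed to handle this, but care is required in tracking negative indices and inverses in $F_\cA$ when justifying the identity $P_x(w) = x_{[0,i)}\, P_{S^i x}(w)$. Once that identity is in place, both implications reduce to straightforward applications of the cited results.
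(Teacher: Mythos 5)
Your proof is correct and follows essentially the same route as the paper: reduce to finite quotients via \cref{p:local-global}, then combine \cref{p:welldoc berthe et al} with \cref{L:welldoc one and two sided} to pass between the one-sided prefix formulation and general two-sided welldoc. The only cosmetic differences are that the paper handles the welldoc-implies-stability direction directly from the inclusion \eqref{eq:inclusion-welldoc} rather than the converse of \cref{p:welldoc berthe et al}, and your translation identity $P_x(w) = x_{[0,i)}P_{S^ix}(w)$ is a correct but avoidable detour, since \cref{L:welldoc one and two sided} already transfers $\psi(P_{y_{[0,\infty)}}(w))=H$ to $\psi(P_x(w))=H$ for any $x$ with the same language.
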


\begin{proof}
    Thanks to the local-global principle (\cref{p:local-global}), it is sufficient to show that, for any morphism $\varphi\from F_\cA \to G$ onto a finite group $G$, $X$ has $\varphi$-stable return groups if and only if it has $\varphi$-welldoc.
    The fact that $\varphi$-welldoc implies $\varphi$-stable return groups follows from \eqref{eq:inclusion-welldoc}.    
    
    Next assume that $X$ has $\varphi$-stable return groups. Fix $w\in\lang(X)$.
    By minimality of $X$, there exists $z \in X$ such that $w = z_{[0,|w|)}$ and $\lang(X) = \lang(z_{[0,\infty)})$. Using \cref{p:welldoc berthe et al}, we have $\varphi(P_{z_{[0,\infty)}}(w)) = G$. 
    Therefore, by \cref{L:welldoc one and two sided}, for any $x \in X$, $\varphi(P_x(w)) = G$. As it is true for any $w \in \lang(X)$ and any $x \in X$, $X$ has $\varphi$-welldoc.
\end{proof}

As in the previous section, we may sharpen the result when $\varphi$ is the Abelianization map by using \cref{P:local-global-abelian} instead of the general the local-global principle (\cref{p:local-global}).
This leads to the following simple characterization of the original welldoc property of Balková et al.
\begin{corollary}\label{t:abelian-welldoc}
    Let $X$ be a minimal shift space over $\cA$. The following assertions are equivalent:
    \begin{enumerate}
        \item $X$ has $\ab$-stable return groups;
        \item $X$ has $\ab_k$-welldoc for all integers $k\geq 1$.
    \end{enumerate}
\end{corollary}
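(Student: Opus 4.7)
The plan is to deduce the corollary from Theorem~\ref{t:welldoc} applied to $\varphi = \ab$, combined with the factorization argument that appears in the proof of Proposition~\ref{P:local-global-abelian}. To invoke Theorem~\ref{t:welldoc}, we first note that the target $\zz^{\card\cA}$ is a finitely generated Abelian group, hence subgroup separable (as recalled in the discussion preceding Theorem~\ref{thm:main}). Theorem~\ref{t:welldoc} then reduces the claim to showing that $X$ has $\ab_k$-welldoc for all $k\geq 1$ if and only if $X$ has $\psi$-welldoc for every onto morphism $\psi\from F_\cA\to H$ with $H$ finite and $\ker(\ab)\leq\ker(\psi)$.

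One direction is immediate: for every $k\geq 1$, the morphism $\ab_k\from F_\cA\to (\zz/k\zz)^{\card\cA}$ is onto a finite group and visibly satisfies $\ker(\ab)\leq\ker(\ab_k)$, so $\psi$-welldoc applied to $\psi=\ab_k$ gives $\ab_k$-welldoc.

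For the converse, given an onto morphism $\psi\from F_\cA\to H$ with $H$ finite and $\ker(\ab)\leq\ker(\psi)$, I would repeat the key step of the proof of Proposition~\ref{P:local-global-abelian}: $H$ is Abelian (as $\ker(\psi)$ contains all commutators), so letting $k_a$ be the order of $\psi(a)$ in $H$ for each $a\in\cA$ and setting $k=\lcm\{k_a\mid a\in\cA\}$, Lemma~\ref{l:kernel-ab-k} yields $\ker(\ab_k)\leq\ker(\psi)$. Consequently, $\psi$ factors through $\ab_k$, i.e., there exists an onto group morphism $\eta\from (\zz/k\zz)^{\card\cA}\to H$ with $\psi=\eta\ab_k$. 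If $X$ has $\ab_k$-welldoc, applying $\eta$ to the equality $\ab_k(P_x(w))=(\zz/k\zz)^{\card\cA}$ gives $\psi(P_x(w))=H$ for all $x\in X$ and $w\in\lang(x)$, so $X$ has $\psi$-welldoc.

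No step here poses a real difficulty: the heavy lifting has already been done in Theorem~\ref{t:welldoc} and in the factorization argument of Proposition~\ref{P:local-global-abelian}. The only thing to be careful about is to quote subgroup separability of $\zz^{\card\cA}$ when invoking Theorem~\ref{t:welldoc}, and to check that $\eta$ is onto (which follows from surjectivity of $\psi$).
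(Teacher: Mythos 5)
Your proof is correct and follows essentially the same route as the paper: the paper obtains the corollary by substituting \cref{P:local-global-abelian} for the general local-global principle inside the proof of \cref{t:welldoc}, whereas you apply \cref{t:welldoc} as a black box and then perform the identical factorization argument (via \cref{l:kernel-ab-k} and the $\lcm$ of the orders $k_a$) on the welldoc side rather than the stability side. The key ingredients and the mathematical content coincide, and your added checks (subgroup separability of $\zz^{\card\cA}$, surjectivity of $\eta$) are exactly the right ones.
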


Analogous results were obtained recently by Espinoza~\cite{pre/Espinoza} and Savalev and Puzynina~\cite{th/Savelev2024}. Moreover, another recent result by Espinoza (unpublished) states that every proper unimodular shift space has $\ab$-stable return groups. Note that this last result also implies that the shift space of \cref{eg:non-stable} is $\ab$-stable, even though it is not eventually stable.

Another interesting application of \cref{t:welldoc} is the following corollary obtained when $\varphi$ is the identity on $F_\cA$.
\begin{corollary}
    A minimal shift space has stable return groups if and only if it has $\varphi$-welldoc for every morphism $\varphi$ onto a finite group.
\end{corollary}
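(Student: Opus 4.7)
The plan is to derive this corollary as an immediate specialization of Theorem \ref{t:welldoc}, choosing the identity morphism $\mathrm{id}_{F_\cA}$ in place of $\varphi$. For this choice to be admissible, one needs $F_\cA$ to be subgroup separable, which is exactly M.~Hall's theorem recalled in Section \ref{S:subgroup separable}. Thus Theorem \ref{t:welldoc} applies with $\varphi = \mathrm{id}_{F_\cA}$ and $G = F_\cA$.

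Under this substitution, assertion~(1) of Theorem \ref{t:welldoc} reads ``$X$ has $\mathrm{id}$-stable return groups'', which, by the convention fixed just after Proposition \ref{P:free group universal}, is simply ``$X$ has stable return groups''. Assertion~(2) quantifies over onto morphisms $\psi\colon F_\cA \to H$ with $H$ finite and $\ker(\mathrm{id}) \subseteq \ker(\psi)$; since $\ker(\mathrm{id})$ is trivial, this kernel inclusion is automatic, so what remains is precisely the $\psi$-welldoc property for every onto morphism $\psi$ from $F_\cA$ to a finite group.

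The only bookkeeping step is to match the exact wording of the corollary, which refers to every morphism $\varphi$ onto a finite group rather than specifically to onto morphisms defined on $F_\cA$. Since $\varphi$-welldoc is defined by the condition $\varphi(P_x(w)) = \varphi(F_\cA)$, the property depends only on the image of $\varphi$, so an arbitrary morphism onto a finite group and its restriction-to-image version yield the same welldoc condition. No genuine obstacle is expected: the argument is a one-step specialization of Theorem \ref{t:welldoc} combined with M.~Hall's theorem.
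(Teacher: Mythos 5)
Your proposal is correct and matches the paper's argument exactly: the corollary is obtained by specializing \cref{t:welldoc} to $\varphi = \mathrm{id}$ on $F_\cA$, using M.~Hall's theorem for subgroup separability of free groups and noting that the kernel condition becomes vacuous.
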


In particular, we deduce the following, which generalizes in multiple ways a theorem by Balková et al. stating that Arnoux--Rauzy words have the welldoc property~\cite{Balkova2016}. 
\begin{corollary}\label{c:welldoc}
    Every minimal suffix-connected shift space, and in particular, every minimal dendric shift space, has $\varphi$-welldoc for every morphism $\varphi$ onto a finite group.
\end{corollary}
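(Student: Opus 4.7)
The plan is to chain together two results already established earlier in the paper. First, by \cref{P:suffix-connected}, every minimal suffix-connected shift space has stable return groups. Second, the corollary immediately preceding the statement (the specialization of \cref{t:welldoc} to $\varphi = \id$ on $F_\cA$) asserts that a minimal shift space has stable return groups if and only if it has $\psi$-welldoc for every morphism $\psi$ onto a finite group. Composing these two statements yields the conclusion for minimal suffix-connected shift spaces.

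For the claim about dendric shift spaces, I would simply invoke the observation already made in \cref{S:examples}, namely that every (eventually) dendric shift space is (eventually) suffix-connected and the threshold for suffix-connectedness does not exceed the threshold for dendricity. In particular, a minimal dendric shift space is suffix-connected, so the previous paragraph applies.

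There is essentially no obstacle, since all the work has been done upstream: the local-global principle (\cref{p:local-global}, specialized in \cref{P:local-global-abelian}) and the welldoc characterization (\cref{t:welldoc}) together reduce the welldoc property to stability in finite quotients, and \cref{P:free group universal} transfers stability in $F_\cA$ to stability under arbitrary $\varphi$. The only minor point worth flagging is that we apply \cref{t:welldoc} with $\varphi$ equal to the identity (hence $G = F_\cA$ which is subgroup separable by M.~Hall's theorem recalled in \cref{S:subgroup separable}), so that the finite-quotient morphisms $\psi$ range over \emph{all} morphisms onto finite groups with no kernel condition beyond the trivial one.
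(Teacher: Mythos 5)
Your proof is correct and is exactly the intended derivation: the paper gives no explicit proof because the statement follows immediately from \cref{P:suffix-connected} together with the corollary specializing \cref{t:welldoc} to $\varphi=\id$, with the dendric case reduced to the suffix-connected one as you note. Your remark that the kernel condition becomes vacuous for $\varphi=\id$ is the right (and only) point requiring care.
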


\section{Open questions}

Since this work is a first exploration of eventual stability of return groups, many questions remain unanswered. We would like to emphasize two that we believe are of particular interest.

\begin{enumerate}
    \item We recently showed that return words can be used to characterize dendricity, a purely combinatorial property~\cite{pre/Gheeraert2024}. Namely, a minimal shift space $X$ over $\cA$ is dendric if and only if $X$ has stable return groups and, for every $u \in \lang(X)$, $\card\ret_X(u) = \card\cA$. In light of this result and of the examples given in \cref{S:examples}, it is of interest to investigate the interactions between the algebraic properties of return words and some combinatorial aspects of shift spaces. For instance, can (eventual) $\varphi$-stability be used to characterize $k$-automatic shift spaces, or eventually dendric shift spaces? On the other hand, can (eventual) stability be understood from a combinatorial perspective?

    \item In \cref{S:closure properties}, we obtained results on closure properties of (eventual) stability with respect to derivation and substitutions. As for topological conjugacy, we know that stability is not preserved. Is eventual stability preserved by topological conjugacy? In other words, is it a dynamical property? Note that, by \cref{c:closed under substitution}, it is equivalent to ask whether it is closed under topological factorization.
\end{enumerate}

\bibliographystyle{abbrvnat}
\bibliography{biblio_propre}

\end{document}